\newcommand*{\pr}{\mathbb{P}}
\newcommand*{\expect}{\mathbb{E}}
\newtheorem{theorem}{Theorem}
\newtheorem{remark}[theorem]{Remark}
\newtheorem{lemma}[theorem]{Lemma}
\newtheorem{assumption}[theorem]{Assumption}
\newtheorem{definition}[theorem]{Definition}
\newtheorem{corollary}[theorem]{Corollary}
\newcommand\restr[2]{\ensuremath{\left.#1\right|_{#2}}}
\newcommand\supp{\textrm{\sf Supp}}
\newcommand\almostsure{\textrm{\sf AS}}
\newcommand\abso{\textsf{Abs}}
\newcommand\objreach{\textrm{\sf Reach}}
\newcommand\Wsafety{\textrm{\sf Safe}}
\newcommand\red{\textsf{red}}
\newcommand\calT{\mathcal{T}}
\newcommand\calH{\mathcal{H}}
\newcommand\calP{\mathcal{P}}
\newcommand\calD{\mathcal{D}}
\newcommand\calG{\mathcal{G}}
\newcommand\calM{\mathcal{M}}
\newcommand\hatabstr{\hat{\mathcal{A}}}
\newcommand\wtabstr{\wt{\mathcal{A}}}
\newcommand\barabstr{\bar{\mathcal{A}}}
\newcommand\Inf{\textrm{\sf Inf}}
\newcommand\mecs{\textrm{\sf MEC}}
\newcommand\Val{\ensuremath{{\textrm{\sf Val}}}}
\newcommand\wt[1]{\widetilde{#1}}
\title{Multiple-Environment \\
Markov Decision Processes}
\author{Jean-Fran\c cois Raskin \and Ocan Sankur}
\begin{document}
\maketitle
\begin{abstract}
We introduce Multi-Environment Markov Decision Processes (MEMDPs) which are MDPs with a \emph{set} of probabilistic transition functions. The goal in a MEMDP is to synthesize a single controller with guaranteed performances against
\emph{all} environments even though the environment is unknown a priori. While MEMDPs can be seen as a special class of partially observable MDPs, we show that several verification problems that are undecidable for partially observable MDPs, are decidable for MEMDPs and sometimes have even efficient solutions.  
\end{abstract}

\section{Introduction}
\emph{Markov decision processes (MDP)} are a standard formalism for modeling systems that exhibit both
stochastic and non-deterministic aspects. 
At each round of the execution of a MDP, an action is chosen
by a controller (resolving the non-determinism), and the next state is determined
by a probability distribution associated to the current state and the chosen action.
A controller is thus a \emph{strategy} (a.k.a. \emph{policy}) that determines which action to choose
at each round according to the history of the execution so far.
Algorithms for finite state MDPs are known for a large variety of objectives including omega-regular objectives~\cite{CY-acm95}, PCTL objectives~\cite{BK-book08}, or quantitative objectives~\cite{Puterman-wiley94}.

\paragraph{Multiple-Environment MDP}
In a MDP, the environment is {\em unique}, and this may not be realistic: we may
want to design a control strategy that exhibits good performances under several
hypotheses formalized by different models for the environment, and those environments may {\em not} be distinguishable or we may {\em not want} to distinguish them (e.g. because it is too costly to design several control strategies.)
As an illustration, consider the design of guidelines for a medical treatment
that needs to work adequately for two populations of patients (each given by a different stochastic model), even if the patients cannot be diagnosed to be in one population or in the other. 
A appropriate model for this case would be a MDP with two different models for the responses of the patients to the sequence of actions taken during the cure. We want a therapy that possibly takes decisions by observing the reaction of the patient and that works well (say reaches a good state for the patient with high probability) no matter if the patient belongs to the first of the second population. 

Facing two potentially indistinguishable environments can be easily modelled
with a partially observable MDPs. Unfortunately, this model is particularly
intractable~\cite{DBLP:conf/csl/ChatterjeeCT13} (e.g. qualitative and
quantitative reachability, safety and parity objectives are undecidable.) To remedy to this situation, we introduce \emph{multiple-environment MDPs (MEMDP)} which are MDPs with a \emph{set} of probabilistic transition functions, rather than a {\em single one}. 
The goal in a MEMDP is to synthesize a single controller with guaranteed performances against
\emph{all} environments even though the environment at play is unknown a priori (it may be discovered during interaction but not necessarily.)
We show that verification problems that are undecidable for partially observable MDPs, are decidable for MEMDPs and sometimes have even efficient solutions.


\paragraph{Results}
We study MEMDPs with three types of objectives: reachability, safety and parity
objectives. For each of those objectives, we study both {\em qualitative} and
{\em quantitative} threshold decision problems~\footnote{For readability, we
concentrate in this paper on MEMDPs with two environments, but most of the results
can be easily generalized to any finite number of environments possibly with an increased computational complexity. This is left for a long version of this paper.}. We first show that winning strategies may need infinite memory as well as randomization, and we provide algorithms to solve the decision problems.  As it is classical, we consider two variants for the qualitative threshold problems. The first variant, asks to determine the existence of a {\em single} strategy that wins the objective with probability one (almost surely winning) in all the environments of the MEMDP. The second variant asks to determine the existence of a family of {\em single} strategies such that for all $\epsilon > 0$, there is one strategy in the family that wins the objective with probability larger than $1-\epsilon$ (limit sure winning) in all the environments of the MEMDP. For both almost sure winning and limit sure winning, and for all three types of objectives, we provide efficient polynomial time algorithmic solutions. Then we turn to the quantitative threshold problem that asks for the existence of a single strategy that wins the objective with a probability that exceeds a given rational threshold in all the environments. We show the problem to be NP-hard (already for two environments and acyclic MEMDPs), and so classical quantitative analysis techniques based on LP cannot be applied here. 
Instead, we show that finite memory strategies are sufficient to approach
achievable thresholds and we reduce the existence of bounded memory  strategies
to solving quadratic equations,
leading to solutions in polynomial space.
    
%

\paragraph{Related Work}
In addition to partially observable MDPs, our work is related to the following research lines.

Interval Markov chains are Markov chains in which transition probabilities are only known to belong
to given intervals (see \textit{e.g.} \cite{kozine2002interval,KS-fsttcs05,CHK-ipl13}).
Similarly, Markov decision processes with uncertain transition
matrices for finite-horizon and discounted cases were considered
\cite{nilim2005robust}. 
The latter work also mentions the finite scenario-case which is similar to our setting.
However, the precise distributions of actions at
each round are assumed to be independent while in our work we consider it to be fixed but unknown.
Independence is a {\em simplifying assumption} that only 
provides pessimistic guarantees.
However this approach does not use the information one obtains on the system
along observed histories, and so the results tend to be overly pessimistic.

Our work is related to reinforcement learning, where the goal is to develop
strategies which ensure good performance in unknown environments, by learning
and optimizing simultaneously; see \cite{kaelbling1996reinforcement} for a
survey. In particular, it is related to the multi-armed bandit problem
 where one is given a set of {\em stateless} systems with unknown reward distributions, and
the goal is to choose the best one while optimizing the overall cost incured
while learning. The problem of finding the optimal one (without optimizing) with high confidence was considered 
in \cite{EMM-clt02,MT-jmlr04}, and is related to our constructions inside
\emph{distinguishing double end-components} (see Section~\ref{section:dec}).
However, our problems differ from this one as in multi-armed bandit problem models of the bandits are unknown while our environments are known but we do not know a priori against which one we are playing. 

MEMDPs are also related to multi-objective reachability in MDPs
considered in~\cite{EKVY-lmcs08}, where a strategy is to be synthesized so as to
ensure the reachability of a set of targets, each with a possibly different
probability. If we allow multiple environments and possibly different reachability objectives
for each environment, this problem can be reduced to reachability in
MEMDPs. Note however that the general reachability problem is harder in MEMDPs;
it is NP-hard even for acyclic MEMDPs with absorbing targets, while
polynomial-time algorithms exist for absorbing targets in the setting
of~\cite{EKVY-lmcs08}.


\section{Definitions}
\label{section:definitions}
A finite~\emph{Markov decision process (MDP)} is a tuple $M = (S, A, \delta)$, where
$S$ is a finite set of \emph{states},
$A$ a finite set of \emph{actions},
and $\delta : S \times A \rightarrow \mathcal{D}(S)$ a partial function, where $\mathcal{D}(S)$ is
the set of \emph{probability distributions} on~$S$.
For any state~$s \in S$, we denote by~$A(s)$ the set of
actions available from~$s$.
We define a \emph{run} of~$M$ as a finite or infinite sequence $s_1a_1 \ldots a_{n-1}
s_n\ldots$ of
states and actions such that $\delta(s_i,a_i,s_{i+1})>0$ for all~$i\geq 1$.
Finite runs are also called \emph{histories} and denoted $\calH(M)$.

\textit{Sub-MDPs and End-components}~
For the following definitions, we fix an MDP $M = (S,A,\delta)$.
A \emph{sub-MDP $M'$ of~$M$} is
an MDP $(S',A',\delta')$ with $S' \subseteq S$, $A' \subseteq A$,
and such that for all $s \in S'$, $A'(s)\neq \emptyset$ and for all~$a \in
A'(s)$, we have $\supp(\delta(s,a)) \subseteq S'$, and $\delta'(s,a) = \delta(s,a)$.
For all subsets $S' \subseteq S$ with the property that for all $s \in S'$, there exists $a \in A(s)$
with $\supp(\delta(s,a)) \subseteq S'$,
we define the \emph{sub-MDP of~$M$ induced by $S'$} as the maximal sub-MDP
whose states are~$S'$, and denote it by $\restr{M}{S'}$. 
In other terms, the
sub-MDP induced by~$S'$ contains all actions of~$S'$ whose supports are
inside~$S'$.
An MDP is strongly connected if between any pair of states $s,t$, there is a
path.
An \emph{end-component} of $M = (S,A,\delta)$ is a sub-MDP
$M'=(S',A',\delta')$ that is strongly connected.
It is known that the union of two end components with non-empty intersection is
an end-component; one can thus define \emph{maximal} end-components.
We let $\mecs(M)$ denote the set of maximal end-components of~$M$, computable in
polynomial time~\cite{DeAlfaro-phd97}.
An \emph{absorbing state} $s$ is such that for all $a \in A(s)$, $\delta(s,a,s) =
1$. We denote by $\abso(M)$ the set of absorbing states of MDP~$M$.

\textit{Histories and Strategies}~
A~\emph{strategy}~$\sigma$ is a function $(SA)^*S\rightarrow \calD(A)$ such that
for all~$h \in (SA)^*S$ ending in~$s$, we have~$\supp(\sigma(h)) \in A(s)$.
A strategy is \emph{pure} if all histories are mapped to \emph{Dirac
distributions}.
A strategy~$\sigma$ is \emph{finite-memory} if it can be encoded with a
\emph{stochastic Moore machine}, $(\calM,\sigma_a,\sigma_u,\alpha)$ 
where~$\calM$ is a finite set of memory elements,~$\alpha$ the \emph{initial
distribution on~$\calM$}, 
$\sigma_u$ the \emph{memory update
function} $\sigma_u : A\times S \times \calM \rightarrow \calD(\calM)$, 
and $\sigma_a : S \times \calM \rightarrow \calD(A)$ the \emph{next action
function} where $\supp(\sigma(s,m)) \subseteq A(s)$ for any~$s\in S$ and~$m \in \calM$.
A~\emph{$K$-memory strategy} is such that~$|\calM|=K$.
A~\emph{memoryless strategy} is such that~$|\calM|=1$, and thus only depends on the last state of the history. We
define such strategies as functions $s \mapsto \calD(A(s))$ for ~$s \in
S$. 
An MDP~$M$, a finite-memory strategy~$\sigma$ encoded by $(\calM,\sigma_a,\sigma_u,\alpha)$, and a state~$s$ determine a
finite Markov chain $M_s^\sigma$ defined on the state space $S\times \calM$ as follows.
The initial distribution is such that for any~$m \in \calM$,
state $(s,m)$ has probability $\alpha(m)$, and~$0$ for other states. For any pair of states
$(s,m)$ and~$(s',m')$, the probability of the transition $(s,m),a,(s',m')$ is
equal to $\sigma_a(s,m)(a) \cdot \delta(s,a,s') \cdot \sigma_u(s,m,a)(m')$.
A~\emph{run} of~$M_s^\sigma$ is a finite or infinite sequence of the form
$(s_1,m_1),a_1,(s_2,m_2),a_2,\ldots$, where each
$(s_i,m_i),a_i,(s_{i+1},m_{i+1})$ is a transition with nonzero probability
in~$M_s^\sigma$, and~$s_1=s$. In this case, the run $s_1a_1 s_2a_2\ldots$, obtained by
projection to~$M$, is said to be \emph{compatible with~$\sigma$}.
When considering the probabilities of events in $M_s^\sigma$, we
will often consider sets of runs of~$M$. Thus, given $E \subseteq
(SA)^*$, we denote by $\pr_{M,s}^\sigma[E]$ the probability of the runs 
of~$M_s^\sigma$ whose projection to~$S$ is in~$E$.

For any strategy~$\sigma$ in a MDP~$M$, and a sub-MDP~$M'=(S',A',\delta')$, we say
that~$\sigma$ is \emph{compatible with~$M'$} if for any $h \in (SA)^*S'$,
$\supp(\sigma(h)) \subseteq A'(s)$.

Let~$\Inf(w)$ denote the disjoint union of states and actions that occur
infinitely often in the run~$w$; $\Inf$ is thus seen as a random variable.
By a slight abuse of notation,
we say that $\Inf(w)$ is equal to a sub-MDP~$D$ whenever it contains
exactly the states and actions of~$D$. 
It was shown that for any MDP~$M$, state~$s$, strategy~$\sigma$,
${\pr_{M,s}^\sigma[\Inf \in \mecs(M)]=1}$~\cite{DeAlfaro-phd97}.
We call a subset of states \emph{transient} if it is visited finitely many times with probability
$1$ under any strategy. 

\textit{Objectives}~
Given a set~$T$ of states, we define a \emph{safety objective w.r.t.~$T$},
written $\Wsafety(T)$, as the set of runs that only visit~$T$.
A \emph{reachability objective w.r.t. $T$}, written $\objreach(T)$, is the set of
runs that visit~$T$ at least once.
We also consider \emph{parity objectives}. A \emph{parity function} is defined
on the set of states $p : S \rightarrow \{0,1,\ldots,2d\}$ for some nonnegative
integer~$d$. 
The set of \emph{winning runs of~$M$ for~$p$} is defined as
$\calP_p = \{ w \in (SA)^\omega \mid \min\{ p(s) \mid s \in \Inf(w)\} \in
2\mathbb{N}\}$.
For any MDP~$M$, state~$s$, strategy~$\sigma$, 
and objective~$\Phi$, we denote 
$\Val_{\Phi}^\sigma(M,s) = \pr_{M,s}^\sigma[\Phi]$
and $\Val_\Phi^*(M,s) = \sup_{\sigma} \pr_{M,s}^\sigma[\Phi]$.
We say that objective~$\Phi$ is \emph{achieved surely} if for some~$\sigma$, all runs of~$M$
from~$s$ compatible with~$\sigma$ satisfy~$\Phi$.
Objective~$\Phi$ is \emph{achieved with probability~$\alpha$}
in~$M$ from~$s$ if for some~$\sigma$, $\Val_{\Phi}^\sigma(M,s)\geq \alpha$.
If~$\Phi$ is achieved with probability~$1$, we say that it is \emph{achieved
almost surely}. Objective~$\Phi$ is \emph{achieved limit-surely} if for
any~$\epsilon>0$, it is achieved with probability $1-\epsilon$.
In MDPs, limit-sure achievability coincides with almost-sure achievability since
optimal strategies exist.
We define $\almostsure(M,\Phi)$ as the set of states of~$M$ where~$\Phi$ is
achieved almost surely. Recall that for reachability, safety, and
parity objectives these states can be computed in polynomial time, and are only
dependent on the supports of the probability distributions~\cite{BK-book08,DeAlfaro-phd97}.
In particular, there exists a strategy ensuring $\Phi$ almost-surely when started from any state
of~$\almostsure(M,\Phi)$.
It is known that for any MDP~$M$, state~$s$, and a reachability, safety, or parity objective, 
there exists a pure memoryless
strategy $\sigma$ computable in polynomial time achieving the optimal
value~\cite{Puterman-wiley94,CY-acm95}.
The algorithm for parity objectives is obtained by showing that in each
end-component the probability of ensuring the objective is either~$0$ or~$1$,
and then reducing the problem to the reachability of those \emph{winning}
end-components. In the next lemma, we recall that the classification of winning end-components does
not depend on the exact values of the probabilities, but only on the support of
the distributions.

\begin{lemma}[\cite{DeAlfaro-phd97}]
	\label{lemma:mdp-mec-parity}
	Let~$M=(S,A,\delta)$ be a strongly connected MDP, and~$p$ a parity function.
	Then, for any MDP~$M'=(S,A,\delta')$ such that for all~$s \in S$, $a \in A$,
	$\supp(\delta(s,a)) = \supp(\delta'(s,a))$, and for all states~$s \in S$,
	there exists a strategy~$\sigma$ such that
	$\Val_{\calP_p}^\sigma(M,s) = \Val_{\calP_p}^*(M,s)=\Val_{\calP_p}^*(M',s) =
	\Val_{\calP_p}^\sigma(M',s)\in \{0,1\}$.
\end{lemma}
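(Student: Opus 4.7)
The plan is to show that for a strongly connected MDP with a parity objective, both the optimal value and a realizing strategy depend only on the supports of the transition distributions.

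First I observe that the notion of a sub-end-component is purely graph-theoretic: whether a pair $(S', A')$ forms an end-component depends only on which triples $(s,a,s')$ satisfy $\delta(s,a,s')>0$, together with strong connectivity of the induced graph. Since $M$ and $M'$ share these supports, their collections of sub-end-components coincide. Call a sub-end-component $C$ \emph{winning} if the minimum value of $p$ on its states is even; this classification is also independent of the exact probabilities, and hence identical for $M$ and $M'$.

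Then I split into two cases. If no sub-end-component of $M$ is winning, then for any strategy $\sigma$ and state $s$, the random variable $\Inf$ almost surely corresponds to a sub-end-component whose minimum priority is odd, so $\Val_{\calP_p}^\sigma(M,s) = 0$ in both $M$ and $M'$, and any strategy witnesses this. Otherwise, fix a winning sub-end-component $C$. By strong connectivity of $M$ and of $C$, standard MDP arguments yield a pure memoryless strategy $\sigma$ that, from any $s$, reaches $C$ almost surely and thereafter visits every state of $C$ infinitely often almost surely; the minimum infinitely-often priority is then the even minimum of $p$ on $C$, so $\Val_{\calP_p}^\sigma(M,s) = 1$. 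The key point is that the \emph{same} $\sigma$ works in $M'$: the Markov chains induced by $\sigma$ in $M$ and in $M'$ have identical edge supports, hence the same reachability relation and the same bottom strongly connected components as state sets. Almost-sure satisfaction of the parity objective from $s$ is the graph-theoretic property that every BSCC reachable from $s$ is winning, so it transfers to the chain induced by $\sigma$ in $M'$.

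The main obstacle will be the clean justification of this support-invariance of the qualitative behaviour of induced Markov chains: almost-sure reachability of $C$ and almost-sure recurrence inside $C$ must be shown to be preserved when passing from $\delta$ to $\delta'$. This is folklore for finite Markov chains, but it must be invoked carefully because the construction of $\sigma$ relies on it both for the reaching phase and for the recurrence phase inside $C$.
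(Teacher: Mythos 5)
The paper never proves this lemma --- it is imported from de Alfaro's thesis \cite{DeAlfaro-phd97} and used as a black box --- so there is no in-paper argument to compare against; your proposal supplies the standard self-contained proof, and it is essentially correct. The dichotomy over whether some sub-end-component has even minimum priority, the fact that $\Inf$ is almost surely an end-component under any strategy, and the observation that end-components, the BSCCs of the induced finite chains, and qualitative reachability in those chains are all determined by the supports alone do combine to give the claim, including that a single strategy witnesses the (common, $0$/$1$) value in both $M$ and $M'$. One detail needs repair: a \emph{pure} memoryless strategy cannot in general visit every state of a strongly connected end-component $C$ infinitely often (consider a state with two deterministic actions leading to two distinct states that both return to it). Either randomize uniformly over the actions of $C$, which makes the chain restricted to $C$ irreducible, or keep the strategy pure but steer only towards a fixed state of $C$ realizing the even minimum priority: since $\Inf$ then almost surely contains that state and is contained in $C$, the minimal priority seen infinitely often is exactly that even value. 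With either fix, the support-invariance you flag at the end is unproblematic, because the induced object is a finite Markov chain whose edge relation depends only on $\supp(\delta)$ and on the support of the strategy, and almost-sure parity in such a chain is the purely graph-theoretic condition on its reachable BSCCs.
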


\section{Multiple-Environment MDP}
\label{section:memdp}
A \emph{multiple-environment MDP (MEMDP)}, is a tuple $M = (S, A, (\delta_i)_{1\leq
i\leq k})$, where for each~$i$, $(S,A,\delta_i)$ is an MDP.
We will denote by $M_i$ the MDP
obtained by fixing the edge probabilities $\delta_i$,
so that $\pr_{M_i,s}^\sigma[E]$ denotes the probability of event~$E$ in~$M_i$
from state~$s$ under strategy~$\sigma$.
Intuitively, each~$M_i$ corresponds to the behavior of the system at hand under
a different \emph{environment}; in fact, while the state space is identical in
each~$M_i$, the transition probabilities between states and even their supports may differ.

In this paper, for readability, we will study the case of~$k=2$.
We are interested in synthesizing a \emph{single} strategy~$\sigma$ with guarantees on
\emph{both} environments, without a priori knowing against which 
environment~$\sigma$ is playing.
We consider reachability, safety, and parity objectives, and again for
readability, we consider the case where the same objective is to hold in all
environments. 
The general quantitative problem is the following.
\begin{definition}
	\label{definition:memdp-problem}
	Given a MEMDP~$M$, state~$s_0$, rationals $\alpha_1,\alpha_2$,
	and objective $\Phi$, which is a reachability, safety, or a parity objective,
	compute a strategy~$\sigma$, if it exists, such that
	\(
		\forall i =1,2, \Val_{\Phi}^\sigma(M_i,s) \geq \alpha_i.
		\)
\end{definition}

We refer to the general problem as \emph{quantitative reachability (resp.
safety, parity)}.
For an instance~$M$, $s_0$, $(\alpha_1,\alpha_2)$, $\Phi$, 
we say that $\Phi$ is \emph{achieved with probabilities $(\alpha_1,\alpha_2)$}
in~$M$ from~$s$ if there is a strategy~$\sigma$ witnessing the above definition.
We say that $\Phi$ is \emph{achieved almost
surely} in~$M$ from~$s$ if it is achieved with probabilities~$(1,1)$.
Objective~$\Phi$ is \emph{achieved
limit-surely} in~$M$ from~$s$ if for any~$\epsilon>0$,
$\Phi$ is achieved in~$M$ from~$s$ with probabilities $(1-\epsilon,1-\epsilon)$.
\emph{Almost-sure reachability (resp. safety, parity)} problems consist in
deciding whether in a given~$M$, from a state~$s$, a given objective is achieved
almost surely.
\emph{Limit-sure reachability (resp. safety, parity)} 
problems are defined respectively.
Note that in MDPs and MEMDPs, almost-sure safety coincides with \emph{sure}
safety (requiring that all runs compatible with a given strategy stay in the safe set of states).

\textit{Strategy Complexity~}
We note that unlike MDPs, all considered objectives may require infinite
memory and randomization, and Pareto-optimal probability vectors may not be achievable
(a Pareto-optimal vector is componentwise maximal).
All counterexamples are given in Fig.~\ref{fig:counterexamples}.

\begin{lemma}
	For some MEMDPs~$M$ and reachability objectives~$\Phi$:
	\begin{itemize}
		\item 
			there exists a randomized strategy that achieves~$\Phi$ with higher
			probabilities in both environments than any pure strategy,
		\item there exists an infinite-memory strategy that achieves~$\Phi$ with higher
			probabilities in both environments than any finite-memory strategy,
		\item objective~$\Phi$ can be achieved limit-surely but not almost surely
		(showing Pareto-optimal vectors are not always achievable).
	\end{itemize}	
%
%
\end{lemma}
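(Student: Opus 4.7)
The plan is to construct three small MEMDP counterexamples (as shown in the paper's counterexamples figure) and verify each of the three bullets by direct probability computation.

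For the randomization bullet, I would take a one-step MEMDP with a single decision state $s_0$ and two actions $a, b$ that swap the target and sink between the two environments: in $M_1$, $a$ leads almost surely to the target $\top$ and $b$ to the sink $\bot$, while in $M_2$ the roles are exchanged. The only pure strategies commit to $a$ or to $b$ and attain the Pareto-incomparable vectors $(1,0)$ and $(0,1)$. A uniformly randomized memoryless strategy attains $(1/2, 1/2)$, and no pure strategy attains both coordinates $\ge 1/2$, so randomization strictly enlarges the Pareto frontier.

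For the infinite-memory and limit-sure-versus-almost-sure bullets, the MEMDP needs a statistical testing gadget: a test action whose outcome distribution differs between $M_1$ and $M_2$ with identical supports, together with commit actions that reach $\top$ only in the matching environment. A strategy that performs tests and commits once the observed drift crosses a growing threshold has wrong-commit probability tending to $0$ but never reaching $0$, yielding limit-sure without almost-sure achievability. For infinite memory beating finite memory, I would use a variant where a finite-memory strategy's induced Markov chain forces bottom SCCs whose long-run commit frequencies are pinned to fixed rationals bounded away from perfect accuracy, while an infinite-memory strategy with an unbounded testing counter Pareto-dominates those vectors in both environments.

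The main obstacle is the lower-bound side of each counterexample: showing that no strategy in the restricted class matches the exhibited one. For pure versus randomized this is immediate by convexity of the achievable set around the two pure vectors $(1,0)$ and $(0,1)$. For finite versus infinite memory it reduces to analyzing the bottom SCCs of the induced finite Markov chain and observing that they can retain only bounded posterior information about the environment, hence carry a commit-error probability bounded below. For almost-sure versus limit-sure it follows from the fact that, because the transition supports coincide between $M_1$ and $M_2$, any finite-history posterior puts strictly positive mass on the wrong environment, so every commit incurs strictly positive error in at least one environment.
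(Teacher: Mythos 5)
Your three gadgets are, up to renaming, exactly the paper's three counterexamples in Fig.~\ref{fig:counterexamples}, and your treatment of the first and third bullets matches the paper's: the one-shot swap gadget settles the randomization claim by convexity, and for limit-sure versus almost-sure the decisive point is the one you name --- since the test action has equal supports, every history with positive probability in one environment has positive probability in the other, so any commit that is correct for $M_1$ fails with positive probability in $M_2$, while never committing fails outright; this is how the paper argues via Fig.~\ref{fig:unachievable}, Section~\ref{section:limitsure} and Section~\ref{subsection:almostsure-reach-safe}.

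The gap is in the second bullet, which is where the paper's only real proof work lies (Lemma~\ref{lemma:inf-memory-reach}). First, the claim cannot be read as literal strict domination: a memoryless strategy optimal for $M_1$ alone already attains the maximal first coordinate, so no single strategy can beat \emph{every} finite-memory strategy in both coordinates; what is actually shown is that a specific Pareto-optimal vector --- the one maximizing the sum $\sum_{i}\pr_{M_i,s}^{\sigma}[\phi]$, which equals $\sum_{w\in H}\max(P_1(w),P_2(w))$ over commit histories --- is attained by the likelihood-ratio strategy but by no finite-memory strategy. Second, your lower-bound sketch (``bottom SCCs retain only bounded posterior information, hence carry a commit-error probability bounded below'') is both undeveloped and, if the bound were uniform over finite-memory strategies, in contradiction with Theorem~\ref{lemma:reach-finite-suffices}, which says finite memory approximates the optimum arbitrarily well. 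The gap is strictly per-strategy and vanishes as memory grows, so it must be extracted for each fixed $m$-memory strategy individually. The paper does this by pumping: among the histories $(sa)^n(sata)^{k}u$ with $0\le k<m$, two must reach the same memory state, so the strategy makes the same choice along an arithmetic progression of $k$'s on which the sign of $P_1-P_2$ eventually flips, and on those histories it can be strictly improved, showing the supremum of the sum is not attained. You would need to supply an argument of this concreteness for your gadget; the posterior-information heuristic alone does not yield it.
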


The first item is clear from Fig.~\ref{fig:counterexamples}, while the second item follows from the results
of the paper. The third item is implies by the next lemma. 

\begin{lemma}
	\label{lemma:inf-memory-reach}
	In the MEMDP~$M$ of Fig.~\ref{fig:inf-memory-reach}, for the reachability
	objective $\objreach(T)$, there exists a Pareto-optimal
	vector of probabilities achievable by an infinite-memory strategy but not by
	any finite-memory strategy.
\end{lemma}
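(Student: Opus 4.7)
The plan is to analyse the MEMDP of Fig.~\ref{fig:inf-memory-reach} by identifying the gadget that makes the two environments statistically distinguishable only in the limit. A typical construction contains a \emph{probing} action at some state whose two distributions $\delta_1(s,a)$ and $\delta_2(s,a)$ differ but share the same support, combined with two \emph{committing} actions, each of which is favourable for reachability of $T$ in one environment but harmful in the other. Since the two distributions of the probing action differ but share the same support, no single sample is conclusive, yet an unbounded record of the samples lets a Bayesian-style observer drive the posterior probability of being in the wrong environment to zero, while a strategy with bounded memory can only aggregate a bounded amount of statistical evidence.

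First I would compute the supremum values $p_i^{*} = \Val_{\objreach(T)}^{*}(M_i,s_0)$ for $i=1,2$ and identify the Pareto-optimal vector $(p_1,p_2)$ that the lemma refers to; this amounts to a straightforward fixed-point computation on each $M_i$ separately. Then I would exhibit an explicit infinite-memory strategy $\sigma^{\star}$ achieving exactly $(p_1,p_2)$: $\sigma^{\star}$ plays the probing action and keeps in memory the entire prefix of outcomes, and commits to a side only when a distinguishing event has occurred or according to a schedule tailored so that, in each environment, the probability of ever committing to the wrong action vanishes. A direct computation on the resulting Markov chain shows that the two coordinates of $\Val_{\objreach(T)}^{\sigma^{\star}}(M_i,s_0)$ meet the targeted $p_i$.

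The main obstacle is the lower bound, namely that no finite-memory strategy attains $(p_1,p_2)$. The plan is to fix an arbitrary finite-memory strategy $\sigma$ with memory set $\mathcal{M}$ and consider the two finite Markov chains $M_{1,s_0}^\sigma$ and $M_{2,s_0}^\sigma$ on the product state space $S\times\mathcal{M}$. Since $\mathcal{M}$ is finite, the result of Section~2 on MDPs applies to both chains, so almost surely the trajectory ends up in a maximal end-component of $M_i^\sigma$; and each such end-component either is winning (inside $T$) or is losing. A coupling argument between $M_1^\sigma$ and $M_2^\sigma$, exploiting that on the probing action the two distributions have equal support, shows that for any memory state $m\in\mathcal{M}$ the reachable end-components under the two environments cannot be completely separated: there is a positive probability of ``collisions'' in which the same committing choice is taken in both environments. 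Summing these collisions over $\mathcal{M}$ forces at least one of $\Val_{\objreach(T)}^\sigma(M_i,s_0)$ to fall strictly below $p_i$, contradicting achievability of the Pareto-optimal vector by $\sigma$.

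I expect the delicate step to be the coupling argument in the finite-memory case: one must prove quantitatively that any bounded-memory classifier commits to the wrong side with a fixed positive probability in at least one environment, regardless of the schedule encoded by $\sigma_u$ and $\sigma_a$. This is where the specific probabilities of the gadget in Fig.~\ref{fig:inf-memory-reach} enter, and where the contrast with the infinite-memory strategy $\sigma^{\star}$ is made precise.
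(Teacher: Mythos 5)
There is a genuine gap, and it starts at the very first step. The vector the lemma is about is \emph{not} $(\Val_{\objreach(T)}^{*}(M_1,s_0),\Val_{\objreach(T)}^{*}(M_2,s_0))$: computing each environment's optimum separately gives $(1,1)$ here (in each $M_i$ alone you reach $u$ almost surely and then pick the action leading to $T$ in that environment), and $(1,1)$ is not achievable by \emph{any} strategy, finite- or infinite-memory, because the probing action from $s$ has the same support in both environments, so no history ever identifies the environment with certainty and the forced commitment at $u$ errs with probability bounded away from $0$. The Pareto-optimal point the paper exhibits is instead the one maximizing the \emph{sum} $\sum_{i}\pr_{M_i,s}^{\sigma}[\objreach(T)]$. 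The crucial structural observation, which your plan misses, is that before reaching $u$ the strategy has no choices, so the probabilities $P_i(h)$ of the histories $h\in(sa+sata)^{*}u$ are strategy-independent and the whole problem collapses to a per-history classification problem: choose $a$ or $b$ on each $h$ and collect $P_1(h)$ or $P_2(h)$. The optimal (infinite-memory) rule is the likelihood-ratio test $a$ iff $P_1(h)\geq P_2(h)$, yielding the sum $\sum_{h}\max(P_1(h),P_2(h))$; this is Pareto-optimal but its ``wrong-commitment'' probability does not vanish, contrary to what you assert.

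The second gap is the finite-memory lower bound. Your coupling/``collision'' argument is not the right mechanism: \emph{every} strategy, including the optimal infinite-memory one, necessarily makes the same committing choice on a given history in both environments, since the strategy cannot observe which environment it is in. What actually fails for a $K$-memory strategy is a counting issue, proved by pumping: the sign of $P_1(h)-P_2(h)$ depends on the relative numbers of $sa$-loops and $sata$-loops in $h$, and for histories $(sa)^{n}(sata)^{k}u$ this sign flips as $k$ grows; a finite-memory machine must reach the same memory state after $(sa)^n(sata)^{k_1}u$ and $(sa)^n(sata)^{k_2}u$ for some $k_1<k_2$, hence makes the same choice on an infinite pumped family of histories on which the optimal choices differ, and one then exhibits an explicit strict improvement of the sum. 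Since the sum is strictly improvable, the finite-memory strategy cannot match $\sigma_\infty$ componentwise. To repair your proof you would need to (i) define the target vector as the maximizer of the sum rather than the pair of individual optima, and (ii) replace the coupling step by this pumping-plus-improvement argument; as written, neither the upper nor the lower bound goes through.
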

\begin{proof}
	Clearly, $u$ is almost surely reached under any strategy. 
	Let us denote $H=(sa+sata)^*u$ the set of histories in~$M$ reaching~$u$.
	Observe also that the probabilities of histories~$H$ do not depend on the
	strategy. Let~$P_i(w)$ denote the probability of history~$w \in H$ in~$M_i$.
	We define $\sigma_\infty$ for any history~$w$ with~$w \in H$ as~$a$
	if~$P_1(w)\geq P_2(w)$ and as~$b$ otherwise.

	We first show that $\sum_{i=1,2} \pr_{M_i,s}^{\sigma_\infty}[\phi]
	= \sup_{\sigma}
	\sum_{i=1,2}\pr_{M_i,s}^{\sigma}[\phi]$, where~$\phi = \objreach(T)$, which
	proves that $\sigma_\infty$ achieves a Pareto-optimal probability vector.
	In fact, we have for any~$\sigma$ that
	$\pr_{M_i,s}^\sigma[\phi] = \sum_{w \in H} \pr_{M_i,s}^\sigma[\phi \mid
	w]P_i[w]$.
	So we get 
	$\pr_{M_1,s}^\sigma[\phi] = \linebreak \sum_{w \in H \cap \sigma^{-1}(\{a\})} P_1[w]$
	and $\pr_{M_2,s}^\sigma[\phi] = \sum_{w \in H \cap \sigma^{-1}(\{b\})} P_2[w]$.
	Since $H\cap \sigma^{-1}(\{a\})$ and $H\cap \sigma^{-1}(\{b\})$ paritions~$H$,
	we get that 
	$\sum_{i=1,2} \pr_{M_i,s}^\sigma[\phi] =
	\sum_{w \in H} P_{f(w)}(w)$, where $f(w)=1$ if~$w \in \sigma^{-1}(\{a\})$
	and~$2$ otherwise.
	On the other hand, by definition of~$\sigma_\infty$, we have
	$\sum_{i=1,2}\pr_{M_i,s}^{\sigma_\infty} \max(P_1(w),P_2(w))$.
	Since $Q(w) \leq \max(P_1(w),P_2(w))$ it follows that
	$\sum_{i=1,2}\pr_{M_i,s}^{\sigma_\infty}\geq 
	\sum_{i=1,2}\pr_{M_i,s}^{\sigma}$.

	Let us now show that no finite-memory strategy achieves 
	$\sup_{\sigma} \sum_{i=1,2}\pr_{M_i,s}^{\sigma}[\phi]$.
	Consider any $m$-memory strategy~$\sigma$ for arbitrary~$m>0$.
	Assume w.l.o.g. that $P_1(sas) > P_2(sas)$. Fix~$n=m^3$. Since $\sigma$ is
	finite-memory, there exists $0 \leq k_1<k_2 < m$ such that
	$\sigma$ has the same memory element after reading words $(sa)^n
	(sata)^{k_1}u$ and~$(sa)^n(sata)^{k_2}u$.  Let us write $w_{n,k} = (sa)^n (sata)^{k_1}u$.
	We have $\sigma( w_{n,k_1})=  \sigma( w_{n,k_2}) = \alpha \in
	\{a,b\}$. If~$\alpha=b$, then define $\sigma'$ identically as~$\sigma$ except
	for $\sigma'(w_{n,k_1}) = a$. We have $P_1(w_{n,k_1})>P_2(w_{n,k_2})$ so by
	the above calculations, $\sigma'$ achieves a higher objective than~$\sigma$.
	Assume that $\alpha=a$. In this case, we consider~$l$ large enough such that
	$P_2( w_{n,k_1+l(k_2-k_1)})>P_1(w_{n,k_1+l(k_2-k_1)})$. This holds for all
	large enough~$l$ since $P_2(sat)>P_1(sat)$. Moreover, on any word
	$\sigma(w_{n,k_1+l(k_2-k_1)})=a$ by the above pumping argument. 
	If we define $\sigma'$ by switching to~$b$ at this history, we again improve
	the objective function, similarly as above.
\end{proof}

\begin{figure}[h]
	\centering
	\begin{subfigure}[b]{0.2\textwidth}
		\centering
		\begin{tikzpicture}
			\tikzstyle{every state}=[node distance=1cm,minimum size=10pt, inner sep=1pt];
			\node[state] at (0,0) (A){$s$};
			\node[state,node distance=0.5cm, above right of=A,fill=black,minimum size=3pt] (Aa) {};
			\node[state,node distance=0.5cm, below right of=A,fill=black,minimum size=3pt] (Ab) {};
			\node at($(Aa)+(0,0.2)$) {$a$};
			\node at($(Ab)+(0,-0.2)$) {$b$};
			\node[state,right of=Aa] (B){$t$};
			\node[state,right of=Ab] (C){$u$};
			\node[node distance=0.5cm,below of=C]{T};
			\path[-latex']
				(A) edge (Aa)
				(A) edge (Ab)
				(Aa) edge[dotted] (B)
				(Aa) edge[dashed] (C)
				(Ab) edge[dashed] (B)
				(Ab) edge[dotted] (C)
				(B) edge[loop right] (B)
				(C) edge[loop right] (C);
		\end{tikzpicture}
		\caption{}
		\label{fig:randomization-required}
	\end{subfigure}
	\begin{subfigure}[b]{0.39\textwidth}
		\centering
		\begin{tikzpicture}
			\tikzstyle{every state}=[node distance=1cm,minimum size=10pt, inner sep=1pt];
			\node[state] at (0,0) (A){$s$};
			\node[state,right of=A] (B){$t$};
			\node[state,left of=A, node distance=0.6cm] (C){$u$};
			\node[state,node distance=0.5cm, above left of=C,fill=black,minimum size=3pt] (Ca) {};
			\node[state,node distance=0.5cm, below left of=C,fill=black,minimum size=3pt] (Cb) {};
			\node at($(Ca)+(0,0.2)$) {$a$};
			\node at($(Cb)+(0,-0.2)$) {$b$};
			\node[state,above of=A,node distance=0.8cm,fill=black,minimum size=3pt] (Ab) {};
			\node[node distance=0.2cm,above of=Ab]{$a$};
			\node[state,node distance=1cm, left of=Ca] (C1){$v$};
			\node[state,node distance=1cm, left of=Cb] (C2){$w$};
			\node[node distance=0.5cm,below of=C2]{T};
			\path[-latex']
				(B) edge node[above]{$a$} (A)
				(A) edge (Ab)
				(Ab) edge (C)
				(Ab) edge[bend left] (A)
				(Ab) edge[bend left] (B)
				(C) edge (Ca)
				(C) edge (Cb)
				(Ca) edge[dotted] (C1)
				(Ca) edge[dashed] (C2)
				(Cb) edge[dotted] (C2)
				(Cb) edge[dashed] (C1)
				(C1) edge[loop left] (C1)
				(C2) edge[loop left] (C2);
		\end{tikzpicture}
		\caption{}
		\label{fig:inf-memory-reach}
	\end{subfigure}
	\begin{subfigure}[b]{0.39\textwidth}
		\centering
		\begin{tikzpicture}
			\tikzstyle{every state}=[node distance=1cm,minimum size=10pt, inner sep=1pt];
			\node[state] at (0,0) (A){$s$}; 
			\node[state,right of=A, node distance=1.2cm] (B){$t$}; 
			\node[state,node distance=0.7cm,above right of=A,fill=black,minimum size=3pt] (Aa) {};
			\node[above of=Aa, node distance=0.2cm]{$a$};
			\node[state,node distance=0.7cm, below right of=A,fill=black,minimum size=3pt] (Bb) {};
			\node[above of=Bb, node distance=0.2cm]{$a$};
			\node[state,node distance=0.5cm, above left of=A,fill=black,minimum size=3pt] (Cc) {};
			\node[state,node distance=0.5cm, below left of=A,fill=black,minimum size=3pt] (Cb) {};
			\node at($(Cc)+(0,0.2)$) {$c$};
			\node at($(Cb)+(0,-0.2)$) {$b$};
			\node[state,left of=Cc] (C1){$u$};
			\node[state,left of=Cb] (C2){$v$};
			\node at ($(C2)+(0,-0.4)$) {$T$};
			\path[-latex']
				(A) edge (Aa)
				(Aa) edge[bend left] (A)
				(Aa) edge (B)
				(B) edge (Bb)
				(Bb) edge (A)
				(A) edge (Cb)
				(A) edge (Cc)
				(Cb) edge[dashed] (C1)
				(Cb) edge[dotted] (C2)
				(Cc) edge[dotted] (C1)
				(Cc) edge[dashed] (C2);
		\end{tikzpicture}
		\caption{}
		\label{fig:unachievable}
	\end{subfigure}
	\caption{
		\small
		We adopt the following notation in all examples: edges that only exist
		in~$M_1$ are drawn in dashed lines, and those that only exist in~$M_2$
		by dotted ones.
		To see that randomization may be necessary, observe that in the MEMDP~$M$ in Fig.~\ref{fig:randomization-required},
		the vector $(0.5,0.5)$ of reachability probabilities for target~$T$ can only be achieved by a strategy that
		randomizes between $a$ and~$b$.
		In the MEMDP in Fig.~\ref{fig:inf-memory-reach}, where action $a$
		from~$s$ has the same
		support in $M_1$ and~$M_2$ but different distributions.
		Any strategy almost surely reaches~$u$ in
		both~$M_i$, since action~$a$ from~$s$ has nonzero probability of leading
		to~$u$. Intuitively, the best strategy is to sample the
		distribution of action~$a$ from~$s$, and to choose, upon arrival to~$u$,
		either~$b$ or~$c$ according to the most probable environment.
		We prove that such an infinite-memory strategy achieves a Pareto-optimal vector
		which cannot be achieved by any finite-memory strategy (See Lemma~\ref{lemma:inf-memory-reach} in Appendix).
		Last, in Fig.~\ref{fig:unachievable}, the MEMDP is similar to that of
		Fig.~\ref{fig:inf-memory-reach} except that 
		action~$a$ from~$s$ only leads to~$s$ or~$t$.
		We will prove in Section~\ref{section:limitsure}, that for
		any~$\epsilon>0$, there exists a strategy ensuring
		reaching~$T$ with probability $1-\epsilon$ in each~$M_i$. The strategy
		consists in sampling the distribution of action~$a$ from~$s$ a sufficient number of times
		and estimating the actual environment against which the controller is
		playing.
		However, the vector~$(1,1)$ is not achievable, which follows from
		Section~\ref{subsection:almostsure-reach-safe}.
	}
	\label{fig:counterexamples}
\end{figure}
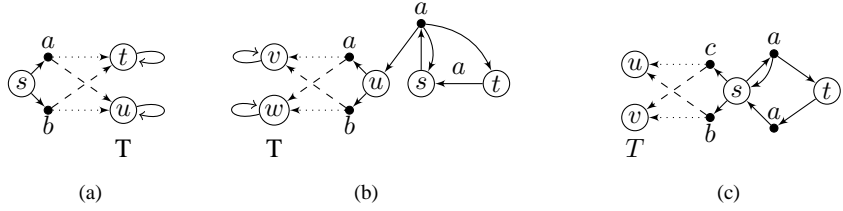

\textit{Results}~
We give efficient algorithms for almost-sure and limit-sure problems:

%
\smallskip
	(A) The almost-sure reachability, safety, and parity problems are decidable in
	polynomial time (Theorems~\ref{lemma:asreach} and~\ref{thm:asparity}).
	Finite-memory strategies suffice.

(B)
	The limit-sure reachability, safety, and parity problems are decidable in
	polynomial time (Theorem~\ref{thm:lsreach} and~\ref{thm:quantparity}). Moreover, for any $\epsilon>0$, to achieve probabilities of
	at least $1-\epsilon$, 
	$O(\frac{1}{\eta^2}\log(\frac{1}{\epsilon}))$-memory strategies suffice,
	where~$\eta$ denotes the smallest positive difference between the probabilities
	of~$M_1$ and~$M_2$.
\smallskip

The general quantitative problem is harder as shown by the next result.
We call a MEMDP \emph{acyclic} if the only cycles are self-loops in all
environments.

\smallskip
(C)
	The quantitative reachability and safety problems are NP-hard on acyclic
	MEMDPs both for arbitrary and memoryless strategies
(Theorem~\ref{thm:nphard}).
\smallskip

We can nevertheless provide procedures to solve the quantitative reachability
and safety problems by fixing the memory size of the strategies.

\smallskip
(D)
	For any~$K$ represented in unary, the quantitative reachability and safety
	problems restricted to $K$-memory strategies can be solved in PSPACE
	(Theorem~\ref{lemma:fixed-memory-reach}).
\smallskip


The quantitative parity problem can be reduced to quantitative
reachability, so the previous result can also be applied for the quantitative
parity problem.

\smallskip (E)
	The quantitative parity problem can be reduced to quantitative
	reachability in polynomial time
	(Theorem~\ref{thm:quantparity}).
	\smallskip

We show that finite-memory strategies are not restrictive if we are interested
in approximately ensuring given probabilities.

\smallskip
(F)
	Finite-memory strategies suffice to approximate quantitative
	reachability, safety, and parity problems up to any desired precision
	(Theorem~\ref{lemma:reach-finite-suffices}).
\smallskip

We will derive approximation algorithms in the
following sense.
\begin{definition}
	The \emph{$\epsilon$-gap problem for reachability} consists, given MEMDP~$M$, state~$s$, target set~$T$, and
	probabilities~$\alpha_1,\alpha_2$, in answering

		-- YES if~$\exists \sigma, \forall i=1,2,
			\pr_{M_i,s}^{\sigma}[\objreach(T)]\geq \alpha_i$,

		--  NO if~$\forall \sigma, \exists i=1,2, \pr_{M_i,s}^{\sigma}[\objreach(T)]<
			\alpha_i - \epsilon$,

		--  and arbitrarily otherwise.
\end{definition}
The $\epsilon$-gap problem is an instance of promise problems
which  guarantee a correct answer in two disjoint sets of
inputs, namely positive and negative instances -- which do not necessarily cover all
inputs, while giving no guarantees in the rest
of the input~\cite{ESY-ic84,goldreich2005promise}.


We give a procedure for the~$\epsilon$-gap problem and show its NP-hardness:

\smallskip (G)
	There is a procedure for the~$\epsilon$-gap problem for quantitative
	reachability in MEMDPs that runs in double exponential space, and
	whenever it answers YES, returns a strategy~$\sigma$
	such that $\pr_{M,s}^{\sigma}[\objreach(T)]\geq \alpha_i-\epsilon$
	(Theorem~\ref{lemma:gapalgorithm}).


(H) The $\epsilon$-gap problem is NP-hard (Theorem~\ref{lemma:gaphard}).
\smallskip

\paragraph{Preprocessing}
\label{section:preprocessing}
Clearly, in a MEMDP, if one observes an edge that only exists in one
environment, then the environment is known with certainty
and any good strategy should immediately switch to the optimal strategy for the
revealed environment. 
Formally, we say that an edge $(s,a,s')$ is \emph{$i$-revealing} if 
$\delta_i(s,a,s') \neq 0$ and $\delta_{3-i}(s,a,s') = 0$.
We make the following assumption w.l.o.g.:

\begin{assumption}[Revealed form]
	\label{assum:revealed}
	All MEMDPs $M=(S,A,\delta_1,\delta_2)$ are assumed to be in \emph{revealed
		form}, that is, 
	there exists a partition $S = S_u \biguplus R_1
	\biguplus R_2$ satisfying the following properties.
	\begin{inparaenum}
		\item All states of~$R_1$ and~$R_2$ are absorbing in both environments,
		\item For any~$i=1,2$, and any $i$-revealing edge $(s,a,s')$, we have $s'
		\in R_i$. Conversely, any edge $(s,a,s')$ with~$s' \in R_i$ is
		$i$-revealing.
	\end{inparaenum}

	States $R_i$ are called \emph{$i$-revealed}, and will be denoted $R_i(M)$.
	The remaining states are called \emph{unrevealed}.
\end{assumption}
In other words, we assume that any $i$-revealing edge leads to a known
set of $i$-revealed states which are all absorbing.
Assumption~\ref{assum:revealed} can be made without loss of generality by
redirecting any revealing edge to fresh absorbing states.
In fact, given an arbitrary MEMDP~$M$, for any objective~$\Phi$,
we can define $M'$ by replacing any $i$-revealing edge $(s,a,s')$ in~$M$ by two edges
$(s,a,\top_i)$ and~$(s,a,\bot_i)$ where
$\top_i$ (resp. $\bot_i$) is a fresh absorbing winning (resp. losing) state.
Here, by winning, we mean that we add~$\top_i$ (resp. $\bot_i$) to the set of
target (resp. non-target) states for
reachability objectives, to the set of safe (resp. unsafe) states for safety objectives, and we
assign an even (resp. odd) parity for parity objectives. The probabilities are
defined as follows: $\delta_i'(s,a,\top_i)= \delta_i(s,a,s')\cdot 
\Val_{\Phi}^*(M_i,s')$, and
$\delta_i'(s,a,\bot_i)= \delta_i(s,a,s') \cdot (1-\Val^*_{\Phi}(M_i,s'))$, while the
probabilities of other edges are preserved. The interpretation of these values
is that at state~$s$, given action~$a$, $\delta_i(s,a,s')\cdot
\Val_{\Phi}^*(M_i,s')$ is the probability of going to~$s'$, and from thereon
winning under the optimal strategy for~$M_i$. The construction is illustrated
in Fig.~\ref{fig:revealed}.

Note that from any strategy~$\sigma'$ in~$M'$
one can derive, by adding one bit of memory, a strategy~$\sigma$ for~$M$ such
that
$\pr_{M_i,s_0}^\sigma[\Phi] = \pr_{M_i',s_0}^{\sigma'}[\Phi],
\forall i=1,2$,
and 
$\expect_{M_i,s_0}^\sigma[\Phi] = \expect_{M_i',s_0}^{\sigma'}[\Phi],
\forall i=1,2$
respectively for considered objectives.
Similarly, any strategy in~$M$ can be adapted to~$M'$ preserving the
probabilities of satisfying a given objective.

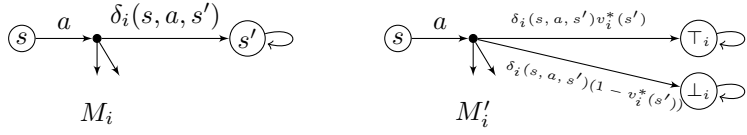
\begin{figure}[h]
	\centering
	\begin{tikzpicture}
		\tikzstyle{every state}=[node distance=1cm,minimum size=10pt, inner sep=1pt];
		\node[state] at (0,0) (A){$s$};
		\node[state,right of=A,fill=black,minimum size=3pt] (Aa) {};
		\node[state, right of=A, node distance=3cm] at (0,0) (B){$s'$};
		\path[-latex'] (A) edge node[above]{$a$} (Aa)
			(Aa) edge node[above]{$\delta_i(s,a,s')$} (B)
			(Aa) edge ($(Aa) + (0,-0.5)$)
			(Aa) edge ($(Aa) + (0.3,-0.5)$)
			(B) edge[loop right] (B);
		\node at ($(Aa)+(0,-1)$) {$M_i$};
		\begin{scope}[shift={(5,0)}]
			\node[state] at (0,0) (A){$s$};
			\node[state,right of=A,fill=black,minimum size=3pt] (Aa) {};
			\node[state, right of=A, node distance=4cm] (B){\scriptsize $\top_i$};
			\node[state, below of=B, node distance=0.7cm] (C){\scriptsize $\bot_i$};
			\path[-latex'] (A) edge node[above]{$a$} (Aa)
				(Aa) edge node[sloped,pos=0.6,below]{\tiny
					$\delta_i(s,a,s')(1-v_{i}^*(s'))$} (C)
				(Aa) edge node[above]{\tiny $\delta_i(s,a,s') v_{i}^*(s')$} (B)
				(Aa) edge ($(Aa) + (0,-0.5)$)
				(Aa) edge ($(Aa) + (0.3,-0.5)$)
				(B) edge[loop right] (B)
				(C) edge[loop right] (C);
			\node at ($(Aa)+(0,-1)$) {$M_{i}'$};
		\end{scope}
	\end{tikzpicture}
	\caption{The transformation of any $i$-revealing edge $(s,a,s')$ so as to
		put the MEMDP in revealed form, where 
	$v_{i}^*(s') = \Val^*_{\Phi}(M_{i},s')$, for considered
	objective~$\Phi$.}
	\label{fig:revealed}
\end{figure}




For any reachability (resp. safety) objective~$T$, once a state in~$T$ (resp.
$S\setminus T$) is visited the behavior
of the strategy afterwards is not significant since the objective has already
been fulfilled (resp. violated).
Accordingly, we assume that the set of target and unsafe states are absorbing.

\begin{assumption}
	\label{assum:absorbing}
	For all considered objectives~$\objreach(T)$ and $\Wsafety(T')$,
	we assume that $T$ and~$S\setminus T'$ are sets of absorbing states for both environments.
\end{assumption}

Under assumptions \ref{assum:revealed} and~\ref{assum:absorbing}, for any
MEMDP~$M$, and objective~$\Phi$, 
we denote
$R_i^{\Phi}(M)$ the set of $i$-revealed states from which
$\Phi$ holds almost surely in~$M_i$, and define $R^{\Phi}(M) = R_1^{\Phi}(M)
	\cup R_2^{\Phi}(M)$.

\paragraph{Overview}~
We will first concentrate on results on reachability objectives since they
contain most of the important ideas. We present algorithms for almost-sure
reachability (Section~\ref{section:reachsafe}), introduce and study \emph{double
end-components} (Section~\ref{section:dec}), then present our
algorithms for limit-sure problems (Section~\ref{section:limitsure}),
and the general quantitative case where we also present NP-hardness results
(Section~\ref{section:quantitative}). We then summarize our results on safety,
and parity objectives (Section~~\ref{section:parity}).

\section{Almost-Sure Reachability}
\label{section:reachsafe}
\label{subsection:almostsure-reach-safe}


We give polynomial-time algorithms for almost-sure reachability in
MEMDPs.
Given any MEMDP~$M=(S,A,\delta_1,\delta_2)$, we define the MDP $\cup M =
(S,A,\delta)$ by taking, for each action, the union of all transitions, and
assigning them uniform probabilities. Formally, for any~$s\in S$ and $a \in A(s)$, 
$\supp(\delta(s,a)) = \supp(\delta_1(s,a)) \cup \supp(\delta_2(s,a))$
and for any $s' \in \supp(\delta(s,a))$, $\delta(s,a,s') =
\frac{1}{|\supp(\delta(s,a))|}$.

Observe that for any MEMDP~$M$, and subset of states~$S'$, the set of states $s$ such that
$\pr_{\cup M,s}^\sigma[\Wsafety(S')]=1$ for some~$\sigma$
induces a sub-MDP in $M_1$ and~$M_2$. One can
therefore define~$M'$ the MEMDP induced by this set. Furthermore, any strategy
compatible with~$M'$ satisfies $\Wsafety(S')$ surely in each~$M_i$.

\begin{figure}[h]
	\centering
	\begin{tikzpicture}
		\tikzstyle{every state}=[node distance=1cm,minimum size=10pt, inner sep=1pt];
		\node[state] at (0,0) (A){$s$};
		\node[state,right of=A,fill=black,minimum size=3pt] (Aa) {};
		\node[state,below of=Aa,node distance=0.5cm, fill=black,minimum size=3pt] (Ab) {};
		\node[state,right of=Aa] (B){$t$};
		\node[state,right of=Ab] (C){$u$};
		\node[right of=B, node distance=0.5cm] {$T$};
		\node[right of=C, node distance=0.5cm] {$T$};
		\node at ($(Aa) +(0,0.3)$) {$a$};
		\node at ($(Ab) +(0,-0.3)$) {$b$};
		\path[-latex'] 
			(A) edge (Aa)
			(A) edge (Ab)
			(Aa) edge[dashed, bend right] (A)
			(Aa) edge[dotted] (B)
			(Ab) edge[dashed] (C)
			(Ab) edge[dotted,bend left] (A)
			(B) edge[loop above] (B)
			(C) edge[loop below] (C);
	\end{tikzpicture}
	\caption{\small MEMDP~$M$ where $\objreach(T)$ can be achieved almost surely.
	In fact, $\almostsure(M_i,T)=\{s,t,u\}$ for all~$i=1,2$, so~$M'=M$, and
	$\Val_{\objreach(T)}(M_i',s)=1$ for~$i=1,2$. The strategy returned by the
	algorithm consists in choosing, at~$s$, $a$ and~$b$ uniformly at random.
	Notice that there is no pure memoryless strategy achieving the objective
	almost surely.}
	\label{fig:aspositive}
\end{figure}
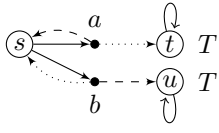

The algorithm for almost sure reachability is described in Algorithm~\ref{alg:asreach}.
First, the state space is restricted
to~$U$ since any state from which the objective holds almost surely in the
MEMDP~$M$ must also belong to an almost surely winning state of each $M_i$,
except for $j$-revealed states which only need to be winning for~$M_j$.
We consider MEMDP~$M'$ induced by the states surely satisfying~$\Wsafety(U)$
in both environments.
The problem is then reduced to finding strategies in each~$M_i'$. 
If such strategies we obtain our strategy either 1) alternating between two
strategies using memory, or 2) randomizing between them.
Figure~\ref{fig:aspositive} is an example where almost-sure reachability holds.
We already saw the example of Fig.~\ref{fig:unachievable} where almost-sure
reachability does not hold. In fact, in that example $M'$ contains both states
$\{s,t\}$ both no winning strategy exists in $M_i'$ for both~$i=1,2$.
\renewcommand{\KwData}[1]{\textbf{Input:} #1\\}
\begin{algorithm}
 \KwData{MEMDP~$M$, $\objreach(T)$, $s_0 \in S$}
 $U := \big(\almostsure(M_1,\objreach(T)) \cap
 \almostsure(M_2,\objreach(T))\big)
			 \cup R^{\objreach(T)}$\;
 $M'$ := Sub-MEMDP of~$M$ induced by states~$s$ s.t.  $\Val_{\Wsafety(U)}^*(\cup M,s)=1$\;
 \eIf{$\forall i=1,2, \Val_{\objreach(T)}^*(M_i',s_0)=1$}{
	 Let $\sigma_i$ for $i=1,2$, such that $\Val_{\objreach(T)}^\sigma(M_i',t)=1$ for all~$t \in U$\;
	 Return $\sigma'$ defined as $\sigma'(t) = \frac{1}{2}\sigma_1(t)
		 +\frac{1}{2}\sigma_2(t)$, $\forall t\in S$\;
 }{
 	Return NO\;
 }
 \caption{\small Almost-sure reachability algorithm given MEMDP~$M$, starting
	 state~$s_0$
 and objective $\objreach(T)$.}
	\label{alg:asreach}
\end{algorithm}


\begin{theorem}
	\label{lemma:asreach}
	For any MEMDP~$M$, objective~$\objreach(T)$,
	and a state~$s$, Algorithm~\ref{alg:asreach} decides in polynomial time if
	$\objreach(T)$ can be achieved almost surely from~$s$
	in~$M$, and returns a witnessing memoryless strategy.
\end{theorem}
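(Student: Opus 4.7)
The plan is to verify three things about Algorithm~\ref{alg:asreach}: polynomial running time, soundness (whenever it outputs $\sigma'$, this strategy achieves almost-sure $\objreach(T)$ in both $M_1$ and $M_2$), and completeness (whenever some strategy achieves almost-sure reachability in both environments, the algorithm returns YES). The running time is immediate: computing $\almostsure(M_i,\objreach(T))$, the set of states of $\cup M$ with $\Val_{\Wsafety(U)}^*(\cup M,\cdot)=1$, and optimal pure memoryless almost-sure reachability strategies in $M_i'$ are all classical polynomial-time MDP primitives, and the resulting $\sigma'$ is memoryless.

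For soundness I would fix $i \in \{1,2\}$ and show that $\sigma'$ achieves $\objreach(T)$ almost surely in $M_i$. Since $\sigma_1,\sigma_2$ are compatible with $M'$, so is $\sigma'$, so all $\sigma'$-runs stay in $U$. In $M_i$, states of $R_{3-i}$ are unreachable, because their only incoming edges are $(3-i)$-revealing and hence carry $\delta_i=0$; therefore every $\sigma'$-reachable state lies in $\almostsure(M_i,\objreach(T))$. At any such reached state $t$, $\supp(\sigma'(t))\supseteq\supp(\sigma_i(t))$, and $\sigma_i$ furnishes a graph path from $t$ to $T$ in $M_i^{\sigma_i}$, which remains present in $M_i^{\sigma'}$. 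Hence $T$ is graph-reachable from every $\sigma'$-reachable state in $M_i$, and the support-only characterization of almost-sure reachability~\cite{BK-book08} yields $\pr_{M_i,s_0}^{\sigma'}[\objreach(T)]=1$.

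The hard part is completeness. Suppose $\tau$ achieves $\objreach(T)$ almost surely in both environments. The central observation to establish is that every state $s$ visited with positive probability by $\tau$ in $M_i$ lies in $U$. Such an $s$ must belong to $\almostsure(M_i,\objreach(T))$ (else $\tau$ would lose from $s$ with positive probability). If $s$ is $i$-revealed then $s\in R_i^{\objreach(T)}\subseteq U$. If $s$ is unrevealed, then by the revealed-form assumption any positive-probability prefix reaching $s$ consists of edges between unrevealed states, whose supports coincide in $\delta_1$ and $\delta_2$; consequently the same prefix has positive probability in $M_{3-i}^\tau$, so $s\in\almostsure(M_1,\objreach(T))\cap\almostsure(M_2,\objreach(T))\subseteq U$.

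With this in hand, view $\tau$ as a strategy in $\cup M$: any path of $(\cup M)^\tau$ starting at $s_0$ either stays in the unrevealed part---and then is a common positive-probability path of $M_1^\tau$ and $M_2^\tau$, so every visited state lies in $U$ by the previous step---or takes some $i$-revealing edge, whose target is reached with positive probability in $M_i^\tau$, hence lies in $R_i^{\objreach(T)}\subseteq U$ and is absorbing. Thus $\tau$ surely enforces $\Wsafety(U)$ in $\cup M$ from $s_0$, placing $s_0$ in the state space of $M'$; the same observation shows that $\tau$ is compatible with $M'$, so it witnesses $\Val_{\objreach(T)}^*(M_i',s_0)=1$ for both $i$, forcing the algorithm to return YES. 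The main obstacle is this last step: propagating the "stays in $U$" property from individual environments to the combined MDP $\cup M$ is where the revealed-form preprocessing is essential.
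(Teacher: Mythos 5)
Your proof is correct and follows essentially the same route as the paper's: soundness via the uniform mixture of the two memoryless almost-sure strategies for $M_1'$ and $M_2'$ (your graph-reachability phrasing is equivalent to the paper's ``mimic $\sigma_i$ for $|S|$ steps with positive probability'' argument), and completeness by showing every state reached with positive probability under a witnessing strategy lies in $U$, using the revealed form to transfer reachability of unrevealed states between the two environments. Your explicit lifting of the sure-safety-of-$U$ property to $\cup M$ is a step the paper leaves implicit, but it is the same underlying argument.
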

\begin{proof}
	\textbf{(Soundness)}
	Assume that $\forall i=1,2, \Val_{ \objreach(T)}^*(M_i',s)=1$,
	and consider pure
	memoryless strategies $\sigma_i$ achieving $\objreach(T)$
	almost surely in each $M_i'$ from any state of~$U$, 
	and let $\sigma = \frac{1}{2}\sigma_1 + \frac{1}{2}\sigma_2$.
	We have $\pr_{M_i,s}^\sigma[\Wsafety(U)]=1$ for any~$i$ since
	each~$\sigma_i$ is compatible with~$M_i'$.
	Moreover, for each~$i$, and from any state~$s'$ of~$M_i$ reachable
	under~$\sigma$, target set $T$ is reached with
	positive probability in $|S|$ steps under strategy $\sigma_i$.
	In fact we have, for such a state~$s'$, $s' \in U \setminus R_{3-i}$.
	Since the probability
	of~$\sigma$ being identical to~$\sigma_i$ for~$|S|$ steps is positive, 
	$T$ is reached almost surely in $M_i$ under~$\sigma$
	from~$s$. 

	This construction gives a memoryless strategy. One can obtain a pure
	finite-memory strategy by alternating between $\sigma_1$ and $\sigma_2$ every
	$|S|$ steps.

	\textbf{(Completeness)}
	Conversely, assume that there exists a strategy $\sigma$ almost surely
	achieving $\objreach(T)$ from~$s$. 
 Towards a contradiction, assume that $\Val_{\Wsafety(U)}^\sigma(M_i,s)< 1$.
 This means some state~$t \not \in U$ is reached with positive probability
 under~$\sigma$.
 Recall that all target states are absorbing in~$M$ by
 Assumption~\ref{assum:absorbing}.
 If~$t \in R_i\setminus R_i^{\objreach(T)}$ this contradicts that~$\sigma$
 almost surely achieves the objectives, and similarly if~$t \not \in
 \almostsure(M_i,T)$, since a target state could not have been reached
 before arriving to~$t$. Last, if $t \not \in \almostsure(M_{3-i},T)$
 and $t$ is not revealed, then this state is also reachable with positive
 probability in~$M_{3-i}$ under~$\sigma$, which is again a contradiction.
 Therefore $\Val^\sigma_{\Wsafety(U)}(M_i,s)=1$ for all~$i=1,2$, which means
 that $s$ is a state of~$M'$ and~$\sigma$ is compatible with~$M'$.
 Last, we do have $\Val^*_{\objreach(T)}(M_i',s_0)=1$ since~$\sigma$ is a
 witnessing strategy.
 Therefore, the algorithm answers positively on this instance.
\end{proof}

\section{Double end-components}
\label{section:dec}
End-components play an important role in the analysis of
MDPs~\cite{DeAlfaro-phd97}.
Because the probability distributions in different environments of an MEMDP
can have different supports, we need to adapt the notion for MEMDPs.
We thus introduce \emph{double end-components} which are sub-MDPs that
are end-components in both environments. 
We show that one can \emph{learn} inside double end-components, and use
these observations to study limit-sure objectives.

Formally, 
	given a MEMDP $M=(S,A,\delta_1,\delta_2,r)$, 
	a \emph{double end-component (DEC)} 
	is a pair $(S',A')$ where~$S' \subseteq S$, and $A' \subseteq A$
	such that $(S',A')$ is an end-component in each~$M_i$.
A double end-component $(S',A')$ is \emph{distinguishing} if there exists $(s,a)
\in S' \times A'$ such that $\delta_1(s,a) \neq \delta_2(s,a)$.
The union of two DECs with a common state is a DEC; we
consider maximal DECs (MDEC).
MDECs can be computed in polynomial time by first eliminating from $M$ all actions with
different supports, and then computing the MECs in the remaining MDPs.
A DEC is \emph{trivial} if it is an absorbing state.

Under Assumption~\ref{assum:absorbing}, 
for reachability objectives, a DEC is \emph{winning} if it is an
absorbing state winning for the objective.
A DEC~$D$ is \emph{winning} for a parity objective~$\Phi$, 
if there exists a strategy compatible with~$D$ satisfying $\Phi$ almost surely;
Lemma~\ref{lemma:mdp-mec-parity} shows that a common strategy exists for both
environments.

We first solve the problems of interest in distinguishing DECs
up to any error bound~$\epsilon$. 
The idea is that in a distinguishing DEC, one can learn the environment by
sampling the distribution of distinguishing actions.
\begin{lemma}
	\label{lemma:strategy-for-mec-as}
	Consider any MEMDP~$M=(S,s_0,A,\delta_1,\delta_2)$, a distinguishing double
	end-component $D = (S',A')$, state $s \in S'$,~$\epsilon>0$, and any 
	objective $\Phi$ reachability, safety, parity.
	For any~$\epsilon>0$, there exists a strategy~$\sigma$ such that
	\(
		\pr_{M_i,s}^\sigma[\Phi] \geq (1-\epsilon)\Val_{\Phi}^*(M_i,s), \forall i=1,2.
	\)
\end{lemma}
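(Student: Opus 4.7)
The plan is to design $\sigma$ in two phases: a finite \emph{learning} phase that remains inside $D$ and statistically identifies the true environment, followed by an \emph{exploitation} phase that switches to a pure memoryless optimal strategy for $\Phi$ in the identified environment. Since $D$ is distinguishing, pick $(s^*,a^*)\in S'\times A'$ with $\delta_1(s^*,a^*)\neq \delta_2(s^*,a^*)$; because $D$ is non-trivial, the two distributions share the same support, so there exists $t\in S'$ with $p_1 := \delta_1(s^*,a^*,t) \neq \delta_2(s^*,a^*,t) =: p_2$, and we set $\eta = |p_1-p_2|$. A Chernoff--Hoeffding bound then gives $N = O(\eta^{-2}\log(1/\epsilon))$ such that, after observing $N$ independent outcomes of $a^*$ fired from $s^*$, thresholding the empirical frequency of $t$ at $(p_1+p_2)/2$ correctly identifies the environment with probability at least $1-\epsilon$.

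To implement this, I would use the fact that $D$ is an end-component in both $M_1$ and $M_2$: there exists a strategy $\sigma_D$ compatible with $D$ that from every state of $S'$ visits $(s^*,a^*)$ almost surely infinitely often. The learning phase runs $\sigma_D$ until $(s^*,a^*)$ has been visited $N$ times (a.s.\ finite in both environments) and records how often the next state was $t$; this yields an estimate $\hat i \in \{1,2\}$. From the state $s'\in S'$ reached at the end of the learning phase, $\sigma$ then plays a pure memoryless strategy optimal for $\Phi$ in $M_{\hat i}$, which exists for reachability, safety, and parity.

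For correctness, the crucial observation is that $\Val^*_\Phi(M_i,\cdot)$ is constant on $S'$. For parity this follows from Lemma~\ref{lemma:mdp-mec-parity} applied within the MEC of $M_i$ containing $D$; for reachability and safety it holds because from any state of $S'$ one can almost surely reach any other state of $S'$ using actions of $A'$, so the exit options of $D$ available in $M_i$, and hence the optimal continuation value, are the same from every state of $S'$. Moreover the learning phase cannot harm the objective: by Assumption~\ref{assum:absorbing}, target and unsafe states are absorbing and therefore absent from a non-trivial DEC, while parity is a tail event insensitive to a finite prefix. Consequently, with probability at least $1-\epsilon$ we have $\hat i = i$, and the exploitation phase then achieves $\Val^*_\Phi(M_i,s') = \Val^*_\Phi(M_i,s)$, which yields $\pr_{M_i,s}^\sigma[\Phi]\geq (1-\epsilon)\Val^*_\Phi(M_i,s)$ for each $i=1,2$. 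The main obstacle is justifying value-constancy on $S'$ uniformly across the three objective classes; the parity case is handed to us by Lemma~\ref{lemma:mdp-mec-parity}, but the reachability case requires the argument that the EC's navigation freedom equalizes exit prospects rather than merely lower-bounding them up to some navigation cost.
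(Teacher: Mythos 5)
Your proposal is correct and follows essentially the same route as the paper: a learn-then-exploit strategy that samples a distinguishing edge $O(\eta^{-2}\log(1/\epsilon))$ times, applies Hoeffding's inequality to guess the environment with error at most $\epsilon$, and then switches to a pure memoryless optimal strategy for the guessed $M_i$. Your explicit justification that $\Val^*_\Phi(M_i,\cdot)$ is constant on $S'$ (via intra-EC navigation for reachability/safety and Lemma~\ref{lemma:mdp-mec-parity} for parity) is a welcome addition, since the paper's proof silently uses this when it writes $\Val_\Phi^*(D)$.
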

\begin{proof}

	Fix $(s,a) \in S' \times A'$ such that $\delta_1(s,a) \neq \delta_2(s,a)$.
	The strategy runs in two rounds. In the first round, the
	goal is to sample the distribution of the edge $(s,a)$. For this, it suffices
	to execute a strategy that chooses each available action compatible with~$D$ uniformly at random, 
	and upon arrival to state~$s$, to choose action~$a$, and store the number of times the next
	state is~$s'$.
	After~$K$ visits to~$s$, we make a guess about the current MDP
	depending on the sampled value. The second round of
	the strategy is the memoryless optimal strategy in one of the~$M_i$.
	When $K$ is chosen sufficiently large, we obtain the desired result.
	
	Let us denote $d_i=\delta_i(s,a,s')$ for some~$s'$ satisfying $d_1\neq d_2$,
	and assume w.l.o.g. that $d_1 < d_2$.
	For any~$\epsilon>0$, let~$K = 2\frac{\log(1/\epsilon)}{(d_2-d_1)^2}$,
	and let~$f$ be a memoryless strategy which chooses uniformly at
	random all actions except action~$a$ is picked at~$s$ deterministically.
	Under~$f$, each state is visited infinitely often almost surely.
	We define $f_K$ by augmenting~$f$ with memory as follows. Informally, $f_K$ has two counters:
	$c_{s,a}$ counting the number of visits at~$s$, and $c_{s,a,s'}$ counting the
	occurence of edge $(s,a,s')$. Hence, at each visit at~$s$, we have a
	Bernouilli trial with mean $\delta_i(s,a,s')$ (for each~$i$), and~$c_{s,a,s'}$ is the
	number of successful trials.
	It is clear that the ratio $c_{s,a,s'}/c_{s,a}$ should go to $\delta_i(s,a,s')$ inside
	each~$M_i$. We execute this strategy until $c_{s,a}=K$, which happens almost
	surely.
	We complete the description of strategy~$f_K$ by extending it,
	once $c_{s,a,s'}=K$ is reached,
	with the optimal memoryless strategy $\text{opt}_1$ for~$M_1$ if
	$\frac{c_{s,a,s'}}{c_{s,a}} \leq \frac{d_1+d_2}{2}$,
	and $\text{opt}_2$, the one for $M_{2}$ otherwise.

	By Hoeffding's inequality, we have
	\[
		\pr_{M_1,s}^{f_K}[\frac{c_{s,a,s'}}{c_{s,a}} \geq d_1 +
		\frac{d_2-d_1}{2}\mid c_{s,a} =K]
		\leq e^{-2K\frac{d_2-d_1}{2}^2} \leq \epsilon.
	\]
	and
	\[
		\pr_{M_2,s}^{f_K}[\frac{c_{s,a,s'}}{c_{s,a}} \leq d_2  - \frac{d_2-d_1}{2}\mid c_{s,a} =K]
		\leq e^{-2K\frac{d_2-d_1}{2}^2} \leq \epsilon.
	\]

	We now compute the values under strategy~$f_K$, distinguishing whether the sampled frequency stays
	within the given radius or not. In the first case, the objective is satisfied
	with probability $\Val_\Phi^*(D)$,
	and in the second case, with probability at least~$0$.
	It follows that $\pr_{M_i,s}^{f_K}[\Phi] \geq (1-\epsilon)\Val_{\Phi}^*(D)$.
	Note that the memory requirement is $K^2$, since we store the pairs $(c_{s,a},
	c_{s,a,s'})$.
\end{proof}
\begin{remark}
	The algorithm can be improved in practice as follows. Let~$S'$ denote the set
	of states of the end-component which have distinguishing actions. For any
	state~$s \in S'$, fix a distinguishing action $a_{s}$. For any $s'$ such
	that $\delta_1(s,a_s,s')\neq \delta_2(s,a_s,s')$, write 
	$K_{s,a_s,s'}$ the above constant computed for this edge.
	We apply the following strategy: at any state~$s \in S'$ play~$a_{s}$,
	and sample the distribution. At any state~$s \not \in S'$, pick an action uniformly
	at random.
	Now, we run this strategy until we collected $K_{s,a_s,s'}$ samples for some
	action~$a_s$.
	Note that if $S'$ is a singleton, this does not improve the lemma's proof.

	What expected time can we guarantee until the environment is guessed with
	prob. $1-\epsilon$? Let $T(s',s)$ denote the expected time to reach state~$s$
	from~$s'$ under the uniform strategy
	\footnote{Note that since we do not know the exact
		distributions, we cannot minimize the expected time 
			using an optimal strategy here.}, and let $T(s) = \max_{s'} T(s',s)$.
	If~$s$ denotes a state with a distinguishing action, such that 
	$\eta = |\delta_1(s,a,s') - \delta_2(s,a,s')|$, then the above algorithm
	switches to a pure optimal strategy in expected
	$O(T(s)\frac{\log(1/\epsilon)}{\eta^2})$ time.
\end{remark}

We now consider general MEMDPs, and define a transformation by contracting
DECs. The transformation preserves, up to any desired~$\epsilon$, the
probabilities of objectives, thanks to Lemma~\ref{lemma:strategy-for-mec-as}.

Given a DEC $D=(S',A')$, a \emph{frontier state}~$s$ of~$D$
is such that there exist~$a \in A(s) \setminus A'(s)$, $i \in \{1,2\}$,
and $s' \not \in S'$ such that $\delta_i(s,a,s') \neq 0$.
An action~$a \in A(s) \setminus A'(s)$ is a \emph{frontier action} for~$D$.
A pair $(s,a)$ is called \emph{frontier state-action} when~$a \in A(s)$ is
a frontier action.

\begin{definition}
	\label{def:tilde}
	Given a MEMDP $M = (S,A,\delta_1,\delta_2)$, and reachability or safety objective
	$\Phi$, we define 
	$\hat{M}=(\hat{S},\hat{A},\hat{\delta_1},\hat{\delta_2})$ as
	follows.
	\begin{inparaenum}[a)]
		\item 
		Any distinguishing MDEC~$D$ is contracted as in
		Fig.~\ref{fig:winning-dec}
		where in~$M_i$, action~$a$ leads to new states~$W_D$
		with probability $v_i=\Val_{\Phi}^*(M_i,D)$, and to $L_D$ with 
		probability $1-v_i$.
		\item 
		Any non-distinguishing MDEC~$D = (S',A')$ is replaced with the
		module in Fig.~\ref{fig:losing-dec}.
		The actions $a_D^\$$ and $\{f_ia_i\}_{(f_i,a_i) \in F}$ are available 
		from~$s_D$ where $F$ is the set of pairs of frontier state-actions of~$D$.
		For any~$(f_i,a_i)$,
		the distribution $\hat{\delta}_j(s_D,f_ia_i)$ is obtained from
		$\delta_j(f_i,a_i)$ by redirecting to~$s_D$ all edges that
		lead inside~$S'$. 
	\end{inparaenum}

	We define the new objective $\hat{\Phi}$ by restricting~$\Phi$
	to~$\hat{S}$, and adding all states~$W_D$ in the target (resp. safe) set.
\end{definition}

We denote by~$\hatabstr: S\rightarrow \hat{S}$ the mapping from the states
of~$S$ to that of~$\hat{S}$ defined by the above transformation, mapping
any state~$s$ of a DEC~$D$ is to~$s_D$, and any other state to itself.
We will also denote $\hat{s} = \hatabstr(s)$.

\begin{figure}[h]
	\centering
	\begin{subfigure}[b]{0.4\textwidth}
	\begin{tikzpicture}
		\tikzstyle{every state}=[node distance=1cm,minimum size=10pt, inner sep=1pt];
		\node[state] at (0,0) (A){$s_D$};
		\node[state,right of=A,fill=black,minimum size=3pt] (Aa) {};
		\node[state] at ($(Aa)+(1,0.5)$) (B){$W_D$};
		\node[state] at ($(Aa)+(1,-0.5)$) (C){$L_D$};
		\path[-latex']
			(A) edge node[above]{$a_D^\$$} (Aa)
			(Aa) edge node[above,sloped] {\small $v_i$}(B)
			(Aa) edge node [below,sloped] {\scriptsize $1-v_i$}(C)
			(B) edge[loop right] (B)
			(C) edge[loop right] (C);
	\end{tikzpicture}
	\caption{Reducing distinguishing DECs, where $v_i=\Val_{\phi_i}^*(M_i,D)$.}
	\label{fig:winning-dec}
\end{subfigure}
	\qquad
	\begin{subfigure}[b]{0.4\textwidth}
		\begin{tikzpicture}
			\tikzstyle{every state}=[node distance=1cm,minimum size=10pt, inner sep=1pt];
			\node[state] at (0,0) (A){$s_D$};
			\node[state,left of=A,fill=black,minimum size=3pt] (Aa) {};
			\node[state] at ($(Aa)+(-1.3,0.5)$) (B){$W_D$};
			\node[state] at ($(Aa)+(-1.3,-0.5)$) (C){$L_D$};
			\node[state,above right of=A,fill=black,minimum size=3pt] (F3){};
			\node[state,below of=A,fill=black,minimum size=3pt] (F1a) {};
			\node[state,below right of=A,fill=black,minimum size=3pt] (F2a) {};
			\path[-latex']
				(A) edge node[above]{$a_D^\$$} (Aa)
				(A) edge node[right]{$s_m a_m$}(F3)
				(A) edge node[below left]{$s_1a_1$} (F1a)
				(A) edge node[above right]{$s_2a_2$} (F2a)
				(F1a) edge[gray,bend right] (A)
				(F2a) edge[gray,bend right] (A)
				(F1a) edge[gray] ($(F1a)+(-0.5,-0.1)$)
				(F2a) edge[gray] ($(F2a)+(1,0)$)
				(F2a) edge[gray] ($(F2a)+(0.6,-0.2)$)
				(F3) edge[gray] ($(F3)+(0.5,0)$)
			(Aa) edge node[above,sloped] {\small $\restr{v_i}{D}$}(B)
			(Aa) edge node [below,sloped] {\scriptsize $1-\restr{v_i}{D}$}(C)
			(B) edge[loop left] (B)
			(C) edge[loop left] (C);
		\end{tikzpicture}
		\caption{Reduction of non-distinguishing DECs,
		where $\restr{v_i}{D} = \restr{\Val_{\Phi}^*}{D}(M_i,D)$. }
		\label{fig:losing-dec}
	\end{subfigure}
\end{figure}

The intuition is that when the play enters a distinguishing DEC~$D$, by
applying Lemma~\ref{lemma:strategy-for-mec-as}, we can arbitrarily approximate
probabilities~$v_i = \Val_{\Phi}^*(M_i,D)$.
From a state~$s$ in a non-distinguishing component~$D$ in~$M$, the play either stays
forever inside and obtain the value $\restr{\Val_{\Phi}^*}{D}(M_1,s) =
\restr{\Val_{\Phi}^*}{D}(M_2,s)$ (as it is non-distinguishing), or it eventually leaves~$D$.
The first case is modeled by the
action~$a_D^\$$, and the second case by the remaining actions leading to frontier
states. Note that there is a strategy under which, from any state of~$D$, in
$M_1$ and~$M_2$, all states and actions of~$D$ are visited infinitely often
(by considering a memoryless strategy choosing all actions
uniformly at random -- see \textit{e.g.} \cite{Puterman-wiley94}). 
We will use this construction for reachability and safety objectives;
while a specialized construction based on~$\hat{M}$ will be defined for parity
objectives.

The point in defining~$\hat{M}$ is to eliminate all non-trivial DECs:
\begin{lemma}
	\label{lemma:mec-transient}
	Let~$D$ be a maximal end-component of~$\hat{M}_i$.
	Then either~$D$ is a trivial DEC, or~$D$ is transient in $\hat{M}_{3-i}$.
\end{lemma}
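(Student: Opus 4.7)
My plan is a proof by contradiction resting on the structural claim that $\hat M$ contains no non-trivial DEC. From this claim the lemma follows, because a non-trivial MEC of $\hat M_i$ whose state set is not transient in $\hat M_{3-i}$ would force the existence of a non-trivial DEC of $\hat M$.

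First I would establish the structural claim. Let $(T, B)$ be a non-trivial DEC of $\hat M$, and case split on whether $T$ contains a gadget state $s_D$. If $s_D$ is the gadget of a distinguishing MDEC, the only available action $a_D^\$$ at $s_D$ leads deterministically to the absorbing set $\{W_D, L_D\}$, so $s_D$ cannot participate in any non-trivial EC. If $s_D$ is the gadget of a non-distinguishing MDEC $D = (S', A')$, then $B(s_D)$ must consist solely of frontier actions (since $a_D^\$$ again leaks into absorbing states). Lifting $(T, B)$ back to $M$, by replacing each gadget $s_D$ with the full state-action set $(S', A')$ and reinterpreting frontier actions of $B$ as their original actions, yields a sub-MDP that is closed under transitions and strongly connected in both environments, with state set strictly containing $S'$; this contradicts the maximality of $D$ as an MDEC of $M$. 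Finally, if $T$ contains no gadget state, then $(T, B)$ already coincides with a sub-MDP of $M$ and so is a non-trivial DEC of $M$ composed entirely of non-MDEC states, again contradicting that every DEC sits inside some MDEC.

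Given the structural claim, I would conclude as follows. Suppose $D = (S'', A'')$ is a non-trivial MEC of $\hat M_i$. By the MEC convergence theorem of~\cite{DeAlfaro-phd97}, $S''$ is transient in $\hat M_{3-i}$ iff $S''$ is disjoint from the state sets of all MECs of $\hat M_{3-i}$. Suppose, for contradiction, that some state $s \in S''$ belongs to a MEC $E = (S_E, A_E)$ of $\hat M_{3-i}$. I would then extract a non-trivial DEC of $\hat M$ from the pair $(D, E)$: start with $Y = S'' \cap S_E$ and, at each $t \in Y$, retain only the actions $a \in A''(t) \cap A_E(t)$ whose supports in both environments lie in $Y$; iteratively remove states with no surviving action and update $Y$ until a fixpoint is reached. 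The strong connectivity of $D$ in $\hat M_i$ and of $E$ in $\hat M_{3-i}$, both through the common state $s$, should force the fixpoint to be non-empty and to carry a non-trivial strongly connected sub-MDP, which is a non-trivial DEC of $\hat M$; this contradicts the structural claim.

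The main obstacle is the last step: arguing that the simultaneous pruning stabilizes with a non-empty strongly connected subsystem, given that the action sets $A''$ (safe in $\hat M_i$) and $A_E$ (safe in $\hat M_{3-i}$) may disagree. A robust way is to combine MEC-strategies from $\hat M_i$ and $\hat M_{3-i}$ through $s$ into a common cycle whose actions are precisely those surviving pruning; alternatively one can invoke a simultaneous MEC-decomposition algorithm applied to $S'' \cap S_E$, whose fixpoint is guaranteed non-empty whenever a common state lies in both MECs.
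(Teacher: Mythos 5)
Your structural claim --- that $\hat{M}$ contains no non-trivial DEC --- is correct, and your three-way case analysis (gadget of a distinguishing MDEC, gadget of a non-distinguishing MDEC with the lifting back to $M$, no gadget at all) is essentially a more careful rendering of the first paragraph of the paper's proof, which argues that a non-trivial DEC of $\hat{M}$ would have to contain some contracted state $s_E$ and could then be merged with $E$ to contradict the maximality of $E$ as an MDEC. Up to that point the proposal is sound.

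The gap is exactly where you flag it, and it is not a repairable technicality: the intersection of a non-trivial MEC $D=(S'',A'')$ of $\hat{M}_i$ with a MEC $E=(S_E,A_E)$ of $\hat{M}_{3-i}$ need not contain any DEC, because at a common state the two components may use \emph{disjoint} action sets. Concretely, take states $s,t$ with $a$ at $s$ leading to $t$ in both environments, an action $b$ at $t$ with $\delta_1(t,b,s)=1$ but $\delta_2(t,b)=\tfrac12 s+\tfrac12 r_2$ for some $r_2\in R_2$, and an action $c$ at $t$ with $\delta_2(t,c,s)=1$ but $\delta_1(t,c)=\tfrac12 s+\tfrac12 r_1$ for some $r_1\in R_1$; this respects the revealed form, contains no non-trivial DEC (so it survives the construction of $\hat{M}$ unchanged), yet $(\{s,t\},\{a,b\})$ is a non-trivial MEC of $\hat{M}_1$ and $(\{s,t\},\{a,c\})$ one of $\hat{M}_2$. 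Your pruning of $Y=\{s,t\}$ with actions $A''\cap A_E$ empties out immediately ($t$ keeps no action), and no ``common cycle through $s$'' or simultaneous MEC decomposition can produce what does not exist. This example also shows that the \emph{state set} of $D$ need not be visited only finitely often under arbitrary strategies of $\hat{M}_{3-i}$, so the characterization of transience you invoke (disjointness from all MECs of $\hat{M}_{3-i}$) aims at a target your argument cannot reach. What the paper's proof actually establishes --- and what is used downstream in Lemma~\ref{lemma:mec-transient-K}, which only speaks of histories \emph{compatible with} $D$ --- is the weaker fact that $\hat{M}_{3-i}$ has no end-component that is a \emph{sub-MDP of $D$}, i.e., one reusing only $D$'s own actions: for such a $D'$, every state-action can have no revealing edge in either direction (it would either place an absorbing state inside the strongly connected, non-trivial $D$, or break $\delta_{3-i}$-closedness of $D'$), hence has equal supports in both environments, hence $D'$ would be a non-trivial DEC of $\hat{M}$, contradicting your structural claim. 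You should therefore restrict the second step to end-components $D'$ with states and actions inside $D$, and conclude only that plays compatible with $D$ leave its state set almost surely in $\hat{M}_{3-i}$.
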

	\begin{proof}
		Assume that~$D$ is an end-component of~$\hat{M}_{3-i}$. Then~$D$ is a double
		end-component by definition. If~$D$ is a self-loop, then it is an absorbing
		state and we are done.
		Otherwise, $D$~must contain some state $s_E$ of~$\hat{M}$ created by 
		contracting MDEC~$E$
		since otherwise~$D$ would have been contracted itself by
		definition of~$\hat{M}$.
		But then $D\cup E$ is a DEC larger than~$D$, which is a contradiction.
		Thus, $D$ cannot be an end-component of~$\hat{M}_{3-i}$ unless it is one absorbing state.
		
		Assuming~$D$ is not an end-component of~$\hat{M}_{3-i}$, either~$D$ is not strongly connected,
		or it is not $\delta_{3-i}$-closed.
		Observe first that $D$ does not contain $i$-revealing edges in~$\hat{M_i}$ since otherwise, by
		construction of~$\hat{M_i}$, it contain an absorbing state and not be
		strongly connected in~$\hat{M_i}$.
		We show that~$D$ must also be strongly connected in~$\hat{M}_{3-i}$. In fact, assume otherwise and
		consider two states~$s$ and~$t$ such that~$t$ is not reachable from~$s$
		in~$\restr{\hat{M}_{3-i}}{D}$. Along the run from~$s$ to~$t$, $\hat{M_i}$
		must have an edge that is absent from~$\hat{M}_{3-i}$, which is an
		$i$-revealing edge;
		contradiction. Therefore, $D$ is strongly connected and not
		$\delta_{3-i}$-closed in~$\hat{M}_{3-i}$.

		We now show that under any strategy in~$\hat{M}_{3-i}$, the play eventually
		leaves~$D$
		almost surely. It suffices to show that $\hat{M}_{3-i}$
		has no end-component inside~$D$. Let~$D'\subseteq D$ be such an end-component.
		Then~$D'$ does not contain $3-i$-revealing edges; in fact, we know that~$D$ is
		strongly connected, and a~$3-i$-revealing edge means an absorbing state
		inside~$D$. 
		Note that $D'$ does not contain $3-i$-revealing state-actions
		neither since these would lead outside~$D$, and~$D'$ would not be
		$\delta_{3-i}$-closed.
		This means that the sub-MDP~$D'$ has the same support in both $\hat{M}_j$,
		hence it is also an end-component of~$\hat{M}_i$, hence~$D'$ is a double
		end-component.
		But this is only possible, by construction
		of~$\hat{M}$, if~$D=D'$ is an absorbing state.
	\end{proof}

	The following lemma refines the above one.
	\begin{lemma}
		\label{lemma:mec-transient-K}
		For any~$M$, and $\epsilon>0$, define $K=n\lceil
		\frac{\log(\epsilon)}{\log(1-p^n)}\rceil$,
		where~$p$ is the smallest nonzero probability of~$M$, and~$n$ the number of
		states.
		for any end-component $D$ of~$\hat{M}_i$ that is not a DEC,
		and any history~$h \in \calH(\hat{M})$ which contains a factor of length $K$
	compatible with~$D$, $\pr_{\hat{M}_{3-i},s}^\tau[h] \leq \epsilon$ for any
	strategy~$\tau$ and state~$s$.
\end{lemma}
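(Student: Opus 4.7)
The statement is a quantitative refinement of Lemma~\ref{lemma:mec-transient}: since $D$ is an end-component of $\hat{M}_i$ but not a DEC, Lemma~\ref{lemma:mec-transient} shows $D$ is transient in $\hat{M}_{3-i}$, and in particular $\hat{M}_{3-i}$ has no end-component contained in $D$ (any such component would contradict transience). The plan is to turn this qualitative fact into a concrete escape rate of at least $p^n$ over every block of $n$ consecutive steps inside $D$, and then chain $K/n$ such blocks to obtain the target bound $\epsilon$ on the probability of the given $K$-long factor.

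The heart of the proof is the following claim: for every $s \in D$ and every strategy $\tau$, $\pr_{\hat{M}_{3-i},s}^{\tau}[\text{the first $n$ steps stay in } D] \leq 1 - p^n$. I would establish this via the standard iterative computation of the almost-sure safety set of $D$ in $\hat{M}_{3-i}$: set $V_0 = D$ and $V_{k+1} = \{s \in V_k \mid \exists a \in A(s),\ \supp(\hat{\delta}_{3-i}(s,a)) \subseteq V_k\}$. The sequence is decreasing inside $D$, hence stabilises in at most $n$ steps. Its limit $\bigcap_k V_k$ must be empty: otherwise the induced sub-MDP on the limit would equip every state with an action whose support stays in the limit, and by the standard fact that every non-empty such sub-MDP contains an end-component, it would contain an end-component of $\hat{M}_{3-i}$ included in $D$, contradicting transience. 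Hence $V_n = \emptyset$. Assign to each $s \in D$ the rank $r(s) = \min\{k \mid s \notin V_k\} \in \{1, \ldots, n\}$, and prove by strong induction on $k$ that, for every state of rank at most $k$ and every strategy, the play leaves $D$ within $k$ steps with probability at least $p^k$: for $k=1$, a rank-$1$ state has every action's support meeting the complement of $D$, giving immediate escape probability $\geq p$; for the step from $k$ to $k+1$, any action at a rank-$(k+1)$ state either escapes $D$ in one step with probability $\geq p$, or puts probability $\geq p$ on a successor of rank $\leq k$, after which the induction hypothesis yields $p^k$ more, for a total of $p^{k+1}$.

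Chaining then finishes the proof: slice a window of $K$ consecutive in-$D$ steps into $K/n$ disjoint blocks of length $n$. Applying the claim block-by-block via the tower rule (conditioning on the prefix ending in the current $D$-state), the probability that the play remains in $D$ throughout is at most $(1-p^n)^{K/n}$, which is $\leq \epsilon$ by the defined value of $K$. Finally, because realising $h$ in $\hat{M}_{3-i}$ forces the play to traverse the given $K$-long factor compatible with $D$, decomposing $h = h_1 g h_2$ and conditioning on the prefix $h_1$ yields $\pr_{\hat{M}_{3-i},s}^{\tau}[h] \leq \pr_{\hat{M}_{3-i},s}^{\tau}[h_1]\cdot(1-p^n)^{K/n} \leq \epsilon$ for every $\tau$ and $s$. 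The main technical obstacle is making the rank-based inductive bound precise; once that is in place, the chaining and the reduction to the specific factor inside $h$ are routine applications of the tower rule.
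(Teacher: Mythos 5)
Your proof is correct and follows essentially the same route as the paper: the paper's argument also rests on the claim that, since $D$ contains no end-component of $\hat{M}_{3-i}$, the play escapes $D$ within $n$ steps with probability at least $p^n$ under any strategy, and then chains $K/n$ such blocks to get the bound $(1-p^n)^{K/n}\leq\epsilon$. The only difference is that the paper states the $n$-step escape bound without justification, whereas you supply the standard rank-based induction for it; that is a welcome elaboration rather than a divergence.
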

\begin{proof}
	We know that $D$ does not contain an end-component in~$\hat{M}_{3-i}$.
	If $p$ denotes the smallest nonzero probability in~$\hat{M}$,
	then from any state~$s \in D$, the probability of leaving~$D$ after $n$ steps
	is at least $p^n$ under any strategy. So in~$K$ steps, the probability of
	leaving~$D$ is at least
	\(
		\sum_{i=0}^{K/n} (1-p^n)^ip^n = \frac{1- (1-p^n)^{K/n+1}}{p^n}p^n
		= 1 - (1-p^n)^{k+1}.
		\)
	which is at least $1-\epsilon$.
\end{proof}

In order to prove the ``equivalence'' of~$M$ and~$\hat{M}$ for objectives of
interest, we define a correspondance between histories of $M$ and~$\hat{M}$
which is, roughly, the projection defined by our transformation.
We distinguish the set $\calT(\hat{M}) = \{s_D \mid D$ $\text{distinguishing}\}$.
For any history $h= s_1a_1s_2a_2\ldots s_n \in \mathcal{H}(M)$,
let us define $\red(s_1a_1s_2a_2\ldots s_n) \in \mathcal{H}(\hat{M})$ by applying the
following transformations until a fixpoint is reached:

\begin{enumerate}
	\item If $h$ contains a state of~$\hatabstr^{-1}(\calT(\hat{M}))$, then if $i$ denotes the least index
	with $s_i \in \hatabstr^{-1}(\calT(\hat{M}))$, we remove the suffix $a_is_{i+1}\ldots s_n$.

	\item For any non-distinguishing MDEC~$D$, let $s_ia_i\ldots
	s_{i+k}$ be a maximal factor made of the states of~$D$. We remove from this
	factor all non frontier actions and states that precede. We project all states
	to~$s_D$, and any action~$a_{\alpha_j}$ from state~$s_{\alpha_j}$ to action
	$(s_{\alpha_j}a_{\alpha_j})$. We obtain a run of the form
	$s_D(s_{\alpha_1}a_{\alpha_1})s_D\ldots s_D (s_{\alpha_m}a_{\alpha_m})$ where each
	$s_{\alpha_i}$ is a frontier state, and $a_{\alpha_i}$ a frontier action
	from~$s_{\alpha_i}$.
\end{enumerate}

Let~$\calH_\calT(\hat{M})$ denote the histories of~$\hat{M}$ which does not
contain $\calT(\hat{M})$ except possibly on the last state.
The following lemma establishes the relation between~$M$ and~$\hat{M}$.

\begin{lemma}
	\label{lemma:wtsigma}
	For any MEMDP~$M$, state~$s$, strategy~$\sigma$, 
	there exists a strategy~$\hat{\sigma}$ such that
	for any history $h \in \calH_{\calT}(\hat{M})$,
	and any non-distinguishable MDEC~$D$,
	we have 
			$\pr_{\hat{M}_j,\hat{s}}^{\hat{\sigma}}(h)=
			\pr_{M_j,s}^\sigma[\red^{-1}(h) ],
			\pr_{\hat{M}_j,\hat{s}}^{\hat{\sigma}}(ha)=
			\pr_{M_j,s}^\sigma[\red^{-1}(ha) ]$, and
			$\pr_{\hat{M}_j,\hat{s}}^{\hat{\sigma}}[h a_D^\$]=
			\pr_{M_j,s}^\sigma[\red^{-1}(h D^\omega)].
			$
\end{lemma}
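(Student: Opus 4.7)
I would construct $\hat{\sigma}$ by induction on the length of histories in $\hat{M}$, taking care that its action distribution at each point aggregates, in the right way, the behaviour of $\sigma$ over the entire preimage set $\red^{-1}(h)$. The three equalities are then proved jointly by induction on $|h|$, with the non-distinguishing MDEC case being the non-trivial one.

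\textbf{Construction of $\hat{\sigma}$.} Fix $h \in \calH_\calT(\hat{M})$ ending at some $\hat{t}$. I distinguish two cases. If $\hat{t} \neq s_D$ for any non-distinguishing MDEC $D$, then by the definition of $\red$ every element of $\red^{-1}(h)$ ends at a unique state $t$ with $\hatabstr(t)=\hat{t}$ and the available actions at $\hat{t}$ in $\hat{M}$ coincide with those at $t$ in $M$; I set $\hat\sigma(h)(a)$ to be the $\sigma$-probability of playing $a$ conditioned on $\red^{-1}(h)$, i.e.\ $\sum_{h' \in \red^{-1}(h)} \sigma(h')(a) \cdot \pr^{\sigma}_{M_j,s}[h' \mid \red^{-1}(h)]$. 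If $\hat{t}=s_D$ for a non-distinguishing MDEC $D$, the available actions are $a_D^\$$ and the frontier state-actions $(f,a)\in F$; I set
\[
\hat\sigma(h)((f,a)) \;=\; \pr^{\sigma}_{M_j,s}\!\bigl[\text{the first exit from } D \text{ uses } (f,a) \,\big|\, \red^{-1}(h)\bigr]
\]
and $\hat\sigma(h)(a_D^\$) = \pr^{\sigma}_{M_j,s}[\text{play stays in } D \text{ forever} \mid \red^{-1}(h)]$. The critical point is that because $D$ is non-distinguishing the internal Markov chain induced by $\sigma$ restricted to $D$ is identical under $\delta_1$ and $\delta_2$ (same supports, same probabilities); consequently these conditional exit/stay probabilities do not depend on $j\in\{1,2\}$, so $\hat\sigma$ is a well-defined single strategy.

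\textbf{Proof of the three equalities.} I would proceed by simultaneous induction on the length of $h$. The base case $h=\hat{s}$ is immediate since $\red^{-1}(\hat{s})=\{s\}$. For the inductive step, extending $h$ in $\hat{M}$ can happen in three ways that mirror the clauses of the lemma: (i) through an edge that is not a contracted MDEC action, in which case both sides multiply by the same factor $\hat\sigma(h)(a)\cdot \hat\delta_j(\hat{t},a,\hat{u}) = \sum \sigma(h')(a)\delta_j(t,a,u)$ computed over $\red^{-1}(h)$, and the equality propagates; (ii) through a frontier action $(f,a)$ of a non-distinguishing MDEC $D$, where I use the strong Markov property at the entry into $D$, sum the $M_j$-probabilities of all internal excursions $s_D \cdots f$ inside $D$, and identify this with the conditional exit probability used to define $\hat\sigma(h)((f,a))$; the fact that $\hat\delta_j(s_D,(f,a)\cdot)$ is obtained from $\delta_j(f,a)$ by collapsing in-DEC transitions to $s_D$ ensures the combinatorics match; (iii) through $a_D^\$$, where by construction $\pr^{\hat\sigma}_{\hat{M}_j}(h\,a_D^\$) = \pr^{\hat\sigma}_{\hat{M}_j}(h) \cdot \hat\sigma(h)(a_D^\$)$ and the right factor is exactly the $\sigma$-probability of never leaving $D$ conditioned on $\red^{-1}(h)$, which is $\pr^{\sigma}_{M_j,s}[\red^{-1}(h D^\omega) \mid \red^{-1}(h)]$.

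\textbf{Main obstacle.} The delicate point is case (ii): one must show that the ``collapsed'' one-step probability in $\hat{M}_j$ of taking $(f,a)$ from $s_D$ and reaching some $\hat{u} \notin \hatabstr(S'_D)$, weighted by $\hat\sigma$, equals the aggregated $M_j$-probability of all internal sojourns in $D$ that end by playing $a$ from $f$ and landing in $u$. This requires factorising $\pr^{\sigma}_{M_j,s}[\red^{-1}(h\,(f,a)\,\hat u)]$ as (probability of entering $D$ via the path prescribed by $h$) $\times$ (conditional probability of exiting via $(f,a)$ to $u$). The non-distinguishing hypothesis is used here twice: it makes the latter conditional probability coincide in both environments, and it makes the sum over internal paths in $D$ converge to precisely the frontier-exit mass, because the only other possibility is staying forever, which is accounted for by the separate $a_D^\$$ branch. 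Once case (ii) is settled, cases (i) and (iii) are routine, and the third equality in the lemma is just the special case of (iii).
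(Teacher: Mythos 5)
Your construction of $\hat{\sigma}$ by conditioning $\sigma$'s behaviour on the preimage $\red^{-1}(h)$, the observation that non-distinguishability makes these conditional probabilities independent of $j$, and the joint induction on $|h|$ are exactly the paper's proof. One caution on phrasing: for a frontier pair $(f,a)$ the weight $\hat\sigma(h)((f,a))$ must be the probability that the \emph{next frontier action taken} is $(f,a)$ (frontier actions may lead back into $D$ and are still recorded by $\red$), not the probability that the first exit from $D$ occurs via $(f,a)$; your ``main obstacle'' paragraph indicates you intend the former, which is the correct reading.
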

\begin{proof}
	Let us restate the equalities we are going to prove.
	\begin{equation}
		\label{eqn:wtinduct}
	\begin{array}{l}
			\pr_{\hat{M}_j,\hat{s}}^{\hat{\sigma}}(h)=
			\pr_{M_j,s}^\sigma[\red^{-1}(h) ],\\
			\pr_{\hat{M}_j,\hat{s}}^{\hat{\sigma}}(ha)=
			\pr_{M_j,s}^\sigma[\red^{-1}(ha) ],\\
			\pr_{\hat{M}_j,\hat{s}}^{\hat{\sigma}}[h a_D^\$]=
			\pr_{M_j,s}^\sigma[\red^{-1}(h D^\omega)].
	\end{array}
	\end{equation}

	Given~$\sigma$, we define $\hat{\sigma}$ as follows.
	For any history ending in~$\calH_\calT(\hat{M})$, $\hat{\sigma}$ is defined
	trivially.
	For any $a \in \hat{A}(h_i) \setminus \{a_D^\$\}_D$, define
	\[
		\hat{\sigma}(a \mid h_1\ldots h_i) = 
		\pr_{M_j,s}^\sigma(\red^{-1}(h_1\ldots h_ia) \mid
				\red^{-1}(h_1\ldots h_i)),
	\]
	for an arbitrary~$j$. These quantities do not depend on~$j$. In fact,
	$\pr_{M_j,s}^\sigma[\red^{-1}(h a) \mid \red^{-1}(h)]
	= \sum_{\pi \in \red^{-1}(h)} \pr_{M_j,s}^\sigma[\pi a \mid \pi]
		\pr_{M_j,s}^\sigma[\pi \mid h]
	= \sum_{\pi \in \red^{-1}(h)} \sigma(a\mid \pi)
		\pr_{M_j,s}^\sigma[\pi \mid h]$,
	and the latter factor does not depend on~$j$; since $\red^{-1}(h)$
	determines all outcomes of the actions whose distributions differ in
	both~$M_i$, and the distributions are identical in the remaining
	non-distinguishing double end components.

	For any $h_i = s_D$, where~$D$ is a non-distinguishing component, we
	let
	\[
		\hat{\sigma}(a_D^\$ \mid h_1\ldots h_i) = 
		\pr_{M_j,s}^\sigma(\red^{-1}(h_1\ldots h_iD^\omega) \mid \red^{-1}(h_1\ldots
					h_i)).
	\]
	We check that $\hat{\sigma}$ defines a probability distribution on
	available actions at any given history. For any state~$h_i\neq s_D$,
	probabilities $\hat{\sigma}(a \mid h_1\ldots h_i)$ clearly sum to~$1$ for~$a
	\in A(h_i)$. If $h_i = s_D$ for some non-distinguishing losing~$D$, any run
	that extends $h_1\ldots h_i$ either stays forever in~$D$, or takes one of the
	frontier actions for the first time. By definition, the former is the probability
	of~$\hat{\sigma}$ of choosing $a_D^\$$, and the latter that of choosing
	each~frontier action.
	
	We will prove \eqref{eqn:wtinduct} by induction on $i\geq 1$.

	For $i=1$, we have 
	$\pr_{\hat{M}_j,\hat{s}}^{\hat{\sigma}}(h_1)=
	\pr_{M_j,s}^\sigma[\red^{-1}(h_1)]$, which is~$1$ if $\hat{s}=h_1$
	and~$0$ otherwise.
	Furthermore, $\pr_{\hat{M}_j,\hat{s}}^{\hat{\sigma}}(h_1a_1)=
	\pr_{\hat{M}_j,\hat{s}}^{\hat{\sigma}}(h_1)\hat{\sigma}(a_1\mid
			h_1) = \pr_{M_j,s}^\sigma[\red^{-1}(h_1)]\cdot \pr_{M_j,s}^\sigma[\red^{-1}(h_1a_1)
				\mid \red^{-1}(h_1)] = \pr_{M_j,s}^\sigma[\red^{-1}(h_1a_1)]$.

	For $i>1$, we have
	\[
	\begin{array}{ll}
		\pr_{\hat{M}_j,\hat{s}}^{\hat{\sigma}}(h_1\ldots h_i)
		&= \pr_{\hat{M}_j,\hat{s}}^{\hat{\sigma}}
		[h_1\ldots h_{i-1}a_{i-1}] \hat{\delta}_j(h_{i-1},a_{i-1},h_i)
		\\&= \pr_{{M}_j,{s}}^{{\sigma}}
		[\red^{-1}(h_1\ldots h_{i-1}a_{i-1}) ]
				\hat{\delta}_j(h_{i-1},a_{i-1},h_i)
				\\&= \pr_{M_j,s}^\sigma[\red^{-1}(h_1\ldots h_{i-1}a_{i-1}h_i)].
	\end{array}
	\]
	The second line follows by induction, and the third line by definition 
	(explain).
	We have
	\[
	\begin{array}{ll}
	\pr_{\hat{M}_j,\hat{s}}^{\hat{\sigma}}(h_1\ldots h_i a_i)
	&= \pr_{\hat{M}_j,\hat{s}}^{\hat{\sigma}}(h_1\ldots h_i)
		\hat{\sigma}(a_i\mid h_1\ldots h_i)
		\\&=\pr_{M_{j},s}^\sigma[\red^{-1}(h_1\ldots h_i) ]
	\\&~\cdot
		\pr_{M_{j},s}^\sigma[\red^{-1}(h_1\ldots h_i a_i)  \mid
		\red^{-1}(h_1\ldots h_i)] 
	\end{array}
	\]
	The third equality is proved similarly.
\end{proof}


The equivalence between~$M$ and~$\hat{M}$ for reachability and safety objectives
is obtained as the following corollary. Note that the value vectors
are preserved although vectors achieved in~$\hat{M}$ may not be achievable
in~$M$.
\begin{corollary}
	\label{corollary:Mtilde}
	For any MEMDP~$M$, and $\Phi$
	a reachability or safety objective, $\Val^*_{\Phi}(M,s) =
	\Val^*_{\hat{\Phi}}(\hat{M},\hat{s})$. 
\end{corollary}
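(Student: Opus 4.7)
The plan is to prove the vector equality by establishing both inequalities, building strategy translations that preserve probabilities exactly in one direction and up to arbitrarily small error in the other. I would treat reachability in detail and remark that safety is dual.

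For the forward inequality $\Val^*_{\Phi}(M_i,s)\leq \Val^*_{\hat{\Phi}}(\hat{M}_i,\hat{s})$, I start from an arbitrary $\sigma$ on $M$ and use Lemma~\ref{lemma:wtsigma} to obtain a matching $\hat{\sigma}$ on $\hat{M}$. I decompose the winning event of $\Phi=\objreach(T)$ under $\sigma$ as a disjoint union indexed by how a run first ``commits'': either it stays in $\calH_\calT(\hat{M})$ until hitting $T$, or it enters a distinguishing MDEC $D$ (corresponding to a run hitting $s_D\in\calT(\hat{M})$), or it eventually gets trapped in a non-distinguishing MDEC $D$. For the first class the first two identities of Lemma~\ref{lemma:wtsigma} give an exact match with the winning histories in $\hat{M}$ that reach $T$. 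For the second class, the conditional win probability from entry in $D$ is at most $v_i=\Val^*_{\Phi}(M_i,D)$, which is exactly the probability with which the unique action $a_D^\$$ from $s_D$ leads to $W_D$ in $\hat{M}_i$. For the third, by definition of non-distinguishing $\restr{v_i}{D}$ is independent of $i$, and the third identity of Lemma~\ref{lemma:wtsigma} gives $\pr^{\hat{\sigma}}_{\hat{M}_j,\hat{s}}[h\, a_D^\$]=\pr^\sigma_{M_j,s}[\red^{-1}(hD^\omega)]$, so the contribution of runs trapped in $D$ is mirrored by the $a_D^\$$-branch in $\hat{M}$.

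For the reverse inequality, I fix $\hat{\sigma}$ on $\hat{M}$ and $\epsilon>0$ and construct $\sigma$ on $M$ with $\pr^\sigma_{M_i,s}[\Phi]\geq \pr^{\hat{\sigma}}_{\hat{M}_i,\hat{s}}[\hat{\Phi}]-\epsilon$. Outside all MDECs, $\sigma$ copies $\hat{\sigma}$ along the projection $\red$. Upon entering a distinguishing MDEC $D$, $\sigma$ switches to the strategy of Lemma~\ref{lemma:strategy-for-mec-as} with precision $\epsilon/|\mdecs(M)|$, thus achieving at least $(1-\epsilon')v_i$ in both environments, matching (up to $\epsilon$) the contribution that $\hat{M}$ assigns to the $a_D^\$$-edge. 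Upon entering a non-distinguishing MDEC $D$ at history $h$, $\sigma$ randomizes once: with probability $\hat{\sigma}(a_D^\$\mid \red(h))$ it commits to a memoryless optimal ``stay-in-$D$'' strategy that attains $\restr{v_i}{D}$ (which is well defined and environment-independent because $D$ is non-distinguishing and strongly connected in both $M_i$), and with probability $\hat{\sigma}(s_fa_f\mid \red(h))$ it plays the uniform policy inside $D$ until it first visits $s_f$ and then fires $a_f$, which happens in finite expected time almost surely by strong connectedness.

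The main obstacle is the non-distinguishing case of the reverse direction: $\hat{\sigma}$ may re-enter $s_D$ many times (after leaving $D$ and coming back), and each time prescribe a different distribution over $\{a_D^\$\}\cup F$. I therefore need the re-randomization scheme above to be restarted at every fresh entry into $D$, and I must verify that these independent local randomizations induce exactly the distributions of Lemma~\ref{lemma:wtsigma}. Concretely, I show by induction on the number of entries into MDECs along a history that the $\sigma$-probability of each prefix $h\in\calH_\calT(\hat{M})$ projected from $M$ equals $\pr^{\hat{\sigma}}_{\hat{M}_i,\hat{s}}(h)$, and that the probability of being eventually trapped in $D$ after $h$ equals $\pr^{\hat{\sigma}}_{\hat{M}_i,\hat{s}}(h\, a_D^\$)$. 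Combined with the distinguishing-DEC approximation, summing these contributions over the countably many winning prefixes in $\hat{M}$ yields the required $\epsilon$-bound, and letting $\epsilon\to 0$ gives the corollary.
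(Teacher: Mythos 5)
Your proof is correct and follows exactly the route the paper intends for this corollary, which it leaves implicit: the direction $\Val^*_{\Phi}(M_i,s)\leq\Val^*_{\hat{\Phi}}(\hat{M}_i,\hat{s})$ via the history correspondence of Lemma~\ref{lemma:wtsigma} together with the bound $v_i=\Val^*_{\Phi}(M_i,D)$ on the conditional value upon entering a distinguishing MDEC, and the reverse direction up to $\epsilon$ via Lemma~\ref{lemma:strategy-for-mec-as} inside distinguishing MDECs and the re-randomized frontier-action simulation inside non-distinguishing ones. That last simulation, including the restart at every re-entry into $s_D$ that you correctly identify as the delicate point, is precisely the construction the paper later spells out as Lemma~\ref{lemma:strat-nddec} for the parity reduction.
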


By Definition~\ref{def:tilde}, and the previous corollary, we assume, in the
next section, that the MEMDPs we consider have only trivial DECs.
\begin{assumption}
	\label{assum:only-trivial-dec}
	All MEMDPs are assumed to have only trivial DECs.
\end{assumption}

\section{Limit-Sure Reachability}
\label{section:limitsure}
In this section, we give a polynomial-time algorithm for limit-sure reachability in
MEMDPs.
For any MEMDP~$M$, and reachability objective $\Phi$,
we define the set of limit-sure winning states~$W(M,\Phi)$ as follows.
We have $s \in W(M,\Phi)$ if either $s \in R^{\Phi}$, or there exists a family of
strategies witnessing limit-sure satisfaction, that is, for any~$\epsilon>0$, 
there a strategy~$\sigma_\epsilon$ such that
$\pr_{M_i,s}^{\sigma_\epsilon}[\Phi]\geq 1-\epsilon$ for
$i=1,2$.

The following lemma states an important property of the set~$W(M,\Phi)$ for
reachability objectives but also safety objectives.
\begin{lemma}
	\label{lemma:only11-inf}
	On any MEMDP~$M$, and a reachability or safety objective~$\Phi$,
	there exists a memoryless strategy~$\sigma_W$ under which  
	from any~$s \in W(M,\Phi)$, each $M_i$ stays surely inside~$W(M,\Phi)$.
\end{lemma}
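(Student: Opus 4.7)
The plan is to reduce the lemma to the combinatorial claim that every $s \in W := W(M,\Phi)$ admits an action $a(s) \in A(s)$ with $\supp(\delta_1(s,a(s))) \cup \supp(\delta_2(s,a(s))) \subseteq W$. Once this is proved, the pure memoryless strategy $\sigma_W(s) := a(s)$ (extended arbitrarily for $s \notin W$) surely keeps each $M_i$ inside $W$ from any $s \in W$, by trivial induction on the path length.

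For the combinatorial claim, the case $s \in R^\Phi$ is immediate: by Assumption~\ref{assum:revealed} such an $s$ is absorbing in both environments, so every action self-loops at $s \in W$. The main case $s \in W \setminus R^\Phi$ proceeds by contradiction. Suppose every $a \in A(s)$ has $\supp(\delta_1(s,a)) \cup \supp(\delta_2(s,a)) \not\subseteq W$. Since $s$ is limit-sure winning, pick strategies $\sigma_n$ with $\min_j \pr_{M_j,s}^{\sigma_n}[\Phi] \geq 1 - \epsilon_n$ and $\epsilon_n \to 0$, and extract a subsequence so that the initial action distributions $\sigma_n(s)$ converge in the compact simplex $\calD(A(s))$ to some $\mu$. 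Fix any $a \in \supp(\mu)$ and set $c := \mu(a)/2 > 0$, so that $\sigma_n(s)(a) \geq c$ for large $n$. By our hypothesis there exist $t \notin W$ and $i \in \{1,2\}$ with $\delta_i(s,a,t) > 0$.

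The contradiction comes in two subcases. If $(s,a,t)$ is $i$-revealing, Assumption~\ref{assum:revealed} forces $t \in R_i$, hence absorbing in both environments, and $t \notin W \supseteq R_i^\Phi$ together with the absorbing structure yield $\pr_{M_i,t}^\tau[\Phi] = 0$ for every $\tau$ (for reachability $t \notin T$, for safety $t$ is unsafe); a one-step bound then gives $\pr_{M_i,s}^{\sigma_n}[\Phi] \leq 1 - c\,\delta_i(s,a,t)$, contradicting the limit-sure lower bound once $\epsilon_n$ is small. Otherwise $\delta_1(s,a,t), \delta_2(s,a,t) > 0$ and $t$ is unrevealed; since $t \notin W$, the quantity $\epsilon_t := 1 - \sup_\tau \min_j \pr_{M_j,t}^\tau[\Phi]$ is strictly positive, and applied to the continuation $\tau_n^a$ of $\sigma_n$ after $(s,a)$ it yields an index $j_n \in \{1,2\}$ with $\pr_{M_{j_n},t}^{\tau_n^a}[\Phi] \leq 1 - \epsilon_t$. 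Extracting a further subsequence so that $j_n = j^*$ is constant and applying the law of total probability in $M_{j^*}$ yields $\pr_{M_{j^*},s}^{\sigma_n}[\Phi] \leq 1 - c\,\delta_{j^*}(s,a,t)\,\epsilon_t$, again contradicting $\pr \geq 1 - \epsilon_n$ for small $\epsilon_n$.

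The main subtlety, and what I expect to be the only nontrivial bookkeeping, is that the worst continuation environment $j_n$ could a priori depend on $n$ and need not match the environment $i$ in which $t$ was reachable; finiteness of $\{1,2\}$ together with compactness of $\calD(A(s))$ lets one freeze both $\mu$ and $j^*$ simultaneously via a standard diagonal extraction, and the distinction between the $i$-revealing and unrevealed cases is precisely what ensures that $\delta_{j^*}(s,a,t) > 0$ in the environment where we cash in the gap $\epsilon_t$.
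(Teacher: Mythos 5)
Your proof is correct and follows essentially the same route as the paper's: both reduce to finding, at each $s \in W(M,\Phi)$, an action all of whose successors (in both environments) remain in $W(M,\Phi)$, and both derive the contradiction from a one-step decomposition in which a positive-probability transition to a state with a fixed value gap (handled by the same revealed/unrevealed case split) caps the achievable probability strictly below $1-\epsilon_n$. The only real difference is technical: where you extract a convergent subsequence of the initial action distributions to pin down an action whose probability is bounded below, the paper uses the pigeonhole observation that some action receives probability at least $1/|A(s)|$ under any sufficiently near-optimal strategy.
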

\begin{proof}[of Lemma~\ref{lemma:only11-inf}]
	In this proof only, we separate control and probabilistic states for convenience.
	Given a state~$s$, and action $a \in A(s)$, we denote by $sa$ the intermediate
	probabilistic state reached by chosing action~$a$.
	We denote $W = W(M,\Phi)$.

	We show that all successors of probabilistic states~$sa \in W$ are in~$W$.
	In fact, assume that there exists~$s'\not \in W$ such that
	$\delta_i(s,a,s')\neq 0$ for some~$i$.
	This means that $s' \not \in R^{\Phi}$ and there is no family of strategies
	witnessing limit-sure winning from~$s'$.
	If $s' \in R\setminus R^{\Phi}$, then 
	there exists $\epsilon_0>0$ such that
	for any strategy~$\sigma$, 
	$\pr_{M_i,s'}^\sigma[\Phi] \leq 1- \epsilon_0$,
	therefore
	$\pr_{M_i,s}^\sigma[\Phi] \leq 1 - \delta_i(s,a,s') +
	\delta_i(s,a,s')(1-\epsilon_0) \leq 1- \epsilon_0 \delta_i(s,a,s')$ contradicting that $s \in W$.
	Note that we cannot have $s' \in R_{3-i}$ since $\delta_i(s,a,s')\neq 0$.
	Now, if $s'$ is unrevealed then $\delta_j(s,a,s') \neq 0$ for both~$j=1,2$.
	By assumption that $s' \not \in W$, there exists $\epsilon_0>0$ such that for any
	strategy~$\sigma$, $\Val^*_{\Phi}(M_j,s')\leq 1- \epsilon_0$ for some~$j$.
	Then, for any~$\sigma$, for some~$j$,
	$\pr_{M_j,s}^\sigma[\Phi]\leq 1-\delta_j(s,a,s')\epsilon_0$
	contradicting $s \in W$.

	We now prove that for any control state~$s \in W$, there exists an action~$a$
	such that $\supp(s,a) \subseteq W$, by induction on the length $k>0$ of the
	history. At the same time, we define the strategy~$\sigma_W$ by setting
	$\sigma_W(s) = a$.

	The case $k=1$ is trivial since $s \in W$.
	For probabilistic states~$sa$, the property follows from the above paragraph.
	Assume $k\geq 2$. If there exists~$a \in A(s)$ such
	that $sa \in W$, then by induction hypothesis, for
	all~$\epsilon>0$, there exists a strategy~$\sigma'$
	that witnesses $1-\epsilon$-satisfaction from the (probabilistic) state $sa$
	and stays in $W$ states for~$k-1$ steps. We let~$\sigma_W(s) = a$.

	We now prove that there must exist such an action~$a$. To get a contradiction,
	assume that for all actions~$a \in A(s)$, $sa \not \in W$. This
	means that for all~$a \in A(s)$, there exists $\epsilon_a > 0$ such that
	$\Val^*_{\Phi}(M_j,sa) \leq 1- \epsilon_a$ for some~$j$.
	Let~$\sigma$ be a strategy witnessing $1-\epsilon$-satisfaction
	for $\epsilon < \frac{1}{|A(s)|}\min_{a \in A(s)}\epsilon_a$. 
	There exists~$a \in A(s)$ such that from~$s$, $\sigma$ assigns a probability
	of at least $\frac{1}{|A(s)|}$ to~$a$. Let~$j$ such that
	$\Val^*_{\Phi}(M_j,sa) \leq 1- \epsilon_a$. We have
	$\pr_{M_j,s}^\sigma[\Phi]
		\leq \frac{1}{|A(s)|}(1-\epsilon_a) + 1 - \frac{1}{|A(s)|}
		< 1- \epsilon$, contradiction.
\end{proof}

In the rest of the paper, $\sigma_W$ will denote the pure memoryless strategy of
Lemma~\ref{lemma:only11-inf}. Note that we do not require the
\emph{computability} of $\sigma_W$ at this point.

In the rest of this section, we assume, by
Assumption~\ref{assum:only-trivial-dec}, that the considered MEMDPs have
only trivial DECs.

\begin{figure}[h]
	\begin{tikzpicture}
		\tikzstyle{every state}=[node distance=1cm,minimum size=10pt, inner sep=1pt];
		\node[state] at (0,0) (A){$s$};
		\node[state,right of=A,fill=black,minimum size=3pt] (Aa) {};
		\node[state,right of=Aa] (B){$t$};
		\node[state, node distance=0.7cm, above of=B] (C){$u$};
		\node[state, below of=A] (D){$v$};
		\node[state, right of=D] (E){$w$};
		\node[right of=C, node distance=0.5cm] {$T$};
		\node[right of=E, node distance=0.5cm] {$T$};
		\path[-latex'] (A) edge node[above]{$a$} (Aa)
			(Aa) edge (B)
			(Aa) edge[dashed, bend left] (C)
			(B) edge[bend left] node[below,pos=0.1]{$a$} (A)
			(C) edge node[right]{$a$} (B)
			(A) edge node[left]{$b$} (D)
			(D) edge node[below]{$a$}(E)
			(E) edge[loop above] (E);
		\node[below of=D, anchor=west, node distance=0.7cm]{\scriptsize MEMDP~$M$};

		\begin{scope}[shift={(4,0)}]
			\node[state] at (0,0) (A){$s$};
			\node[state,right of=A,fill=black,minimum size=3pt] (Aa) {};
			\node[state,right of=Aa] (B){$t$};
			\node[state, node distance=0.7cm, above of=B] (C){$u$};
			\node[state, below of=A] (D){$v$};
			\node[state,right of=D,fill=black,minimum size=3pt] (Dd) {};
			\node[state, right of=Dd] (E){$w$};
			\node[state, below of=E, node distance=0.4cm] (F){};
			\node[right of=C, node distance=0.5cm] {$T$};
		\node[right of=E, node distance=0.5cm] {$T$};
			\path[-latex'] (A) edge node[above]{$a$} (Aa)
				(Aa) edge (B)
				(Aa) edge[dashed, bend left] (C)
				(B) edge[bend left] node[below,pos=0.1]{$a$} (A)
				(A) edge node[left]{$b$} (D)
				(D) edge node[below]{$a$}(Dd)
				(Dd) edge[dashed] (F)
				(Dd) edge[dotted] (E)
				(C) edge[loop above] (C)
				(E) edge[loop above] (E)
				(F) edge[loop right] (F);
		\node[below of=D, anchor=west, text width=2.5cm, node distance=1cm]{\scriptsize
			Revealed form and absorbing targets};
		\draw[-,gray,node distance=1cm] ($(A.north west)+(-0.2,0.2)$) rectangle
		($(B.south east)+(0.2,-0.3)$) node[above right,black] {MGEC~$D$};
		\end{scope}
		\begin{scope}[shift={(8,0)}]
			\node at (0,0) (A){};
			\node[right of=A,minimum size=3pt] (Aa) {};
			\node[right of=Aa] (B){};
			\node[state, node distance=0.7cm, above of=B] (C){$u$};
			\node[state, below of=A] (D){$v$};
			\node[state,right of=D,fill=black,minimum size=3pt] (Dd) {};
			\node[state, right of=Dd] (E){$w$};
			\node[state, below of=E, node distance=0.4cm] (F){};
			\node[right of=C, node distance=0.5cm] {$\wt{T}$};
			\node[right of=E, node distance=0.5cm] {$\wt{T}$};
			\node[state] at(1,0) (TD) {$t_D$};
			\node[right of=TD] {$\wt{T},\wt{T}$};
			\path[-latex'] 
				(D) edge node[below]{$a$}(Dd)
				(Dd) edge[dashed] (F)
				(Dd) edge[dotted] (E)
				(C) edge[loop above] (C)
				(E) edge[loop above] (E)
				(F) edge[loop right] (F)
				(TD) edge[loop above] (TD);
			\node[below of=D, anchor=west, node distance=0.7cm]{\scriptsize MEMDP~$\wt{M}$};
		\end{scope}
	\end{tikzpicture}
	\caption{\small On the left, an MEMDP with 
		objective~$\objreach(T)$, which is \emph{not} in revealed form; an
		equivalent instance~$M$ in revealed form is shown in the middle.
		Note that~$M$ has only trivial DECs.
		States~$\{s,t\}$ induce a good end-component~$D$ in~$M_2$; in fact, the strategy
		choosing action~$a$ at~$s$ and~$t$ is almost surely winning in~$M_1$.
		The construction~$\wt{M}$ is shown on the right,
		where all states of~$D$ are contracted as~$t_D$ which
		becomes a target state. Because $\wtabstr(s) = t_D$,
		objective~$\Phi$ is achieved limit-surely from~$s$.
	}
	\label{fig:lsreach}
\end{figure}
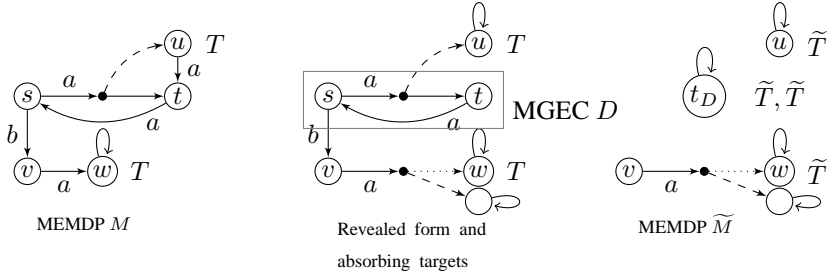

Let us explain the idea behind the limit-sure reachability algorithm on the
MEMDP~$M$ of Fig.~\ref{fig:lsreach}. Here, the MDP~$M_1$ has a MEC~$D$ with the
following property: the strategy~$\sigma$ compatible with~$D$ and choosing
all actions of~$D$ uniformly at random, achieves the objective almost surely
in~$M_2$. In fact, a strategy that chooses~$a$ at states~$s$ and~$t$ almost
surely reaches~$u$ in~$M_2$. On order to achieve the objective with
probability close to~$1$, one can run strategy~$\sigma$ for a large number of
steps, and if the objective is still not achieved, switch to the optimal
strategy for~$M_1$, which consists in choosing~$b$ from~$s$.
It can be shown that such a strategy achieves the objective at probabilites
$(1-\epsilon,1-\epsilon)$, for any desired~$\epsilon>0$, from any state of such
end-components.
Our algorithm consists in identifying these end-components and contracting them
as winning absorbing states.


Formally, let an end-component~$D$ of~${M}_i$ be called \emph{good} if the strategy that chooses
all edges of~$D$ uniformly at random is almost sure winning for~${M}_{3-i}$, from any
state. 
Observe that the union of good end-components with a non-empty intersection
is a good end-component. We
will thus consider \emph{maximal good end components (MGECs)} which can be
computed in polynomial time as follows.

\begin{lemma}
	\label{lemma:mgec}
	Let~$M$ be a MEMDP with only trivial DECs, and a reachability objective
	$\Phi$.
	For any~$i=1,2$, consider the set 

		$U_i=\{s \mid \exists D \in \mecs(M_i), s \in D, \Val_{\Wsafety(D\cup
		R_{3-i}^{\Phi})}^*(\restr{\cup M}{D\cup R_{3-i}},s)=1\}.$

	Let~$M_i'$ denote the sub-MDP of~$M_i$ induced by~$U_i$.
	Then the MGECs of~$M_i$ are the union of the MECS of $M_i'$,
	and the trivial MECs of~$M_i'$ surely satisfying~$\Phi$.
\end{lemma}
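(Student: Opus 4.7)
The plan is to establish both inclusions of the characterization, relying on the structural fact that any non-trivial MEC $D$ of $M_i$, not being a DEC by Assumption~\ref{assum:only-trivial-dec}, is transient in $M_{3-i}$ under the uniform strategy $\sigma_D$ (by an argument analogous to Lemma~\ref{lemma:mec-transient} but applied to $M$ rather than $\hat{M}$). Consequently, the almost-sure winning of $\sigma_D$ in $M_{3-i}$ from any state of $D$ reduces to the requirement that every $(3-i)$-revealing exit from an action of $D$ lands in $R_{3-i}^\Phi$: otherwise, the positive probability of hitting $R_{3-i}\setminus R_{3-i}^\Phi$ would prevent almost-sure winning.

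For the forward inclusion, starting from a MGEC $D$ of $M_i$, the reduction above shows that $\sigma_D$ itself serves as a safe strategy in $\restr{\cup M}{D\cup R_{3-i}}$ keeping the play in $D\cup R_{3-i}^\Phi$; hence $\Val_{\Wsafety(D\cup R_{3-i}^\Phi)}^*(\restr{\cup M}{D\cup R_{3-i}},s)=1$ at every state of $D$, giving $D\subseteq U_i$. Since $D$ is an EC of $M_i$ fully contained in $M_i'$, it lies inside some MEC of $M_i'$, and the two coincide by standard maximality arguments transferred between $M_i$ and $M_i'$. Trivial MGECs are handled separately: under Assumptions~\ref{assum:revealed} and~\ref{assum:absorbing}, any absorbing state of $M_i$ is absorbing in $M_{3-i}$ as well, so a trivial MGEC is a trivial DEC, and goodness for a reachability objective forces the state to be a target, matching the ``trivial MECs of $M_i'$ surely satisfying~$\Phi$'' clause.

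For the reverse inclusion, given a non-trivial MEC $D$ of $M_i'$, I would prove that every action of $D$ is safe in the $\cup M$ sense, i.e.\ its $M_{3-i}$-support lies in $D\cup R_{3-i}^\Phi$. Once this is granted, $\sigma_D$ is safe and transience of $D$ in $M_{3-i}$ forces almost-sure escape into $R_{3-i}^\Phi$, so $D$ is good; maximality among good ECs then follows from maximality in $M_i'$, since any strict superset would pass through a state whose outgoing revealing edge hits $R_{3-i}\setminus R_{3-i}^\Phi$, breaking safety. The main obstacle is this all-actions-safe step, which has to be extracted from the merely existential safety condition defining $U_i$: the argument combines the strong connectivity and maximality of $D$ in $M_i'$ with the fact that the sub-MDP of safe actions still contains every state of $D$ and remains strongly connected, so that a strictly larger EC within $M_i'$ admitting an unsafe action would contradict either the safety value being $1$ at its source state or the maximality of $D$. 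Polynomial-time computability then follows from the standard polynomial-time algorithms for MEC decomposition, safety-value computation in $\cup M$, and the computation of $R^\Phi$.
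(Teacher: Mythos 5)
Your overall strategy coincides with the paper's: the uniform strategy on the candidate end-component, transience of non-DEC end-components in the other environment (Lemma~\ref{lemma:mec-transient}), and Assumption~\ref{assum:revealed} forcing every $M_{3-i}$-exit of an action of a MEC of $M_i$ to land in $R_{3-i}$. The forward inclusion (every MGEC lies in $U_i$, hence inside a MEC of $M_i'$, with maximality transferred back and forth) is essentially the paper's argument and is fine.

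The gap is exactly at the step you yourself flag as the main obstacle, and your proposed resolution does not close it. You need that \emph{every} action of a non-trivial MEC $G$ of $M_i'$ has its $M_{3-i}$-support inside $G\cup R_{3-i}^{\Phi}$, and you argue that an unsafe action would contradict either the safety value being $1$ at its source state or the maximality of $G$. Neither contradiction materializes: the condition defining $U_i$ is existential in the strategy, and the sub-MDP of $M_i$ induced by $U_i$ retains every action whose $\delta_i$-support lies in $U_i$, safe or not. Concretely, take $\delta_1(s,a,t)=\delta_1(t,b,s)=\delta_1(s,c,t)=1$ together with $\delta_2(s,a,t)=1$, $\delta_2(t,b,s)=\delta_2(t,b,r')=\frac{1}{2}$, $\delta_2(s,c,t)=\delta_2(s,c,r)=\frac{1}{2}$, where $r'\in R_2^{\Phi}$ and $r\in R_2\setminus R_2^{\Phi}$. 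There is no non-trivial DEC, $\{s,t\}$ with $\{a,b,c\}$ is a MEC of $M_1$, and the safety value of $\Wsafety(\{s,t\}\cup R_2^{\Phi})$ in $\restr{\cup M}{\{s,t\}\cup R_2}$ is $1$ at both $s$ and $t$ (play $a$, then $b$), so $s,t\in U_1$, and $c$ survives into the induced sub-MDP because $\supp(\delta_1(s,c))=\{t\}\subseteq U_1$. The resulting MEC of $M_1'$ is maximal and every state in it has safety value $1$, yet its uniform strategy reaches the losing revealed state $r$ in $M_2$ with positive probability, so it is not a good end-component. What actually closes the gap --- and what the paper's proof uses implicitly when it says the uniform strategy is compatible with the safe set --- is the earlier observation that the set of states with safety value $1$ in $\cup M$ induces a sub-MDP \emph{all of whose actions are surely safe} (their $\cup M$-supports stay inside the safe set); $M_i'$ must be built from those safe actions only, after which every action of $G$ is safe by construction and your transience argument (almost-sure escape into $R_{3-i}^{\Phi}$) goes through. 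Deriving safety of all actions from the purely existential condition alone, as you propose, is not possible.

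A secondary point: your claim that under Assumptions~\ref{assum:revealed} and~\ref{assum:absorbing} any state absorbing in $M_i$ is absorbing in $M_{3-i}$ is not justified --- an action may self-loop in $M_i$ while splitting between the self-loop and a $(3-i)$-revealing edge in $M_{3-i}$ --- so the reduction of trivial MGECs to trivial DECs needs a different justification; the paper instead checks directly that the trivial MECs of $M_i'$ surely satisfying $\Phi$ are exactly the trivial MGECs.
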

\begin{proof}
	To see that the sub-MDP $M_i'$ is well-defined, notice that for each~$D$, the
	states satisfying the safety condition induces a sub-MDP, and that these sub-MDPs
	are disjoint for each~$D$.

	Let us show that non-trivial MECs of~$M_i'$ and trivial-and-winning MECs
	of~$M_i'$ are MGECs of~$M_i$. Note that we distinguish here the case of
	trivial MECs since our definition $U_i$ could yield trivial MECs that are not
	winning.
	It is clear that trivial MECs of~$M_i'$ satisfying~$\Phi$ are maximal good
	end-components.
	Consider a non-trivial MEC~$G$ of~$M_i'$. Let~$\tau$ be the uniform strategy
	inside~$G$ in~$M_i'$. Clearly, $\tau$ stays inside~$G$ in~$M_i$. In~$M_{3-i}$,
	we know that strategy~$\tau$ leaves~$G$ almost surely by
	Lemma~\ref{lemma:mec-transient}.
	But by Assumption~\ref{assum:revealed}, and by the fact that~$\tau$ is
	compatible with~$U$, $\tau$ also ensures $\Wsafety(D\cup
	R_{3-i}^{\Phi})$ surely, so~$R_{3-i}^{\Phi}$ must be reached
	almost surely in~$M_{3-i}$. Therefore, $G$ is a good end-component. We will
	show its maximality at the end of this proof.

	Conversely, we show that MGECs of~$M_i$ are MECs of~$M_i'$s.
	Any MGEC~$G$ of~$M_i$ is in particular a MEC of~$M_i$, so it is included in some
	$D \in \mecs(M_i)$.  Let~$\tau$ be the uniform strategy in~$G$.
	Clearly, we have $\pr_{M_i,s}^\tau[ \Wsafety(D)]=1$ for any~$s \in D$,
	and $\pr_{M_{3-i},s}^\tau[\Wsafety(D\cup R_{3-i}^{\Phi})]=1$. In fact,
	because strategy~$\tau$ is compatible with~$D$ in~$M_i$, and by
	Assumption~\ref{assum:revealed}, any action of~$D$ which leaves~$D$
	in~$M_{3-i}$ ends in~$R_{3-i}$. Furthermore, because~$\tau$ is almost surely
	winning for~$M_{3-i}$ from~$D$, we have that $\Wsafety(D \cup
	R_{3-i}^{\Phi})$ holds surely in~$M_{3-i}$ under~$\tau$.
	It follows that~$G$ is included in~$M_i'$. Moreover, 
	$G$ is by definition an end-component in~$M_i'$. 
	To show that~$G$ is maximal, 
	assume that there exists $G \subsetneq G' \subseteq M_i'$ where~$G'$ is a MEC
	in~$M_i$. By the first case, $G'$ is a good end-component which contradicts
	the maximality of~$G$ as a good end-component. Therefore, $G$ is indeed a MEC
	of~$\restr{M_i}{U}$.

	To finish the proof, we show that a non-trivial MEC~$G$ of~$M_i'$ is a
	\emph{maximal} good end-component.
	Towards a contradiction, assume that there exists~$G \subsetneq G'$ a MGEC
	of~$M_i$. By the second case above, $G'$ is then a MEC of~$M_i'$ which
	contradicts the maximality of~$G$ as an end-component of~$M_i'$.
\end{proof}

\begin{definition}[Transformation~$\wt{M}$]
	\label{def:wtM}
	Given any MEMDP~$M$ with only trivial DECs, and reachability
	objective $\Phi$.
	we define $\wt{M}=(\wt{S},\wt{A},\wt{\delta_1},\wt{\delta_2})$ by applying the following transformation to~${M}$.
	Mark any state~$s$ that belongs to some MGEC~$D$ of~$M_i$ for some~$i=1,2$,
	by $D$. If a state can be marked twice, choose one marking arbitrarily.
	We define~$\wt{M}$ by redirecting any edge entering a state marked by some
	$D$ to a fresh absorbing state~$t_D$.
	For each~$i=1,2$, the reachability objective $\wt{\Phi}$ is defined by the
	union of~$\Phi$, with all states~$t_D$ such that $\Phi$ can be ensured
	almost surely from~$D$ in~$M_i$.
\end{definition}
Let us denote by~$\wtabstr(\cdot)$ the mapping from the states~$M$ to those
of~$\wt{M}$.

The following lemmas establish the equivalence between limit-sure objectives
in~$M$ and corresponding almost-sure objectives in~$\wt{M}$. The algorithm for
limit-sure objectives is then obtained by using the algorithm of
Section~\ref{section:reachsafe}.
Note that only the first lemma is constructive, but it is the one that we need
to compute strategies for~$M$.

\begin{lemma}
	\label{lemma:lswtM}
	For any MEMDP~$M$ with only trivial DECs, and reachability objective
	$\Phi$,
	if $\wt{\Phi}$ can be achieved almost surely
	in~$\wt{M}$, then $\Phi$ can be achieved limit surely in~$M$.
	Moreover, given an almost sure winning strategy for~$\wt{M}$, for any~$\epsilon>0$,
	a strategy with memory $O(\frac{\log(\epsilon)}{\log(1-p)})$ for~$M$,
	where~$p$ is the smallest nonzero probability, achieving probabilities $1-\epsilon$ can
	be computed.
\end{lemma}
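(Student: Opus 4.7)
The plan is to construct a strategy $\sigma_\epsilon$ in $M$ by emulating a given almost-surely winning strategy $\wt{\sigma}$ for $\wt{\Phi}$ in $\wt{M}$ and grafting in a learning/commit gadget each time the play enters a maximal good end-component (MGEC). By construction of $\wt{M}$, outside the fresh absorbing states $\{t_D\}_D$ the transitions coincide with those of $M$, so I would let $\sigma_\epsilon$ play $\wt{\sigma}$ on the history seen through $\wtabstr$ as long as the play in $M$ has not entered any MGEC; the induced distributions over such histories then agree in $M_j$ and $\wt{M}_j$ for both $j=1,2$. In particular, the probability of reaching $T$ in $M_j$ before entering any MGEC, plus the total probability of first entering some MGEC $D$, equals $\pr_{\wt{M}_j,\wtabstr(s)}^{\wt{\sigma}}[\wt{\Phi}] = 1$, where every such $D$ reached on a non-null event must satisfy $t_D \in \wt{T}$ because $\wt{\sigma}$ is almost-sure winning.

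When the play first enters an MGEC $D$, let $i$ be such that $D$ is an MGEC of $M_i$. Because $t_D \in \wt{T}$, $\Phi$ is achievable almost surely from some state of $D$ in $M_i$; by strong connectedness of $D$ in $M_i$, a pure memoryless strategy $\pi_i$ winning almost surely from \emph{every} state of $D$ in $M_i$ exists. The gadget then plays the uniform-over-$D$ strategy $\tau_D$ for $K$ steps, with $K = n\lceil \log(\epsilon)/\log(1-p^n)\rceil$ from Lemma~\ref{lemma:mec-transient-K}, and switches to $\pi_i$ if the play is still inside $D$. In $M_i$, the play stays inside $D$ with probability one during the learning subphase, so $\pi_i$ thereafter wins $\Phi$ almost surely. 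In $M_{3-i}$, Lemma~\ref{lemma:mec-transient} implies that $D$ is not an end-component, and by Assumption~\ref{assum:revealed} together with the defining property of MGECs, every exit of $D$ in $M_{3-i}$ lands in $R_{3-i}^{\Phi}$; Lemma~\ref{lemma:mec-transient-K} then gives that such an exit happens within $K$ steps with probability at least $1-\epsilon$, and $\Phi$ holds almost surely from $R_{3-i}^{\Phi}$. The residual event, being in $M_{3-i}$ and still inside $D$ after $K$ steps, has probability at most $\epsilon$ and offers no guarantee.

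Combining, conditional on first entering $D$, the strategy achieves $\Phi$ with probability $1$ in $M_i$ and at least $1-\epsilon$ in $M_{3-i}$, so summing against the first-phase decomposition yields $\pr_{M_j,s}^{\sigma_\epsilon}[\Phi] \geq 1 - \epsilon$ for both $j$. The additional memory beyond $\wt{\sigma}$ is a bit flag ``inside an MGEC'' plus a counter up to $K$, matching the stated bound. The main obstacle, and what forces the two-subphase design, is the conflict between $\pi_i$, which wins in $M_i$ but may leak into losing states in $M_{3-i}$, and $\tau_D$, which almost-surely leaves $D$ in $M_{3-i}$ but merely loops inside $D$ in $M_i$: they cannot be used simultaneously. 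The resolution is to sample with $\tau_D$ long enough that ``still inside $D$'' becomes, in $M_{3-i}$, an $\epsilon$-rare event, and so can safely be interpreted as strong evidence ``the environment is $M_i$'' on which to commit to $\pi_i$.
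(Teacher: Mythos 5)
Your construction is essentially the paper's: emulate the almost-sure winning strategy of $\wt{M}$ through $\wtabstr$ until the play enters an MGEC, then sample with the uniform strategy inside the component and commit to an almost-sure winning strategy for the corresponding environment if the play has not left it. The decomposition into ``reach $T$ before any MGEC'' plus ``first enter some MGEC $D$ with $t_D\in\wt{T}$'', and the two-sub-phase analysis (the play surely stays in $D$ in $M_i$; it exits to $R_{3-i}^{\Phi}$ with probability at least $1-\epsilon$ in $M_{3-i}$) match the paper's argument and correctly establish the first claim, that $\Phi$ is achieved limit-surely with the probability loss bounded by $\epsilon$.

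The gap is in the memory bound. You count raw steps inside $D$ and invoke Lemma~\ref{lemma:mec-transient-K}, which forces $K=n\lceil\log(\epsilon)/\log(1-p^{n})\rceil$; this is of order $n\log(1/\epsilon)/p^{n}$, exponentially larger in the number of states than the claimed $O(\log(\epsilon)/\log(1-p))$, which is roughly $\log(1/\epsilon)/p$. The paper instead counts only the occurrences of the actions $B$ of $D$ that have, in $M_{3-i}$, a nonzero probability of leaving $D$ (necessarily via a $(3-i)$-revealing edge, by Assumption~\ref{assum:revealed}): under the uniform strategy these actions occur infinitely often almost surely in $M_i$, and each occurrence exits $D$ in $M_{3-i}$ with probability at least $p$ independently of the past, so after $K=\log(\epsilon)/\log(1-p)$ such occurrences the play has left $D$ in $M_{3-i}$ with probability at least $1-(1-p)^{K}\geq 1-\epsilon$. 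A counter bounded by this smaller $K$ then suffices. Replacing your step counter with this event counter repairs the stated bound; the rest of your argument goes through unchanged.
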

\begin{proof}
	Let~$\sigma$ be a strategy achieving each $\wt{\Phi}$ almost surely
	in~$\wt{M}$. For any~$\epsilon>0$, we derive a strategy for~${M}$
	achieving $\Phi$ with probability $1-\epsilon$ for each~$M_i$.
	For this, we define $\sigma_\epsilon$ for~${M}$ by modifing~$\sigma$ as
	follows. Remember that all target states are absorbing by
	Assumption~\ref{assum:absorbing}.
	Fix any~$\epsilon \in [0,1]$, and let~$p$ be the smallest nonzero probability
	in~$M$. Define $K\geq \frac{\log(\epsilon)}{\log(1-p)}$.
	Upon arrival to any state of a MGEC
	$D$ of~${M}_j$, if~$D$ is a trivial DEC, then we extend the strategy
	trivially. Otherwise, we switch to a strategy $\tau$ compatible with~$D$ in~${M}_j$
	picking all actions in~$D$ uniformly at random. Note that under this strategy,
	actions $B$ that leave~$D$ in~$M_{3-j}$ with positive probability are seen
	infinitely often (since $D$ is not a DEC). These actions lead to
	$3-j$-revealed states in~$M_{3-j}$ from which the strategy is extended
	trivially. Whenever actions in~$B$ are seen~$K$ times, if the play is still in~$D$ then
	we switch to the optimal strategy for~${M}_j$. 
	Notice that the probability of staying inside~$D$ under~$\tau$ in~$M_j$
	is~$1$, while the probability of leaving~$D$ in~$M_{3-j}$ under strategy is at
	least $1-\epsilon$ by the choice of~$K$.

	Assume $\wt{\Phi} = \objreach(T\cup \{t_D\})$.
	Because $\wt{\Phi}$ is ensured almost surely in~$\wt{M}_i$, in~$M_i$
	under~$\sigma$, almost surely we either reach~$T$ or switch to~$\tau$.
	The claim follows since when we switch to~$\tau$, $T$ is reached with
	probability at least~$1-\epsilon$.
\end{proof}
\begin{lemma}
	\label{lemma:lsM}
	Let~$M$ be any MEMDP  with only trivial DECs, and
	$\Phi$ reachability objectives. Let~$\sigma_W$ denote the strategy of
	Lemma~\ref{lemma:only11-inf} for~$M$, and~$\wt{\sigma}_W$ obtained
	from~$\sigma_W$ by extending it trivially on states~$t_D$.
	For any~$s \in W(M,\Phi)$, $\Val_{\wt{\Phi}}^{\wt{\sigma}_W}(\wt{M},\wt{s})=1$.
\end{lemma}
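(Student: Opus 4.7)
\emph{Plan.} I would analyze the finite Markov chain $\wt{M}_i^{\wt{\sigma}_W}$ for each $i \in \{1,2\}$: since any run in a finite Markov chain almost surely reaches a bottom SCC, it suffices to show that every BSCC reachable from $\wt{s}$ lies in $\wt{\Phi}_i$.

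First I would observe that the lifting $\wt{\sigma}_W$ inherits the safety property of $\sigma_W$ given by Lemma~\ref{lemma:only11-inf}: starting from $\wtabstr(s)$ with $s \in W(M,\Phi)$, every reachable state of $\wt{M}_i$ is either the image under $\wtabstr$ of some state in $W(M,\Phi)$, or one of the absorbing states $t_D$ (entered precisely when $\sigma_W$ reaches a state marked as belonging to the MGEC $D$). I would then consider the three possible types of BSCCs.

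\textbf{(a) Singleton absorbing BSCCs.} Because $\sigma_W$ stays in $W$, any reached absorbing state lies in $W(M,\Phi)$, so must be either a target in $T$ or a winning revealed state in $R_i^\Phi$ (the losing revealed states are excluded from $W$, and $R_{3-i}$ has no incoming transition in $M_i$). Both lie in $\wt{\Phi}_i$. \textbf{(b) Singleton BSCCs of the form $\{t_D\}$.} Here I would argue that $t_D \in \wt{\Phi}_i$ whenever $t_D$ is reached: the state $s' \in D \subseteq W(M,\Phi)$ through which the play enters $t_D$ is limit-sure winning in $M_i$, and since in a single MDP limit-sure coincides with almost-sure, $s' \in \almostsure(M_i,\Phi)$; using the strong connectedness of $D$ together with the existence of a uniform almost-sure winning strategy on $\almostsure(M_i,\Phi)$, $\Phi$ is almost-surely achievable from every state of $D$ in $M_i$, so $t_D \in \wt{\Phi}_i$.

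\textbf{(c) Non-trivial BSCCs $C$ with $C \cap \{t_D\}_D = \emptyset$.} This case is to be ruled out. Such a $C$ would be a non-trivial end-component of $M_i$ under $\sigma_W$'s actions, made up entirely of unmarked states of $W(M,\Phi)$. By Assumption~\ref{assum:only-trivial-dec}, $C$ is not a DEC and therefore, by the analogue of Lemma~\ref{lemma:mec-transient}, $C$ is transient in $M_{3-i}$. Combining transience in $M_{3-i}$ with the fact that $\sigma_W$ also keeps play in $W$ in $M_{3-i}$, and invoking the MGEC characterization of Lemma~\ref{lemma:mgec}, I would show that the MEC of $M_i$ containing $C$ must in fact be an MGEC of $M_i$, so its states (including those of $C$) are marked, contradicting the assumption that $C$ contains unmarked states.

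The main obstacle is precisely case (c): showing that non-trivial BSCCs cannot hide among the unmarked states of $W(M,\Phi)$. This requires a careful synthesis of the safety property of $\sigma_W$ in \emph{both} environments, the triviality of DECs (Assumption~\ref{assum:only-trivial-dec}), and the good-end-component characterization, in order to force the enveloping MEC of $M_i$ to be an MGEC and so derive the contradiction that closes the proof.
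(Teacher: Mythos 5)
Your proposal is correct and follows essentially the same route as the paper's proof: classify where the play of the memoryless strategy $\wt{\sigma}_W$ settles in each $\wt{M}_i$, use the $W(M,\Phi)$-safety of $\sigma_W$ to show that every reached absorbing state or $t_D$ is winning, and rule out non-trivial recurrent classes of unmarked states by showing they would yield good end-components, contradicting their contraction in $\wt{M}$. Your case (b) is in fact spelled out more carefully than the paper's one-line treatment of trivial DECs, and your case (c) is exactly the paper's transience-plus-MGEC contradiction.
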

\begin{proof}[of Lemma~\ref{lemma:lsM}]
	For strategy~$\wt{\sigma}_W$ and starting state~$\wt{s}$, 
	let~$D$ be any MEC of~$\wt{M}_i$ in which the play stays forever with positive
	probability. We have $D \subseteq \wtabstr(W(M,\Phi))\cup \{t_D\}_D$ since $\sigma_W$
	does not leave the set $W(M,\Phi)$ in~$M$. If~$D$ is a DEC, then it is trivial
	and satisfies the objective. 
	If~$D$ is not a DEC, then it is transient in
	$\wt{M}_{3-i}$. But because~$\sigma_W$ does not leave the set $W(M,\Phi)$, all
	revealed states reached under~$\sigma_W$ from~$D$ in~$\wt{M}_{3-i}$ are in
	$R^{\Phi}_{3-i}$, therefore winning. It follows that~$D$ is a good
	end-component, contradiction since all such components were reduced
	in~$\wt{M}$. Therefore, any MEC $D$ of~$\wt{M}_i$ in which the play stays
	forever is a DEC satisfying~$\wt{\phi}_i$. The lemma follows.
\end{proof}

The algorithm consists in constructing~$\wt{M}$ and solving almost-sure
reachability for~$\wt{\Phi}$:
\begin{theorem}
	\label{thm:lsreach}
	The limit-sure reachability problem is decidable in polynomial-time.
\end{theorem}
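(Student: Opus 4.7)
The plan is to show that the limit-sure reachability problem reduces in polynomial time to an almost-sure reachability problem on a transformed MEMDP, namely $\wt{M}$ from Definition~\ref{def:wtM}, and then invoke Theorem~\ref{lemma:asreach}. Concretely, the algorithm proceeds in three preprocessing stages followed by one decision call. First, put the input MEMDP into revealed form (Assumption~\ref{assum:revealed}); this is a polynomial local rewriting once we have computed the optimal reachability values in each $M_i$ separately, which is polynomial by classical MDP algorithms. Second, apply the $\hat{M}$ construction of Definition~\ref{def:tilde} to eliminate all non-trivial double end-components: MDECs are computable in polynomial time, and the only numerical values needed for the contraction are the reachability values $\Val^*_{\Phi}(M_i, D)$ inside MDECs, again computable in polynomial time. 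By Corollary~\ref{corollary:Mtilde}, this preserves the optimal reachability value, and by Assumption~\ref{assum:only-trivial-dec} we may now assume the MEMDP has only trivial DECs. Third, compute the MGECs of $\hat{M}_1$ and $\hat{M}_2$, which is polynomial by Lemma~\ref{lemma:mgec}, and build $\wt{M}$ by contracting each MGEC into a fresh absorbing state $t_D$, declared a target iff $\Phi$ is almost-surely achievable from $D$ in the relevant $M_i$.

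Having built $\wt{M}$, run Algorithm~\ref{alg:asreach} on $\wt{M}$ with the modified target set $\wt{T}$; by Theorem~\ref{lemma:asreach} this decides almost-sure reachability for $\wt{\Phi}$ in polynomial time. The answer of this call is returned as the answer to the original limit-sure problem. Since each preprocessing step is polynomial and so is the final call, the overall complexity is polynomial.

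Correctness of the reduction follows by combining the two equivalence lemmas already established. In one direction, Lemma~\ref{lemma:lswtM} shows that if $\wt{\Phi}$ is achieved almost surely in $\wt{M}$ from $\wt{s}$, then for every $\epsilon>0$ a finite-memory strategy in $M$ achieves $\Phi$ from $s$ with probability at least $1-\epsilon$ in both environments — exactly the definition of limit-sure winning. In the converse direction, Lemma~\ref{lemma:lsM} shows that for every $s \in W(M,\Phi)$, the lifted strategy $\wt{\sigma}_W$ witnesses $\Val_{\wt{\Phi}}^{\wt{\sigma}_W}(\wt{M},\wt{s}) = 1$, so $\wt{s}$ is almost-sure winning in $\wt{M}$. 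Thus $s$ is limit-sure winning in $M$ if and only if $\wt{s}$ is almost-sure winning in $\wt{M}$, which is what the algorithm decides.

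The subtle point — and the place where one should be most careful — is the polynomial-time implementability of the preprocessing. The $\hat{M}$ construction contracts every maximal DEC and attaches numerical values $\Val_{\Phi}^*(M_i, D)$ on the outgoing edges; one must verify that these values are indeed needed only per-environment (so each is just a standard MDP reachability computation in $M_i$ restricted to $D$), and that the resulting $\wt{M}$ is of size polynomial in $M$. Similarly, the MGEC characterization in Lemma~\ref{lemma:mgec} requires computing, for each MEC of $M_i$, a safe-reachability set in $\cup M$; this is again a polynomial MDP computation. Once these implementation details are in place, the correctness lemmas slot in directly and the theorem follows.
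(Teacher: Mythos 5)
Your proposal is correct and follows essentially the same route as the paper: reduce to the revealed form, contract non-trivial DECs via $\hat{M}$ (Corollary~\ref{corollary:Mtilde}), build $\wt{M}$ from the MGECs (Lemma~\ref{lemma:mgec}), and decide almost-sure reachability of $\wt{\Phi}$ in $\wt{M}$ with Algorithm~\ref{alg:asreach}, with correctness given by Lemmas~\ref{lemma:lswtM} and~\ref{lemma:lsM}. You in fact spell out the polynomial-time implementability of the preprocessing more explicitly than the paper does, which is a welcome addition rather than a deviation.
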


\section{Quantitative Reachability}
\label{section:quantitative}
We are now interested in the general quantitative reachability problem for MEMDPs.
We first show that the problem is NP-hard, so it is unlikely to
have a polynomial-time algorithm, and techniques based on linear
programming cannot be applied. We will then derive an approximation algorithm.

\subsection{Hardness}
\label{section:hardness}
We prove the following theorem.

\begin{theorem}
	\label{thm:nphard}
	Given an MEMDP~$M$, target set~$T$, and $\alpha_1,\alpha_2 \in [0,1]$, it is NP-hard to decide
	whether for some strategy~$\sigma$,
	$\pr_{M_i,s_0}^\sigma[\objreach(T)] \geq \alpha_i$ for each~$i=1,2$.
\end{theorem}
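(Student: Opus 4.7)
The plan is to reduce the NP-complete PARTITION problem to the quantitative reachability decision problem on an acyclic MEMDP with absorbing targets. Given positive integers $a_1, \dots, a_n$ with $\sum_i a_i = 2W$, I would build a chain MEMDP $M$ on states $\{s_0, s_1, \dots, s_n, T, \bot\}$ where $T$ and $\bot$ are absorbing in both environments, $s_n$ transitions deterministically to $T$, and at each $s_{i-1}$ two actions $\texttt{in}_i$ and $\texttt{out}_i$ are available. Setting $p_i = 2^{-a_i}$, in $M_1$ I take $\delta_1(s_{i-1},\texttt{in}_i,s_i)=p_i$, $\delta_1(s_{i-1},\texttt{in}_i,\bot)=1-p_i$, and $\delta_1(s_{i-1},\texttt{out}_i,s_i)=1$; in $M_2$ the roles of $\texttt{in}_i$ and $\texttt{out}_i$ are swapped. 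Initial state is $s_0$ and both thresholds are $\alpha_1=\alpha_2=2^{-W}$. Putting the result in revealed form (by splitting $\bot$ into $\bot_1,\bot_2$ as in the paper's preprocessing) is routine, and the construction is polynomial in the bit-length of the PARTITION instance.

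Then I would analyse pure memoryless strategies first. Identify $\sigma$ with the subset $S = \{i : \sigma(s_{i-1}) = \texttt{in}_i\}$. Reach probabilities decompose multiplicatively along the chain: $\pr_{M_1,s_0}^\sigma[\objreach(T)] = \prod_{i \in S} p_i = 2^{-\sum_{i \in S} a_i}$ and $\pr_{M_2,s_0}^\sigma[\objreach(T)] = 2^{-\sum_{i \notin S} a_i}$. Both being $\geq 2^{-W}$ amounts to $\sum_{i\in S} a_i \leq W$ and $\sum_{i \notin S} a_i \leq W$, which together with $\sum_i a_i = 2W$ force the equality $\sum_{i \in S} a_i = W$, an exact partition. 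Hence PARTITION is a yes-instance iff some pure memoryless strategy achieves both thresholds, which establishes NP-hardness in the pure-memoryless regime.

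The principal obstacle is covering \emph{arbitrary} (and randomized memoryless) strategies, as also asserted by the theorem. In the naive chain the uniform mixing $\lambda_i \equiv \tfrac{1}{2}$ gives $\pr_{M_j,s_0}^\sigma[\objreach(T)] = \prod_i \tfrac{1+p_i}{2}$, which by AM--GM strictly exceeds $\prod_i \sqrt{p_i} = 2^{-W}$ as soon as some $a_i > 0$; so both thresholds are met by randomization regardless of whether PARTITION has a solution, breaking the simple reduction. To close this gap I would either (i) attach to each item gadget an auxiliary distinguishing revealing edge whose effective failure probability contains a $\lambda_i(1-\lambda_i)$ term, making any strict mixing Pareto-dominated by the best pure choice in both environments, or (ii) reduce instead from \textsc{3-SAT} via variable-assignment gadgets whose clause-checking phase uses the two environments to probabilistically test different literals, forcing the strategy to commit to a consistent pure assignment at the variable gadgets. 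Acyclicity makes the reach-probability vector depend only on marginal per-state action probabilities, so history-dependence is handled uniformly with randomized memoryless once randomization itself is controlled. Designing this penalty gadget (or the \textsc{3-SAT}-style encoding) so that the final MEMDP remains acyclic with absorbing targets, is encoded in polynomial space, and is correct in both directions, is the technical core of the proof and where most of the formal work would concentrate.
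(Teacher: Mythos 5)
Your reduction skeleton (a chain of item gadgets with the roles of the two actions swapped between the two environments, so that pure strategies yield multiplicative value pairs $(w, W/w)$) is essentially the paper's construction, and you correctly identify the central difficulty: with a \emph{lower} bound on reaching $T$, the uniformly randomizing strategy achieves $\prod_i \tfrac{1+p_i}{2} > \prod_i\sqrt{p_i}$ in both environments, so the naive reduction is unsound. But you then leave that difficulty unresolved -- you only sketch two possible gadget constructions ((i) and (ii)) and explicitly defer ``the technical core of the proof.'' The paper closes this gap without any extra gadget, by flipping the target: it asks for probability at least $1-\sqrt{W}$ of reaching $\bot$, which (since reaching $T$ and reaching $\bot$ are complementary events in this chain) is equivalent to the \emph{upper} bounds $p_i^\sigma \le \sqrt{W}$ for $i=1,2$. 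Upper bounds interact correctly with randomization: Lemma~\ref{lemma:lincombin} writes any strategy's value vector as a convex combination $\sum_\pi \lambda_\pi (w_\pi, W/w_\pi)$ of pure-strategy vectors, each of which has product exactly $W$, and Lemma~\ref{lemma:hardness-epsilon} shows that such a combination can satisfy both upper bounds (even up to a small additive $\epsilon$) only if some pure strategy attains exactly $(\sqrt{W},\sqrt{W})$, i.e.\ only if the partition instance is solvable. This is precisely the missing piece in your argument, and it also yields hardness of the $\epsilon$-gap problem for free.

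A second, independent problem is your choice of source problem. Reducing from \textsc{Partition} via $p_i = 2^{-a_i}$ is not a polynomial-time reduction: writing $2^{-a_i}$ in binary requires $a_i$ bits, which is exponential in the bit-length $\lceil\log a_i\rceil$ of the \textsc{Partition} input (and \textsc{Partition} with unary/polynomially bounded numbers is in P, so you cannot assume the $a_i$ small). The paper avoids this by reducing from \textsf{Product-Partition}, where the transition probabilities are directly $1/v_i$ and the target $\sqrt{W}=1/\sqrt{v_1\cdots v_n}$ is computable and representable in polynomially many bits. So even after fixing the randomization issue you would need to replace \textsc{Partition} by a multiplicative variant for the reduction to be polynomial.
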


The following \textsf{Product-Partition} problem is NP-hard in the strong sense.
Given positive integers $v_1,\ldots,v_n$,
decide whether there exists a subset $I
\subseteq \{1,\ldots,n\}$ such that $\prod_{i \in I} v_i = \prod_{i \not \in I}
v_i$.
It is easy to see that the problem is equivalent if the target value
$\sqrt{v_1\ldots v_n}$
is given as part of input. In fact, if~$l$ is the maximum number of bits
required to represent any~$v_i$, then~$V=v_1\ldots v_n$ can be computed in time
$n^2l^2$. Further, one can check if~$V$ is a perfect square and (if it is)
compute the square root in time $O(\log(V))=O(nl)$ by binary search.

We reduce this problem to quantitative reachability in MEMDPs.
	We fix an instance of the problem, and construct the following MEMDP~$M$.
	\begin{figure}[h!]
		\begin{center}
			\begin{tikzpicture}
			\tikzstyle{every state}=[minimum size=3pt,inner sep=0pt,node distance=2cm]
			\tikzstyle{every node}=[font=\small]
			\node[state] at (0,0) (s1) {$s_1$};
			\node[below right of=s1,node distance=40pt] (s1q){};
			\node[state,right of=s1,node distance=3cm] (nop) {$\bot$};
			\fill (s1q) circle(2pt);
			\node[state,below of=s1] (s2) {$s_2$};
			\node[below right of=s2,node distance=40pt] (s2q){};
			\fill (s2q) circle(2pt);
			\node[state, below of=s2] (s3) {$s_3$};
			\node[below of=s3] (s4) {$\vdots$};
			\node[below right of=s4,node distance=20pt] (s4q){};
			\fill (s4q) circle(2pt);
			\node[state,below of=s3,accepting] (sn) {$s_{n+1}$};
			\path[draw,-latex'] 
			(s1) edge[bend left] node[right]{$a$} (s1q) 
			(s1q) edge[bend left] node[right]{$\frac{1}{v_1}$} (s2) 
			(s1q) edge[bend right] node[left]{$1-\frac{1}{v_1}$} (nop) 
			(s1) edge[bend right] node[left]{$b$} (s2) 
			(s2) edge[bend left] node[right]{$a$} (s2q)
			(s2q) edge[bend left] node[below]{$\frac{1}{v_2}$} (s3) 
			(s2q) edge[bend right] node[left]{$1-\frac{1}{v_2}$} (nop) 
			(s2) edge[bend right] node[left]{$b$} (s3)
			(s4) edge[bend left] node[right]{$a$} (s4q)
			(s4) edge[bend right] node[left]{$b$} (sn)
			(s4q) edge[bend right] node[left]{$1-\frac{1}{v_n}$} (nop)
			(s4q) edge[bend left] node[right]{$\frac{1}{v_n}$} (sn)
            (nop) edge[loop right] node[right]{$a$} (nop)
            (sn) edge[loop below] (sn);
		\end{tikzpicture}
		\end{center}
	\end{figure}
	The figure depicts the MDP~$M_1$, while $M_2$ is obtained by inversing the roles
	of~$a$ and~$b$. We let~$s_{n+1}$ be the target state, and define~$T = \{s_{n+1}\}$.
	Let us denote $W = 1/V$.
	We will prove that $M$ has a strategy achieving the probabilities
	$(\sqrt{W},\sqrt{W})$ for reaching~$s_{n+1}$ 
	if, and only if the \textsf{Product-Partition} problem has a solution.
	Notice that the reduction is polynomial since all probabilities can be encoded
	in polynomial time.

	Observe that to each pure strategy~$\sigma$ corresponds
	a set $S_\sigma=\{ i \mid \sigma(s_i,b) = 1\}$.
	We have that $\pr_{M_2,s_1}^\sigma[\objreach(T)] = \prod_{i \in
	S_\sigma}\frac{1}{v_i}$, and
	$\pr_{M_1,s_1}^\sigma[\objreach(T)] = \prod_{i \not\in
	S_\sigma}\frac{1}{v_i}$
	Therefore, a pure strategy with values 
	$(\sqrt{W},\sqrt{W})$ yields a solution to the \textsf{Product-Partition} problem, and
	conversely. 
	To establish the reduction, we need to show that if some arbitrary strategy achieves the probability vector
	$(\sqrt{W},\sqrt{W})$ in~$M$, then there is a pure strategy achieving the same
	vector. 
	\medskip

	To ease reading, for any strategy~$\sigma$, let us denote $p^\sigma_i =
	\pr_{M_i,s_1}^\sigma[\objreach(T)]$.
	Let $\Sigma^D$ denote the set of deterministic strategies.

    \begin{lemma}
      \label{lemma:lincombin}
      For any strategy~$\sigma$, there exists
      $(\lambda_\pi)_{\pi \in \Sigma^D}$ with $0\leq \lambda_\pi \leq 1$
      and $\sum_{\pi\in \Sigma^D} \lambda_\pi = 1$ such that
      $p_i^\sigma = \sum_{\pi\in \Sigma^D}\lambda_\pi p^\pi_i$ for
      all~$i=1,2$.
    \end{lemma}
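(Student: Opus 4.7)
The plan is to realise $\sigma$ as a mixture of pure strategies (a Kuhn-style behavioural-to-mixed conversion), using that $M$ is acyclic so that the set of relevant histories is finite and the combinatorics stays under control.

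Apart from the self-loops at the absorbing states $\bot$ and $s_{n+1}$, every run of $M$ visits $s_1, s_2, \ldots$ in order until absorption. Hence the set $H$ of histories ending at some $s_i$ with $1 \leq i \leq n$ is finite, and any pure strategy $\pi \in \Sigma^D$ is determined by its restriction to $H$, giving a bijection with maps $\pi : H \to \{a,b\}$. Given $\sigma$, I would then set
\[
    \lambda_\pi \;=\; \prod_{h \in H} \sigma(h)(\pi(h)),
\]
which is the law of the pure strategy obtained by sampling each action independently from $\sigma(h)$ at every $h \in H$. The normalisation $\sum_\pi \lambda_\pi = 1$ is immediate by distributing the product, since $\sum_{x \in \{a,b\}} \sigma(h)(x) = 1$ for every $h \in H$.

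The key step is a factorisation: for any finite history $h^*$ of $M$ and for $i = 1,2$,
\[
    \pr_{M_i,s_1}^\sigma[h^*] \;=\; \sum_{\pi \in \Sigma^D} \lambda_\pi\, \pr_{M_i,s_1}^\pi[h^*].
\]
Writing $a_{h'}$ for the action that $h^*$ takes at a prefix $h'$, the product $\prod_{h \in H}\sigma(h)(\pi(h))$ naturally splits into factors on prefixes of $h^*$ (where the indicator in $\pr^\pi[h^*]$ forces $\pi(h') = a_{h'}$, leaving the contribution $\sigma(h')(a_{h'})$) and on histories outside $h^*$ (where summing over $\pi(h) \in \{a,b\}$ collapses the factor to $1$); the transition probabilities $\delta_i$ along $h^*$ appear identically on both sides. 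Since $\objreach(T) = \objreach(\{s_{n+1}\})$ is the disjoint union of the finitely many finite histories $s_1 x_1 s_2 x_2 \ldots s_n x_n s_{n+1}$ with $x_j \in \{a,b\}$, summing the above identity over them yields $p_i^\sigma = \sum_\pi \lambda_\pi\, p_i^\pi$ for $i=1,2$, as required. The only real obstacle is the bookkeeping in this factorisation step, which is made manageable by the acyclicity of $M$: both $H$ and the set of $T$-reaching histories are finite, so no measure-theoretic subtlety arises.
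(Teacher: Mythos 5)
Your proof is correct and follows essentially the same route as the paper's: both decompose $p_i^\sigma$ over the finitely many $\bot$-avoiding histories and factor each history's probability into strategy-choice terms (yielding the weights $\lambda_\pi$) times transition probabilities (yielding $p_i^\pi$). The only differences are cosmetic: you index pure strategies by full maps $H\to\{a,b\}$ and get normalisation by distributing the product $\prod_{h\in H}\sum_{x}\sigma(h)(x)=1$ (the generic Kuhn-style construction), whereas the paper indexes them by the $2^n$ action words in $(a+b)^n$ — which is all that $p_i^\pi$ depends on — and proves $\sum_\pi\lambda_\pi=1$ by a backwards induction on histories; grouping your pure strategies by their realised action word recovers exactly the paper's weights.
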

    \begin{proof}
      Consider any strategy~$\sigma \colon (SA)^*S \rightarrow \calD(A)$. Observe that since~$\bot$ is an absorbing state,
      $\sigma$ is characterized by the choices at histories not ending in~$\bot$, that is, histories that belong to
      $s_1(a+b)s_2(a+b)\ldots (a+b)s_i$. 

      Similarly, a deterministic strategy is characterized by the unique sequence of actions it takes from~$s_1$ to~$s_{n+1}$ when it avoids~$\bot$.
      Accordingly, we will identify the words of~$(a+b)^n$ with deterministic strategies, and denote $p_i^\pi$ the probability of reaching~$T$
      in~$M_i$ under strategy~$\pi \in (a+b)^n$.

      Under strategy~$\sigma$, there are only $2^n$ histories that allow reaching the target state~$s_{n+1}$. We express
      this probability summing over the probabilities of all these histories.
      We have
      \[
      \begin{array}{ll}
        p_i^\sigma &= \sum_{\pi \in (a+b)^n} \prod_{i=1}^n \sigma(\pi_{i} \mid s_1\pi_1 \ldots s_i) \delta_i(s_{i-1},\pi_i,s_i)\\
        &=\sum_{\pi \in (a+b)^n} \big(\prod_{i=1}^n \sigma(\pi_{i} \mid s_1\pi_1 \ldots s_i)\big) \prod_{i=1}^n \delta_i(s_{i-1},\pi_i,s_i)\\
        &=\sum_{\pi \in (a+b)^n} \big(\prod_{i=1}^n \sigma(\pi_{i} \mid s_1\pi_1 \ldots s_i)\big) p_i^\pi\\
      \end{array}
      \]
      Let us set~$\lambda_\pi = \big(\prod_{i=1}^n \sigma(\pi_{i} \mid s_1\pi_1 \ldots s_i)\big)$.
      Hence, we have written~$p_i^\sigma$ as a linear combination of the reachability probabilities of deterministic strategies.

      It remains to show that the weights form a probability distribution, that is, $\sum_{\pi \in (a+b)^n}\lambda_\pi = 1$.
      Let $H = s_1 + s_1(a+b)s_2 + \ldots s_1(a+b)\ldots (a+b)s_n$.
      We will prove by induction that for any history $h \in H                                                 $,
      \[
      \sum_{\pi \in (a+b)^{n-\lfloor|h|/2\rfloor}}\prod_{i=1}^{|\pi|} \sigma(\pi_i \mid h \pi_1\ldots s_{\lfloor |h|/2\rfloor + i})
=1.
      \]
      This proves our claim by choosing $h = s_1$.
      We proceed backwards from~$|h|=2n-1$ down to~$1$. For~$|h|=2n-1$, the quotient set~$h^{-1}H$ is empty
      so the product is~$1$, and the equality holds.
      Consider any~$h$ with $|h| < 2n-1$.
      We write
      \[
      \begin{array}{l}
        \sum_{\pi \in (a+b)^{n-\lfloor|h|/2\rfloor}} \prod_{i=1}^{|\pi|} \sigma(\pi_i \mid h \pi_1\ldots s_{\lfloor |h|/2\rfloor + i})\\
        = 
        \sum_{x \in \{a,b\}} \sum_{\pi \in x(a+b)^{n-\lfloor|h|/2\rfloor}-1} \prod_{i=1}^{|\pi|} \sigma(\pi_i \mid h \pi_1\ldots s_{\lfloor |h|/2\rfloor + i})\\
        =\sum_{x \in \{a,b\}}  \sigma(x \mid h)\sum_{\pi \in x(a+b)^{n-\lfloor|h|/2\rfloor}-1} 
        \prod_{i=2}^{|\pi|} \sigma(\pi_i \mid h \pi_1\ldots s_{\lfloor |h|/2\rfloor + i})\\
        =\sum_{x \in \{a,b\}}  \sigma(x \mid h)\sum_{\pi \in (a+b)^{n-\lfloor|h'|/2\rfloor}} 
        \prod_{i=1}^{|\pi|} \sigma(\pi_i \mid h' \pi_1\ldots s_{\lfloor |h'|/2\rfloor + i})\\
        = 1.
      \end{array}
      \]
      where~$h' = hxs_{\lfloor h/2\rfloor+1}$.
      Here $|h'|>|h|$, so by induction, the inner sum is equal to~$1$ in the second to the last line. Moreover,
      $\sigma(a \mid h) + \sigma(b \mid h) = 1 $ for any history~$h$, which yields the last line,
      hence the claim.
    \end{proof}

	The following lemma is the last step of the reduction: if there is a strategy
	whose reachability probabilities are no greater than~$(\sqrt{W}+\epsilon,\sqrt{W}+\epsilon)$ component-wise,
	for some well chosen~$\epsilon$, then there is a pure strategy under which the reachability probabilities
    are exactly $(\sqrt{W},\sqrt{W})$. 
\begin{lemma}
	\label{lemma:hardness-epsilon}
	Given $v_1,\ldots,v_n \in \mathbb{Z}^+$, and $W = \prod_{i=1}^n
	\frac{1}{v_i}$, let $\epsilon < \frac{1}{4}\sqrt{W}$.
	If there exists a strategy~$\sigma$ such that for~$i=1,2$, $p_i^\sigma =  \sqrt{W}+\delta_i$
	for some $\delta_1,\delta_2 \in[-\sqrt{W},\epsilon]$, 
	then there is a pure strategy~$\pi$ such that $p_i^\pi = \sqrt{W}$ for all~$i=1,2$.
\end{lemma}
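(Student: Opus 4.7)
The plan is to combine Lemma~\ref{lemma:lincombin} with the observation that all pure-strategy value vectors lie on a single hyperbola. A deterministic strategy $\pi \in \Sigma^D$ is identified with the set $S_\pi = \{i : \pi(s_i)=b\}$; writing $X_\pi = \prod_{i \notin S_\pi} v_i$ gives $p_1^\pi = 1/X_\pi$ and $p_2^\pi = X_\pi/V$, and hence $p_1^\pi \cdot p_2^\pi = W$. The \emph{balanced} pure strategies, those with $p_i^\pi = \sqrt{W}$, are exactly those with $X_\pi = \sqrt{V}$.

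By Lemma~\ref{lemma:lincombin}, we may write $(p_1^\sigma, p_2^\sigma) = \sum_\pi \lambda_\pi\,(1/X_\pi,\, X_\pi/V)$ for some probability distribution $(\lambda_\pi)$ over $\Sigma^D$. Applying Cauchy--Schwarz to the vectors $\bigl(\sqrt{\lambda_\pi/X_\pi}\bigr)_\pi$ and $\bigl(\sqrt{\lambda_\pi X_\pi/V}\bigr)_\pi$ gives $p_1^\sigma p_2^\sigma \geq \bigl(\sum_\pi \lambda_\pi\sqrt{W}\bigr)^2 = W$. Combining with the hypothesis $p_i^\sigma \leq \sqrt{W}+\epsilon$ and the bound $\epsilon < \sqrt{W}$, I then derive the matching lower bound $p_j^\sigma \geq W/(\sqrt{W}+\epsilon) \geq \sqrt{W}-\epsilon$. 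In particular, $p_1^\sigma + p_2^\sigma \leq 2\sqrt{W}+2\epsilon$.

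Using the identity $\frac{1}{X} + \frac{X}{V} - \frac{2}{\sqrt{V}} = \frac{(\sqrt{V}-X)^2}{XV}$, this upper bound on $p_1^\sigma + p_2^\sigma$ rewrites as
\[
\Delta \;:=\; \sum_\pi \lambda_\pi \frac{(\sqrt{V}-X_\pi)^2}{X_\pi V} \;\leq\; 2\epsilon.
\]
Every summand is nonnegative and vanishes precisely when $X_\pi = \sqrt{V}$. If some $\pi$ with $\lambda_\pi>0$ satisfies $X_\pi = \sqrt{V}$, we are done: $\pi$ itself is the required balanced pure strategy. Otherwise every $X_\pi$ in the support of $(\lambda_\pi)$ is a positive integer distinct from $\sqrt{V}$, and each summand is strictly positive.

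The main obstacle is the quantitative step. A naive bound $(\sqrt{V}-X_\pi)^2 \geq 1$ (valid when $V$ is a perfect square) is by itself too weak, because a single $\pi$ with large $X_\pi$ makes the corresponding summand tiny. The argument should instead exploit the two additional linear constraints coming from the upper bounds on $p_1^\sigma, p_2^\sigma$, namely $\sum_\pi \lambda_\pi/X_\pi \leq \sqrt{W}+\epsilon$ and $\sum_\pi \lambda_\pi X_\pi \leq \sqrt{V}+V\epsilon$. Together, these forbid $(\lambda_\pi)$ from concentrating mass on very small or very large $X_\pi$, while integrality forces each remaining $X_\pi$ to differ from $\sqrt{V}$ by at least $1$; carefully balancing these pressures should force $\Delta > 2\epsilon$ as soon as $\epsilon < \sqrt{W}/4$, delivering the contradiction and hence a balanced pure strategy.
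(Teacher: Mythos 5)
You set the problem up correctly, and your Cauchy--Schwarz observation $p_1^\sigma p_2^\sigma\ge W$ is a genuinely useful addition: it immediately disposes of the exact-threshold case (if $p_1^\sigma,p_2^\sigma\le\sqrt W$ then $p_1^\sigma p_2^\sigma=W$, so equality holds in Cauchy--Schwarz, so the support of $(\lambda_\pi)$ consists of a single $X$ with $X=\sqrt V$), more cleanly than the paper does. But your proof stops exactly where the lemma's content lies: ``carefully balancing these pressures should force $\Delta>2\epsilon$'' is an announcement, not an argument, and it is the whole point of the statement. Worse, the step cannot be completed at the stated scale of $\epsilon$. The smallest nonzero summand $\frac{(\sqrt V-X)^2}{XV}$ over integers $X\ne\sqrt V$ is $\frac{1}{V(\sqrt V+1)}=\frac{W^{3/2}}{1+\sqrt W}$, of order $W^{3/2}$, while $2\epsilon$ may be of order $\sqrt W$; no use of the two linear constraints you list bridges that gap. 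Concretely, take $v_1=9$, $v_2=16$, so $V=144$, $\sqrt W=\frac1{12}$, and $\epsilon$ may be as large as $\frac1{48}$. The strategy that plays $a$ or $b$ at $s_1$ with probability $\frac12$ each and then plays the \emph{opposite} action at $s_2$ realizes the uniform mixture of the pure strategies with $X=9$ and $X=16$ and achieves $p_1^\sigma=p_2^\sigma=\frac{25}{288}=\sqrt W+\frac1{288}$, so the hypothesis holds with $\delta_1=\delta_2=\frac1{288}<\epsilon$; but the realizable values of $X$ are $\{1,9,16,144\}$, so no pure strategy attains $\sqrt W$. Your $\Delta$ here equals $\frac1{144}$, far below $2\epsilon$. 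The lemma, and hence any completion of your argument, fails with the constant $\frac14\sqrt W$.

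You are in good company: the paper's own proof breaks at the corresponding point. From $w_i\ge\frac{\sqrt W}{1-\sqrt W}$ (for $i\in P$) and $w_i\le\frac{\sqrt W}{1+\sqrt W}$ (for $i\in N$) one only gets $|w_i-\sqrt W|\ge\frac{W}{1\pm\sqrt W}$, a separation of order $W$, not the claimed $\frac{\sqrt W}{1+\sqrt W}$; with the correct separation the final contradiction with $\epsilon<\frac14\sqrt W$ evaporates, and the example above witnesses this. The repair, for both your argument and the paper's, is to shrink $\epsilon$ to the scale of the true integrality gap: with $\epsilon<\frac14\cdot\frac{W^{3/2}}{1+\sqrt W}$ your inequality $\Delta\le2\epsilon$ contradicts the pointwise lower bound $\Delta\ge\frac{W^{3/2}}{1+\sqrt W}$ outright (no balancing needed, since $\sum_\pi\lambda_\pi=1$ and every summand in the support is at least the minimum), and such an $\epsilon$ still has polynomial bit-size, so the NP-hardness reductions built on this lemma survive. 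If you carry out that last step explicitly with the corrected constant, your route is actually shorter and cleaner than the paper's $P/N$ case analysis.
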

\begin{proof}
	Consider any~$\sigma$ with value vector~$(\sqrt{W}+\delta_1,\sqrt{W}+\delta_2)$.
	By Lemma~\ref{lemma:lincombin}, we write $\sigma$ as the linear combination of pure strategies
	as $\sigma = \sum_{i=1}^n \lambda_i \pi_i$, we get
	\[
		\begin{array}{l}
			\lambda_1 w_1 + \ldots + \lambda_n w_n = \sqrt{W}+\delta_1,\\
			\lambda_1 W/w_1 + \ldots + \lambda_n W/w_n = \sqrt{W}+\delta_2,\\
		\end{array}
	\]
	where $w_i = p^{\pi_i}_1$, and $W/w_i = p^{\pi_i}_2$.
	By dividing the second equation by~$W$, distributing the right hand side in
	the linear combination in both lines, and multiplying the second line by~$-1$, 
    we rewrite this as
	\[
		\begin{array}{l}
			\lambda_1 (w_1-\sqrt{W}) + \ldots + \lambda_n (w_n-\sqrt{W}) = \delta_1,\\
			\lambda_1 (\frac{w_1 - \sqrt{W}}{w_1\sqrt{W}}) + \ldots + \lambda_n
			(\frac{w_n-\sqrt{W}}{w_n\sqrt{W}}) + \frac{\delta_2}{W} = 0.\\
		\end{array}
	\]
	Towards a contradiction, assume that $w_i \neq \sqrt{W}$ for all~$i$.
	Define $P \subsetneq \{1,\ldots,n\}$, the set of~$i$
	such that $w_i - \sqrt{W}>0$, and let~$N = \{1,\ldots,n\} \setminus P$.
	For all~$i\in P$, we have $1/w_i \leq \sqrt{1/W}-1$ which means
	$w_i \geq \frac{\sqrt{W}}{1-\sqrt{W}}$. For~$i\in N$, we similarly obtain
	$w_i \leq \frac{\sqrt{W}}{1+\sqrt{W}}$.
	We obtain that for any $i \in P$, $\frac{1}{w_i\sqrt{W}} \leq \frac{1-\sqrt{W}}{W}$,
	and for any $i \in N$, $\frac{1}{w_i\sqrt{W}} \geq \frac{1+\sqrt{W}}{W}$.
	We rewrite
	\begin{equation}
		\label{eqn:NP}
		\begin{array}{l}
			\sum_{i \in P} \lambda_i (w_i - \sqrt{W}) - \sum_{i \in N} \lambda_i
			(\sqrt{W}-w_i) = \delta_1,\\
			\sum_{i \in P} \lambda_i \frac{w_i - \sqrt{W}}{w_i\sqrt{W}}
      =
			\sum_{i \in N} \lambda_i \frac{\sqrt{W}-w_i}{w_i\sqrt{W}}
-\frac{\delta_2}{W}
		\end{array}
	\end{equation}
	We have
	\[
		\begin{array}{l}
			\frac{1+\sqrt{W}}{W}\sum_{i \in N}\lambda_i (\sqrt{W}-w_i) -
      \frac{\delta_2}{W}\\
			\leq \sum_{i \in N} \lambda_i \frac{\sqrt{W}-w_i}{w_i\sqrt{W}} -
      \frac{\delta_2}{W}
			=
			\sum_{i \in P} \lambda_i \frac{w_i - \sqrt{W}}{w_i\sqrt{W}}\\
			\leq \frac{1-\sqrt{W}}{W} \sum_{i \in P}\lambda_i (w_i - \sqrt{W})
		\end{array}
	\]
	It folows that $(1+\sqrt{W})\alpha_N - (1-\sqrt{W})\alpha_P \leq \delta_2$,
	where~$\alpha_P = \sum_{i \in P}\lambda_i(w_i - \sqrt{W})$ and
	$\alpha_N=\sum_{i \in N}\lambda_i (\sqrt{W}-w_i)$; so we get $(\alpha_N-\alpha_P) +
	\sqrt{W}(\alpha_N+\alpha_P) \leq \epsilon$.
  Moreover, 
	$\sqrt{W}(\alpha_N+\alpha_P) \leq 2\epsilon$ since $\alpha_P-\alpha_N =\delta_1 \leq \epsilon$
  by~\eqref{eqn:NP}.
	But we also have
	$\alpha_N+\alpha_P\geq \frac{\sqrt{W}}{1+\sqrt{W}}$ by 
  $|w_i -\sqrt{W}| \geq \frac{\sqrt{W}}{1+\sqrt{W}}$.
	It follows that $\frac{W}{1+\sqrt{W}} \leq \sqrt{W}(\alpha_N+\alpha_P) \leq 2\epsilon$, which is a
	contradiction with our choice of~$\epsilon$.
\end{proof}

We now use the above developments to prove the NP-hardness of the reachability and safety problems for MEMDPs.

\begin{proof}[Proof of Theorem~\ref{thm:nphard}]
  Observe that~$W$ can be computed in polynomial time.
  For the  safety problem, note that by the previous lemma,
  the existence of a strategy~$\sigma$ with~$\forall i=1,2, p_i^\sigma \leq \sqrt{W}$ is equivalent to the existence
  of a pure strategy~$\pi$ with~$\forall i=1,2, p_i^\pi = \sqrt{W}$, which we proved to be equivalent to the existence of a solution
  of the subset product problem; so the hardness follows.
  For the reachability problem, we simply note that in our MEMDP~$M$, under any strategy, the sum of the reachability probabilities of~$T$ and~$\bot$
  equals~$1$. Thus, if we write~$q_i^\sigma = \pr_{M_i,s_1}^\sigma[\Diamond \bot]$, we get that for any strategy~$\sigma$, 
  \[
  \forall i=1,2, p_i^\sigma \leq \sqrt{W} \Leftrightarrow 
  \forall i=1,2, q_i^\sigma \geq 1- \sqrt{W}.
  \]
  So the existence of a strategy achieving probabilities at least $(1-\sqrt{W},1-\sqrt{W})$
  is equivalent to the existence of a solution in the subset product problem.
\end{proof}

The hardness of the $\epsilon$-gap problems also follow immediately from the previous lemma.

\begin{theorem}
	\label{lemma:gaphard}
	The $\epsilon$-gap problem for MEMDPs is NP-hard.
\end{theorem}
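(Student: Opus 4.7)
The plan is to re-use the reduction from \textsf{Product-Partition} already constructed for Theorem~\ref{thm:nphard}, together with the quantitative slack afforded by Lemma~\ref{lemma:hardness-epsilon}. First, I would observe that we may assume w.l.o.g.\ that $V = v_1 \cdots v_n$ is a perfect square (otherwise the \textsf{Product-Partition} instance is trivially a NO-instance, since the two halves of any partition must have equal product $\sqrt{V}$). Under this assumption $\sqrt{W} = 1/\sqrt{V}$ is rational and computable in polynomial time, so it can be encoded with polynomially many bits.

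Given an instance of \textsf{Product-Partition}, I would build exactly the MEMDP $M$ of Theorem~\ref{thm:nphard}, take the target to be $\{\bot\}$, and set the thresholds $\alpha_1 = \alpha_2 = 1 - \sqrt{W}$ together with the gap parameter $\varepsilon = \frac{1}{5V}$. One checks $\varepsilon < \sqrt{W}/4$ since $\frac{1}{5V} < \frac{1}{4\sqrt V}$ whenever $V \ge 1$; the whole instance (including $\varepsilon$) is produced in polynomial time. Writing $q_i^\sigma = \pr_{M_i,s_1}^\sigma[\objreach(\bot)]$ and $p_i^\sigma = \pr_{M_i,s_1}^\sigma[\objreach(T)]$, we have $p_i^\sigma + q_i^\sigma = 1$ for every strategy $\sigma$, since only $T$ and $\bot$ are absorbing and almost surely one of them is reached.

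Now I argue the two directions of the $\varepsilon$-gap correspondence. If \textsf{Product-Partition} has a solution $I \subseteq \{1,\dots,n\}$, then the pure strategy that plays $b$ at $s_i$ iff $i \in I$ yields $p_1^\pi = p_2^\pi = \sqrt{W}$, so $q_i^\pi = 1-\sqrt{W} = \alpha_i$ and the answer must be YES. Conversely, assume the gap algorithm answers YES on some $\sigma$, meaning $q_i^\sigma \ge \alpha_i - \varepsilon = 1-\sqrt{W}-\varepsilon$ for $i=1,2$, or equivalently $p_i^\sigma \le \sqrt{W}+\varepsilon$. Writing $p_i^\sigma = \sqrt{W} + \delta_i$ with $\delta_i \in [-\sqrt{W},\varepsilon]$ and invoking Lemma~\ref{lemma:hardness-epsilon} (whose hypothesis $\varepsilon < \sqrt{W}/4$ we enforced), there exists a pure strategy $\pi$ with $p_1^\pi = p_2^\pi = \sqrt{W}$, which in turn gives a solution to \textsf{Product-Partition}. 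Thus a YES answer implies a positive \textsf{Product-Partition} instance and a NO answer implies a negative one, while the ``arbitrarily otherwise'' promise region is never actually needed on reduction instances.

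The main technical hurdle is the one already dispatched by Lemma~\ref{lemma:hardness-epsilon}, namely that \emph{randomised} strategies cannot do any better than the best pure one up to the slack $\varepsilon$; once that is in hand, the remaining work is the bookkeeping to guarantee that $\sqrt{W}$ and $\varepsilon$ are rationals of polynomial bit-length (handled by the perfect-square assumption on $V$) and that the threshold difference $\alpha_i - (\alpha_i - \varepsilon) = \varepsilon$ is strictly positive and cleanly separates YES from NO instances. NP-hardness of the $\varepsilon$-gap problem follows.
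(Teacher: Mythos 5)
Your proof is correct and follows essentially the same route as the paper's: the same \textsf{Product-Partition} reduction from Theorem~\ref{thm:nphard}, complementation via the target $\bot$ so that $p_i^\sigma + q_i^\sigma = 1$, and Lemma~\ref{lemma:hardness-epsilon} to rule out any strategy in the slack band $[\sqrt{W},\sqrt{W}+\epsilon]$ when no partition exists. Your explicit choice $\epsilon = 1/(5V)$ and the perfect-square check just make precise the paper's remark that $\epsilon$ is polynomial-time computable.
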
	
\begin{proof}
	We reduce \textsf{Product-Partition} to the $\epsilon$-gap problem for reachability and safety
    in MEMDPs.
    We consider the reduction above, noting that~$\epsilon$ can be computed in polynomial time.
    
    We start by the safety problem, which consists in finding a strategy~$\sigma$ with
    $\forall i=1,2, p_i^\sigma \leq \sqrt{W}$.

	As seen above, if \textsf{Product-Partition} has a solution, then there exists a pure strategy
	in~$M$ with reachability probabilities equal to~$(\sqrt{W},\sqrt{W})$, so the~$\epsilon$-gap instance is
	positive. If \textsf{Product-Partition} has no solution, then there is
	no pure strategy whose reachability probabilities are~$(\sqrt{W},\sqrt{W})$.
    Therefore, by Lemma~\ref{lemma:hardness-epsilon}, there is no
	strategy whose reachability probabilities are component-wise at most $(\sqrt{W}+\epsilon,\sqrt{W}+\epsilon)$. Thus, the $\epsilon$-gap instance is negative.

    For the reachability problem, we similarly consider as target $\bot$, so the question is whether
    for some strategy~$\sigma$, $q_i^\sigma\geq 1-\sqrt{W}$. 
    As in the safety case, if \textsf{Product-Partition} has a solution, then a pure strategy exists
    achieving~$p_i = \sqrt{W}$, which means~$q_i = 1-\sqrt{W}$ for both~$i=1,2$.
    Otherwise, by Lemma~\ref{lemma:hardness-epsilon}, for any~$\sigma$,
    $\exists i=1,2, p_i^\sigma > \sqrt{W}+\epsilon$, which means that
    $\exists i=1,2, q_i^\sigma < 1 - \sqrt{W} - \epsilon$.
\end{proof}

\subsection{Fixed-Memory Strategies}

As an upper bound on the above problem, we show that quantitative reachability for
strategies with a fixed memory size can be solved in polynomial space.
The algorithm consists in encoding the strategy and the probabilities achieved
by each state and each environment, as a bilinear equation, and solving these
in polynomial space in the equation size (see~\cite{Canny-stoc98} for general
polynomial equations).

This case will be used, in the next section, to derive an approximation algorithm for the general problem.

\newcommand\setsyes{\ensuremath{S^{\text{yes}}}}
\newcommand\setsno{\ensuremath{S^{\text{no}}}}
\newcommand\setsrest{\ensuremath{S^{\text{?}}}}

We start by analyzing the case of MDPs.
Given an MDP~$M=(S,A,\delta,r)$, and target set~$T$, consider a subset $\setsno$ of states
and $\setsrest = S \setminus (\setsno \cup T)$. We will write an equation to
solve the reachability problem as follows.
For a starting state~$s_0$, and desired reachability probability~$\lambda$, 
we define the following equation with unknowns $x_s, p_{s,a}$ for all $s
\in \setsrest$, $a \in A(s)$.
\begin{equation}
	\label{eqn:reach}
	\begin{array}{l}
		\forall s \in \setsno, x_s = 0,\\
		\forall s \in T, x_s = 1,\\
		\forall s \in \setsrest, x_s = \sum_{a \in A(s)} p_{s,a} \sum_{t \in S} \delta(s,a,t) x_t,\\
		\forall s \in S, \sum_{a \in A(s)} p_{s,a} = 1,\\
		\forall s \in S, a \in A(s), p_{s,a} \geq 0,\\
		x_{s_0}\geq \lambda
	\end{array}
\end{equation}

For any solution~$(\bar{x},\bar{p})$ of~\eqref{eqn:reach}, let us denote by
$\sigma_{\bar{p}}$ the strategy defined by $\sigma_{\bar{p}}(s,a) =
p_{s,a}$.
Let us also denote by $M^{\bar{p}}$ the Markov chain
obtained from~$M$ by fixing the probability of each action~$a$ from~$s$ to
$p_{s,a}$.
\begin{lemma}
	\label{lemma:reacheqn}
	Consider any~$\setsno\subseteq S$ and any solution 
	$\bar{x},\bar{p}$ of~\eqref{eqn:reach}. If all states~$s$
	of $M^{\bar{p}}$ with zero probability of reaching~$T$ belong to $\setsno$,
	then $x_s = \pr_{M,s}^{\sigma_{\bar{p}}}[\objreach(T)]$.
	Conversely, for any stationary strategy~$\sigma$, 
	such that $\pr_{M,s_0}^\sigma[\objreach(T)]\geq \lambda$,
	there exists a subset~$\setsno\subseteq S$ such that
	$x_s = \pr_{M,s}^\sigma[\objreach(T)]$ and $p_{s,a} = \sigma(s,a)$
	are the unique solution of~\eqref{eqn:reach}.
\end{lemma}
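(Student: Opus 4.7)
The plan is to view the first four lines of \eqref{eqn:reach}, once $\bar p$ is fixed, as the linear Bellman system characterizing the reachability probabilities in the Markov chain $M^{\bar p}$. Both directions then reduce to the standard fact that this system has a unique solution provided the states set to $0$ are exactly (or at least include) those from which $T$ is unreachable under $\sigma_{\bar p}$. I would split the proof into the forward (``solution $\Rightarrow$ reachability'') and converse (``strategy $\Rightarrow$ solution'') directions, handling uniqueness as a common linear-algebraic step.

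For the forward direction, fix $(\bar x, \bar p)$ satisfying \eqref{eqn:reach} and let $v_s = \pr_{M,s}^{\sigma_{\bar p}}[\objreach(T)]$. The first step is to verify that $v$ also satisfies \eqref{eqn:reach}: on $T$ we have $v_s = 1$ since target states are absorbing by Assumption~\ref{assum:absorbing}; on $\setsno$ we have $v_s = 0$ by the hypothesis that all zero-probability states lie in $\setsno$; and on $\setsrest$ the equality $v_s = \sum_a p_{s,a} \sum_t \delta(s,a,t) v_t$ is the one-step Bellman identity for $\sigma_{\bar p}$. The second step is to show that, with $\bar p$ fixed, the $x$-system has a unique solution, so that $\bar x = v$. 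Restricted to $\setsrest$, this system is $x = Q x + r$ with $Q_{s,t} = \sum_a p_{s,a}\delta(s,a,t)$ for $s,t \in \setsrest$ and $r_s = \sum_a p_{s,a} \sum_{t \in T}\delta(s,a,t)$. The hypothesis on $\setsno$ means every $s \in \setsrest$ reaches $T$ with positive probability in $M^{\bar p}$, hence the substochastic matrix $Q$ restricted to $\setsrest$ satisfies $Q^n \to 0$, so $I - Q$ is invertible and the unique solution is $\bar x = (I-Q)^{-1} r = v$.

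For the converse, given a stationary $\sigma$ with $\pr_{M,s_0}^\sigma[\objreach(T)] \geq \lambda$, I would set $p_{s,a} = \sigma(s,a)$, $x_s = \pr_{M,s}^\sigma[\objreach(T)]$, and take $\setsno = \{s : x_s = 0\}$ (equivalently, the states unable to reach $T$ under $\sigma$). Each constraint of \eqref{eqn:reach} is then immediate: the probability-distribution and nonnegativity constraints on $p$ come from $\sigma$; $x_s = 1$ on $T$ holds by absorbency of targets; $x_s = 0$ on $\setsno$ is the definition; the recurrence on $\setsrest$ is the Bellman identity for $\sigma$; and $x_{s_0} \geq \lambda$ is the assumption. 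Uniqueness of this solution then follows by invoking the forward-direction argument for this particular $\setsno$, which by construction contains exactly the zero-probability states.

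The main obstacle is the uniqueness step. The Bellman recurrence alone admits many spurious solutions: for example, any constant assignment on a bottom strongly connected component of $M^{\bar p}$ that is trapped inside $\setsrest$ satisfies the recurrence. The role of the condition on $\setsno$ is precisely to eliminate such obstructions by forcing every state unable to reach $T$ out of $\setsrest$, which is what makes $I - Q$ invertible; this is where care is needed, and I would be explicit about invoking standard absorbing-Markov-chain theory (\emph{i.e.}, $Q$ having spectral radius strictly less than $1$) rather than only the algebraic form of the equations.
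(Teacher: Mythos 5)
Your proof is correct and follows essentially the same route as the paper: fix $\bar{p}$, observe that the system becomes the standard linear characterization of reachability probabilities in the Markov chain $M^{\bar{p}}$, and use the fact that this system has a unique solution once every state unable to reach $T$ has been placed in $\setsno$, with the converse direction obtained by taking $\setsno$ to be exactly the set of zero-probability states under $\sigma$. The only difference is cosmetic: the paper cites the uniqueness fact directly (\cite[Theorem 10.19]{BK-book08}), whereas you spell out the underlying absorbing-chain argument via the substochastic matrix $Q$ on $\setsrest$ and the invertibility of $I-Q$.
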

\begin{proof}
	Fix any solution~$(\bar{x},\bar{p})$ of~\eqref{eqn:reach}, and assume that 
	all states~$s$ with a probability of~$0$ of reaching~$T$ satisfy $s \in
	\setsno$. Then~$\bar{x}$ is the solution of the equation obtained by
	fixing~$\bar{p}$.
	But this equation has a unique solution which gives the reachability
	probabilities from each state (see \textit{e.g.} \cite[Theorem
	10.19]{BK-book08}).

	Conversely, given a stationary strategy~$\sigma$, we can define $\setsno$ as
	the set of states from which no path leads to~$T$ in the Markov chain
	$M^{\sigma}$, and by fixing the probabilities
	$p_{s,a} = \sigma(a\mid s)$ in \eqref{eqn:reach}, the unique
	solution is the vector of reachability probabilities.
\end{proof}

We now adapt \eqref{eqn:reach} to MEMDPs and prove the following theorem.

\begin{theorem}
	\label{lemma:fixed-memory-reach}
	The quantitative reachability and safety problems for $K$-memory strategies can
	be solved in polynomial space in~$K$ and in the size of~$M$.
\end{theorem}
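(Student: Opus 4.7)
The plan is to encode the existence of a $K$-memory strategy achieving the reachability (or safety) thresholds in both environments as a sentence in the existential theory of the reals, of polynomial size in $|M|$ and $K$, and then invoke Canny's result~\cite{Canny-stoc98} which decides such sentences in PSPACE.

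First I would build the product construction: a $K$-memory stochastic strategy $\sigma=(\calM,\sigma_a,\sigma_u,\alpha)$ with $|\calM|=K$ induces, together with each environment $M_i$, a Markov chain on state space $S\times \calM$. I would introduce real variables for the strategy: $p_{s,m,a}$ for $\sigma_a(s,m)(a)$ (with $s\in S$, $m\in \calM$, $a\in A(s)$), $u_{s,m,a,m'}$ for $\sigma_u(s,a,m)(m')$, and $\alpha_m$ for the initial distribution, subject to the obvious non-negativity and normalization constraints. Crucially the \emph{same} strategy variables are used for both environments. Then, in the spirit of~\eqref{eqn:reach} applied to each of the two product Markov chains $M_i^\sigma$, I introduce variables $x^i_{s,m}$ for the reachability probabilities in environment $i$, and write the linear system
\[
x^i_{s,m} = \sum_{a\in A(s),\, m'\in\calM,\, t\in S} p_{s,m,a}\,u_{s,m,a,m'}\,\delta_i(s,a,t)\,x^i_{t,m'}
\]
for each non-target, non-$\setsno$ pair $(s,m)$, together with $x^i_{s,m}=1$ on $T\times \calM$ and $x^i_{s,m}=0$ on $\setsno^i$, and finally the target constraint $\sum_{m} \alpha_m\,x^i_{s_0,m} \geq \alpha_i$ for $i=1,2$. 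The equations are polynomial (in fact of degree at most three) in the unknowns.

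The subtlety is exactly as in Lemma~\ref{lemma:reacheqn}: the linear system for $x^i$ given fixed strategy parameters only has a unique solution equal to the true reachability probabilities once the zero-probability states have been removed, i.e.\ once $\setsno^i$ is chosen correctly. I would handle this by \emph{existentially guessing} the two subsets $\setsno^1,\setsno^2\subseteq S\times \calM$ (there are $2^{2|S|K}$ possibilities, and a PSPACE machine can enumerate them one by one) and then, for each guess, verify that the polynomial system above is satisfiable and that the guess is consistent, namely that in $M_i^{\sigma}$ no state of $\setsno^i$ can reach $T\times\calM$. The consistency check can itself be encoded as additional polynomial (in fact purely graph-theoretic, turned into polynomial non-equalities on supports) constraints inside the existential formula, using reachability in the support graph induced by the $p_{s,m,a}\,u_{s,m,a,m'}$ variables. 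Canny's algorithm then decides satisfiability of the whole sentence in space polynomial in its size, giving the claimed PSPACE bound. For safety objectives I would use the dual encoding: replace the target by the set of forbidden states turned absorbing, and require $\sum_m \alpha_m x^i_{s_0,m}\leq 1-\alpha_i$ where $x^i$ now denotes the reachability probability of the forbidden set.

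The main obstacle I expect is the correct treatment of the $\setsno$ sets: without fixing them, the polynomial system has spurious solutions (for end-components with self-consistent non-zero $x^i_{s,m}$ values that do not correspond to real reachability probabilities). Enumerating subsets of $S\times \calM$ is fine complexity-wise but the consistency check must be expressible inside the existential first-order formula over the reals, and I must verify that this check only blows up the formula by a polynomial factor. Modulo this encoding, everything reduces to a standard application of Canny's PSPACE decision procedure, and the overall space usage remains polynomial in $K$ and $|M|$.
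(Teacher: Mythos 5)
Your proposal is correct and follows essentially the same route as the paper: embed the $K$ memory elements into the product state space, use a common set of strategy variables for both environments, guess the zero-probability sets $\setsno^1,\setsno^2$, write the reachability probabilities as a polynomial equation system solved via Canny's PSPACE procedure, and validate the guess exactly in the spirit of Lemma~\ref{lemma:reacheqn}, with safety handled by the complementary inequality. The only cosmetic differences are that the paper merges $\sigma_a$ and $\sigma_u$ into a single joint variable $p_{s,m}(a,m')$ (making the system bilinear rather than degree three) and guesses $\setsno_i$ as a subset of $S$ rather than of $S\times\calM$, performing the consistency check after solving rather than inside the formula; your product-space formulation of the guessed sets is, if anything, the more careful one.
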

\begin{proof}
	We give the proof for reachability objectives. The case of safety is very
	similar and will be sketched.

	For any MEMDP~$M$, and given target states $T$, let us fix
	$\setsno_i\subseteq S$, for each $M_i$. 
	Given~$K$, define the set $M = \{1,\ldots,K\}$ of \emph{memory elements}, and
	fix an initial memory element $m_0 \in M$.
	Given desired reachability probabilities $\alpha_1,\alpha_2$ from state~$s_0$, 
	we write the following equation $E(\setsno_1,\setsno_2)$.
	\begin{equation}
	\label{eqn:reach-memdp}
	\begin{array}{l}
		\forall s \in \setsno_1, m \in M, x_{s,m} = 0,\\
		\forall s \in T, m \in M, x_{s,m} = 1,\\
		\forall s \in \setsrest_1, m \in M, 
		x_{s,m} = \sum_{a \in A(s), m' \in M} \sum_{t \in S} p_{s,m}(a,m')
		\delta_1(s,a,t) x_{t,m'},\\
		\forall s \in \setsno_2, m \in M, y_{s,m} = 0,\\
		\forall s \in T, m \in M, y_{s,m} = 1,\\
		\forall s \in S, m \in M, y_{s,m} = \sum_{a \in A(s), m' \in M} \sum_{t \in S} p_{s,m}(a,m') \delta_2(s,a,t) y_{t,m'},\\
		\forall s \in S, m \in M, \sum_{a \in A(s), m' \in M} p_{s,m}(a,m') = 1,\\
		\forall s \in S, a \in A(s), m,m' \in M, p_{s,m}(a,m') \geq 0,\\
		x_{s_0,m_0}\geq \alpha_1, y_{s_0,m_0}\geq \alpha_2.
	\end{array}
	\end{equation}

	The equation consists in embedding the memory in the MDPs. Each unknown
	$p_{s,m}(a,m')$ corresponds to the probability of choosing
	action~$a$ and changing memory to~$m'$ given state~$s$ and memory~$m$.
	Thus $\sum_{m' \in M}p_{s,m}(a,m')$ is the probability of choosing
	action~$a$ at~$s,m$.
	
	Now polynomial space procedure proceeds as follows. We first guess the sets
	$\setsno_1,\setsno_2$, write the equation $E(\setsno_1,\setsno_2)$, and solve
	it in deterministic polynomial space. We then check, for each~$i=1,2$,
	whether all states~$s$ from which the probability of reaching~$T$ is~$0$ belong
	to~$\setsno_i$. We accept if this is the case, and reject otherwise.

	The correctness follows from Lemma~\ref{lemma:reacheqn}. In fact, if there is
	a stationary strategy achieving probabilities $\alpha_1$ and $\alpha_2$ and~$s_0$,
	then there exist the sets $\setsno_1,\setsno_2$ of
	$0$-probability states, and for this guess \eqref{eqn:reach-memdp} has a
	solution obtained by fixing $p_{s,a} = \sigma(a \mid s)$, and where
	$x_s$ is the probability achived in~$M_1$ at~$s$, and~$y_s$ at~$M_{2}$.
	Therefore the procedure accepts. If there
	is no such strategy, then for all guesses, either the desired probabilities do not 
	satisfy the lower bounds, or one of the sets $\setsno_i$ does not contain all 
	$0$-probability states.

	The problem can be solved similarly for safety properties. In fact, the
	events of avoiding $T$ and reaching~$T$ are complementary.
	Equation \eqref{eqn:reach} and Lemma~\ref{lemma:reacheqn} can be adapted for
	safety objectives by simply requiring $x_{s_0}\leq \lambda$, which means that
	the safety property holds with probability at least $1-\lambda$
  in Equation~\eqref{eqn:reach-memdp}.
\end{proof}

\subsection{Approximation Algorithm}
\label{section:approximation}
We now show that considering finite-memory strategies are hardly restrictive, in
the sense that they can be used to approximately achieve the value.
We also give a memory bound that is sufficient to approximate the value by any
given $\epsilon$.
\begin{theorem}
	\label{lemma:reach-finite-suffices}
	For any MEMDP~$M$ with only trivial DECs, reachability objective~$\Phi$, strategy~$\sigma$,
	and $\epsilon>0$, 
	there exists a $N$-memory strategy $\sigma'$ with
	\(
		\forall i=1,2, \pr_{M_i,s}^{\sigma'}[\Phi] \geq
		\pr_{M_i,s}^\sigma[\Phi] - \epsilon,
	\)
	where~$N=(|S|+|A|)^{\frac{4|S|^3|A|^2}{p^{|S|}\eta^2}\log^3(1/\epsilon)}$,
	with~$p$ the smallest nonzero probability and 
	$\eta = \min\{|\delta_1(s,a,s') - \delta_2(s,a,s')| \mid
	s,a,s' \text{ s.t. } \delta_1(s,a,s') \neq \delta_2(s,a,s')\}$.
\end{theorem}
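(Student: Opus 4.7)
The plan is a finite-horizon truncation argument. Given $\sigma$ and $\epsilon > 0$, we construct $\sigma'$ that agrees with $\sigma$ on all histories of length at most a fixed horizon $H$, and on longer histories plays an optimal pure memoryless strategy in the environment identified from the length-$H$ prefix. The memory needed is dominated by the number of histories of length at most $H$, namely $O((|S||A|)^H) \leq (|S|+|A|)^{2H}$, so matching the stated $N$ amounts to showing that $H = O(|S|^3|A|^2 \log^3(1/\epsilon)/(p^{|S|}\eta^2))$ suffices. Formally, the memory state of $\sigma'$ records either (i) the current history, as long as its length is $\leq H$, or (ii) a single environment guess $\hat{\imath} \in \{1,2\}$ together with the current MDP state, which together drive a memoryless continuation.

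The correctness relies on two structural facts under Assumption~\ref{assum:only-trivial-dec}. First, Lemma~\ref{lemma:mec-transient} implies that every non-absorbing MEC of $M_i$ is transient in $M_{3-i}$, and Lemma~\ref{lemma:mec-transient-K} quantifies this: after $K = O(|S|\log(1/\epsilon')/p^{|S|})$ consecutive steps inside such a MEC, $M_{3-i}$ assigns probability at most $\epsilon'$ to that event. Iterating over the at most $|S|$ distinct non-trivial MECs a run can traverse before settling shows that within $O(|S|K)$ steps the play either sits in a trivial DEC, where the future is irrelevant, or is in a MEC specific to a single environment. Second, within any visited state-action $(s,a)$ with $\delta_1(s,a) \neq \delta_2(s,a)$, Hoeffding's inequality as in Lemma~\ref{lemma:strategy-for-mec-as} ensures that after $N_0 = O(\log(1/\epsilon')/\eta^2)$ visits the empirical frequency identifies the true environment with confidence $1-\epsilon'$. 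Setting $\epsilon' = \epsilon/(3|S|^2|A|)$ and choosing $H$ so that both the MEC-escape phase and the Hoeffding accumulation succeed on at least one distinguishing $(s,a)$ with joint failure probability $\leq \epsilon/3$ in each $M_i$ under $\sigma$ gives the claimed horizon: the $\log^3(1/\epsilon)$ factor combines one $\log(1/\epsilon)$ from the per-edge Hoeffding bound, one from Lemma~\ref{lemma:mec-transient-K}, and one from the iteration over successive MECs; the polynomial factors $|S|^3|A|^2$ absorb the number of edges sampled and nested MEC traversals.

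For the error estimate, decompose over length-$H$ prefixes
\[
\pr_{M_i,s}^\sigma[\Phi] - \pr_{M_i,s}^{\sigma'}[\Phi] = \sum_{h} \pr_{M_i,s}^\sigma[h]\bigl(q_i^\sigma(h) - q_i^{\sigma'}(h)\bigr),
\]
where $q_i^\tau(h)$ denotes the conditional probability of $\Phi$ from $h$ in $M_i$ under $\tau$. Prefixes ending in absorbing states contribute nothing, since the objective is already decided. Prefixes on which $\sigma'$ correctly identifies $i$ give $q_i^{\sigma'}(h) \geq q_i^\sigma(h)$, since the memoryless continuation is optimal in $M_i$. The remaining prefixes — those misidentifying the environment, or still unresolved at time $H$ — form a set whose total mass under $\sigma$ in each $M_i$ is $\leq \epsilon$ by the two pillars above, yielding the claimed approximation.

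The main obstacle is the precise arithmetic behind the choice of $H$: one must combine the escape-time bound of Lemma~\ref{lemma:mec-transient-K}, which scales in $1/p^{|S|}$, with the Hoeffding sample complexity scaling in $1/\eta^2$, and show that a single horizon simultaneously handles (i) plays that stabilize inside a single non-DEC MEC of one $M_i$, (ii) plays that traverse several such MECs before reaching a trivial DEC, and (iii) plays that accumulate distinguishing samples only sporadically along the way. The careful bookkeeping to route each of these behaviours through the same polynomial-in-$|S|,|A|$, cubic-in-$\log(1/\epsilon)$ horizon is the technical heart of the argument; everything else is a routine application of the two lemmas above and the total-variation decomposition.
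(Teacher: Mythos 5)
Your overall architecture (truncate $\sigma$ at a horizon, then switch to the optimal memoryless strategy of the environment identified by empirical frequencies) matches the paper's, and your error decomposition over length-$H$ prefixes is fine as far as it goes. The genuine gap is in the second ``pillar'': you never establish that the ``still unresolved at time $H$'' prefixes have small mass, and the mechanism you invoke for it does not work. You argue via Lemma~\ref{lemma:mec-transient}/\ref{lemma:mec-transient-K} that the play quickly either hits a trivial DEC or ``is in a MEC specific to a single environment.'' But (i) there is no uniform bound on the time for a play to settle into a MEC under an arbitrary strategy --- $\sigma$ may keep an action available that exits the component and delay taking it arbitrarily, and it may also wander outside every MEC for long stretches --- so ``iterating over the at most $|S|$ MECs before settling'' is not a high-probability statement with any horizon depending only on $|S|,p,\eta,\epsilon$; (ii) even momentarily sitting inside a non-DEC MEC of one $M_i$ identifies nothing unless $\sigma$ actually \emph{plays} the distinguishing or revealing actions of that component, which an arbitrary $\sigma$ need not do; and (iii) Lemma~\ref{lemma:mec-transient-K} bounds the probability, \emph{in the other environment}, of staying $K$ steps inside such a component --- it is used in the paper for the limit-sure and parity reductions, not here, and it does not produce the distinguishing samples your identification step requires.

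What is actually needed, and what the paper proves, is a per-window accumulation bound that holds under \emph{every} strategy: letting $y(h)$ count the DEC-states plus the distinguishing actions occurring in $h$, one shows $\pr_{M_i,s}^\tau[y(X_1A_1\ldots A_{|S|-1}X_{|S|})\geq 1]\geq p^{|S|}$ for all $\tau$ and $s$. The proof decomposes $\tau$ into strategies pure on the first $|S|$ steps and examines the depth-$|S|$ unfolding: if a window contained neither a DEC state nor a distinguishing action, cutting each branch at the first repeated state would exhibit a non-distinguishing, non-trivial double end-component, contradicting Assumption~\ref{assum:only-trivial-dec}. Summing the window scores over $l$ blocks and applying Hoeffding then shows that by time $L=l|S|$, with probability $1-\epsilon$, either an (absorbing) DEC has been reached or some distinguishing pair has been sampled $K=2\log(1/\epsilon)/\eta^2$ times, at which point your Hoeffding identification step takes over. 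Without this structural claim the quantity your argument needs to accumulate --- the number of distinguishing actions actually taken --- is never controlled, and the choice of $H$ cannot be justified.
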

\begin{proof}[of Lemma~\ref{lemma:reach-finite-suffices}]
	By Definition~\ref{def:tilde} and Corollary~\ref{corollary:Mtilde}, we assume
	that~$M$ has only trivial DECs. Consider an arbitrary strategy~$\sigma$
	for~$M$. Define $\eta = \min\{|\delta_1(s,a,s') - \delta_2(s,a,s')| \mid
	s,a,s' \text{ s.t. } \delta_1(s,a,s') \neq \delta_2(s,a,s')\}$.
	We call the pair $(s,a)$ \emph{distinguishing} if for some~$s'$, $\delta_1(s,a,s')
	\neq \delta_2(s,a,s')$.
	Let us fix~$K = 2\frac{\log(1/\epsilon)}{\eta^2}$. Let~$p$ denote the smallest
	nonzero probability in~$M$ and $q = p^{|S|}$.

	Strategy~$\sigma'$ is defined identically to~$\sigma$ on all histories up to
	length~$L=l|S|$, where $l \geq
	\left(\frac{2|S||A|}{p^{|S|}\eta^2}\right)^2\log^3(1/\epsilon)$.
	Note that~$L$ is exponential, so the memory requirement is doubly exponential.
	Upon arrival to a DEC (thus, trivial and
	absorbing) it switches to a memoryless strategy.
	On any other history~$h_1\ldots h_L$, we distinguish cases:

	Assume there is a distinguishing pair that was seen at least~$K$ times
	in~$h$, and consider~$(s,a)$ the first such pair.
	Let us write $d_i = \delta_i(s,a,s')$ for some~$s'$ with $d_1\neq
	d_2$. Assume $d_i< d_{3-i}$ for some~$i=1,2$. Define~$c_{s,a}^L$ as the random
	variable denoting the number of occurences of $(s,a)$ in a prefix of length~$L$, 
	and $c_{s,a,s'}^L$ the number of times the state~$s'$ was reached after $(s,a)$.
	In~$\sigma'$, if $|\frac{c_{s,a,s'}^L}{c_{s,a}^L} d_i| < \frac{|d_1 - d_2|}{2}$, then we switch to the
	memoryless optimal strategy for~$M_i$.
	If no distinguishing pair satisfies this condition for any~$i=1,2$, then we
	switch to some arbitrary memoryless strategy.
	Strategy~$\sigma'$ is clearly finite-memory using $(|S|\cdot|A|)^L$ memory
	elements, for any choice of~$l$.

	First, let us show that conditioned on the event that some distinguishing pair
	was observed~$K$ times, the strategy~$\sigma'$ is $\epsilon$-optimal. In fact,
	By Hoeffding's inequality, for an edge $(s,a,s')$, we have
	for any strategy~$\tau$, 
	\[
		\pr_{M_i,s}^{\tau}\left[\left\vert\frac{c_{s,a,s'}^L}{c_{s,a}^L} -
		d_i\right\vert \geq \left|\frac{d_2-d_1}{2}\right| \mid
	c_{s,a}^L \geq K\right]
		\leq e^{-2K\frac{d_2-d_1}{2}^2} \leq \epsilon,
	\]
	which means that $\sigma'$ will switch to the optimal strategy for~$M_i$ from
	with probability at least $1-\epsilon$. 

	Let us denote by~$T_K^L$ the event that $\exists (s,a), c_{s,a}^L=K$, and $D^L$ the
	event that some DEC (therefore, trivial and absorbing) is reached.
	The rest of the proof consists in
	showing that with high probability either~$T_K^L$ occurs or the play is stuck in some absorbing
	state, and in any such history $\sigma'$ performs as good as~$\sigma$ up to~$\epsilon$.

	\medskip
	\textbf{Either $T_K^L$ or a DEC.}
	We will show that either $T_K^L$ or $D^L$ occurs with probability
	$1-\epsilon$.
	
	Let~$X_j$ denote the random variable giving the state at~$j$-th step,
	and~$A_j$ the $j$-th action.
	For any history $h$, let~$y(h)$ denote the number of states of~$h$ belonging to a DEC
	+ the number of distinguishing actions in~$h$.
	We show that, under any strategy~$\tau$, and for any state~$s$, $i=1,2$,
	$\pr_{M_i,s}^\tau[y(X_1A_1\ldots A_{|S|-1}X_{|S|})\geq 1]\geq q$.
	To prove this, we first write $\tau$ as a linear combination of strategies
	that are deterministic in the first $|S|$ steps: $\tau = \sum_{i} \lambda_i
	\pi_i$ for $(\pi_i)_i$ a finite family of strategies that are pure in the
	first $|S|$ steps, and $(\lambda_i)_i$ such that $\sum_i \lambda_i = 1$.
	We have that 
	$\pr_{M_i,s}^\tau[y(X_1A_1\ldots A_{|S|-1}X_{|S|})]
	= \sum_j \lambda_j \pr_{M_i,s}^{\pi_j} [y(X_1A_1\ldots A_{|S|-1}X_{|S|})\geq 1]$.
	We will prove that for each~$\pi_j$, 
	\[
		\pr_{M_i,s}^{\pi_j} [y(X_1A_1\ldots
		A_{|S|-1}X_{|S|})\geq 1]\geq q
	\].
	We consider the unfolding of depth $|S|$ from
	state~$s$ under strategy~$\pi_j$. If this unfolding contains a state~$t$ of a DEC,
	then the path from~$s$ to~$t$ has probability at least~$q$ under
	strategy~$\pi_j$ since it is deterministic in the first $|S|$ steps, 
	and the result follows. 
	If the unfolding contains a distinguishing action, then it will be taken
	similarly with probability at least~$q$.  
	Otherwise, assume the unfolding contains no DEC or distinguishing action.
	But in this case, if we cut each branch whenever a state is visited twice, we
	obtain an end-component in~$M_i$. Since no action is distinguishing, this is a
	non-distinguishing double end-component, which is a contradiction.

	It follows that 
	$\expect_{M_i,s}^\tau[y(X_1A_1\ldots A_{|S|-1}X_{|S|})]\geq q$ for any
	state~$s$ and any strategy~$\tau$.
	We factorize a given history of length~$L$ in to factors of length~$|S|$. Let~$Y_j$ be the random
	variable denoting $y(h_{(j-1)|S|+1\ldots j|S|})$. We just showed that
	$\expect_{M_i,s}^\tau[Y_j]\geq q$ for any strategy~$\tau$, state~$s$ and
	$j=1\ldots l$.
	Let $Y=\sum_{j=1}^l Y_j$.
	We use Hoeffding's inequality to write
	\[
		\pr_{M_i,s}^\tau[Y \leq \expect[Y] - t] \leq e^{-\frac{t^2}{2l|S|^2}},
	\]
	for any~$t>0$, since $\left|\expect_{M_i,s}^\tau[Y_j]\right|\leq 2|S|$. We get that 
	$\pr_{M_i,s}^\tau[Y \leq lq - t] \leq e^{-\frac{t^2}{2l|S|^2}}$ since $lq \leq
	\expect[Y]$. We would like to obtain that
	$\pr_{M_i,s}^\tau[Y \leq |S|\cdot |A|\cdot K ] \leq \epsilon$, which means
	that with probability at least~$1-\epsilon$, either $T_K^L$ or $D^L$ holds.
	Therefore, in the above equation, we require $e^{-\frac{t^2}{2l|S|^2}} \leq
	\epsilon$, which means
	\begin{equation}
		\label{eqn:t2l}
		\frac{t^2}{l} \geq 2\log(1/\epsilon)|S|^2,
	\end{equation}
	and we let $t = lq - |S||A|K$. To get~\eqref{eqn:t2l}, it suffices to ensure
	$\frac{(lq-|S||A|K)^2}{l} \geq 2\log(1/\epsilon)|S|^2$, which holds for
	our choice of~$l$.

	\medskip
	\textbf{End of the proof}
	We write $\pr_{M_i,s}^{\sigma'}[\phi]$ as
	\[\begin{array}{l}
			\pr_{M_i,s}^{\sigma'}[\phi \mid T_K^L \lor D^L ]
			\pr_{M_i,s}^{\sigma'}[T_K^L \lor D^L]
			+ \pr_{M_i,s}^{\sigma'}[\phi \mid \lnot T_K^L \land \lnot D^L]
			\pr_{M_i,s}^{\sigma'}[\lnot T_K^L \land \lnot  D^L]\\
		\end{array}
	\]
	We clearly have $\pr_{M_i,s}^\sigma[T_K^L\lor
	D^L]=\pr_{M_i,s}^{\sigma'}[T_K^L\lor D^L]$,
	and we showed above that
	$\pr_{M_i,s}^{\sigma'}[T_K^L \lor D^L]\geq 1-\epsilon$. 
	Thus, using the same decomposition, $\pr_{M_i,s}^\sigma[\phi \mid T_K^L \lor D^L]
	\geq \pr_{M_i,s}^\sigma[\phi] - \epsilon$.

	We will show that $\pr_{M_i,s}^{\sigma'}[\phi \land (T_K^L \lor D^L)]
	\geq (1-\epsilon) \pr_{M_i,s}^{\sigma}[\phi \land (T_K^L \lor D^L)]$,
	which implies $\pr_{M_i,s}^{\sigma'}[\phi \mid T_K^L \lor D^L ]
	\geq (1-\epsilon) \pr_{M_i,s}^{\sigma}[\phi \mid T_K^L \lor D^L]$.
	But let us first show how we conclude. Because
	$\pr_{M_i,s}^{\sigma'}[\phi]\geq \pr_{M_i,s}^{\sigma'}[\phi \mid
	T_K^L \lor D^L] \pr_{M_i,s}^{\sigma'}[T_K^L \lor D^L]$, combining with the
	above inequality, it follows
	\[ 
		\begin{array}{ll}
			\pr_{M_i,s}^{\sigma'}[\phi] &\geq (1-\epsilon)^2 (\pr_{M_i,s}^{\sigma}[\phi]
		- \epsilon)\\
		&\geq \pr_{M_i,s}^{\sigma}[\phi] - 3\epsilon,
		\end{array}
	\]
	as desired.
		
	We write 
	$\pr_{M_i,s}^{\sigma'}[\phi \land (T_K^L \lor D^L)]
	=\pr_{M_i,s}^{\sigma'}[\phi \land T_K^L]
	+ \pr_{M_i,s}^{\sigma'}[\phi \land D^L \land \lnot T_K^L]$.
	We have $\pr_{M_i,s}^{\sigma'}[\phi \mid D^L \land \lnot T_K^L]= 
	\pr_{M_i,s}^{\sigma}[\phi \mid D^L \land \lnot T_K^L]$
	for both~$i=1,2$ since the history ends in an absorbing state.
	It follows that 
	$\pr_{M_i,s}^{\sigma'}[\phi \land D^L \land \lnot T_K^L]= 
	\pr_{M_i,s}^{\sigma}[\phi \land D^L \land \lnot T_K^L]$.
	We now show that 
	$\pr_{M_i,s}^{\sigma'}[\phi \mid T_K^L ]
	\geq (1-\epsilon) \pr_{M_i,s}^{\sigma}[\phi \mid T_K^L ]$
	which implies similarly 
	$\pr_{M_i,s}^{\sigma'}[\phi \land T_K^L ]
	\geq (1-\epsilon) \pr_{M_i,s}^{\sigma}[\phi \land T_K^L ]$
	since $\pr_{M_i,s}^{\sigma'}[T_K^L] = \pr_{M_i,s}^{\sigma}[T_K^L]$.
  Let $S_K^L(i)$ denote
	the event that for the first distinguishing pair $(s,a)$  that appears~$K$
	times in the prefix of length~$L$, $|\frac{c_{s,a,s'}^L}{c_{s,a}^L} - d_i|
	\leq |\frac{d_1-d_2}{2}|$.
	We have
	\[
		\begin{array}{ll}
		\pr_{M_i,s}^{\sigma'}[\phi \mid T_K^L]
		&=
		\pr_{M_i,s}^{\sigma'}[\phi \mid S_K^L(i) \land T_K^L ]
		\pr_{M_i,s}^{\sigma'}[S_K^L(i) \mid T_K^L]
		+\\&~
		\pr_{M_i,s}^{\sigma'}[\phi \mid \lnot S_K^L(i) \land T_K^L]
		\pr_{M_i,s}^{\sigma'}[\lnot S_K^L(i) \mid T_K^L].
		\end{array}
	\]
	For any~$i=1,2$, we have $\pr_{M_i,s}^{\sigma'}[\lnot S_K^L(i) \mid T_K^L]\leq \epsilon$ as we
	showed above,
	so $\pr_{M_i,s}^{\sigma'}[S_K^L(i) \mid T_K^L]\geq 1-\epsilon$, and 
	$\pr_{M_i,s}^{\sigma'}[\phi \mid S_K^L(i) \land T_K^L]
	\geq \pr_{M_i,s}^{\sigma}[\phi \mid S_L^L(i) \land T_K^L ]$
	since~$\sigma'$ switches to the optimal strategy for~$M_i$.
	This shows that 
		$\pr_{M_i,s}^{\sigma'}[\phi \mid T_K^L ]\geq
		(1-\epsilon)\pr_{M_i,s}^{\sigma'}[\phi \mid T_K^L ]$.
\end{proof}

Combining Theorems~\ref{lemma:fixed-memory-reach}
and~\ref{lemma:reach-finite-suffices},
we derive an approximation algorithm:

\begin{theorem}
	\label{lemma:gapalgorithm}
	There is a procedure that works in $O(N\cdot|M|)$ space solving the~$\epsilon$-gap problem
	for quantitative reachability in MEMDPs.
	Moreover, whenever the procedure answers YES, there exists a strategy~$\sigma$
	such that $\forall i=1,2, \pr_{M_i,s}^{\sigma}[\objreach(T)]\geq \alpha_i-\epsilon$.
\end{theorem}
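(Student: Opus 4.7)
The plan is to combine Theorems~\ref{lemma:fixed-memory-reach} and~\ref{lemma:reach-finite-suffices} in the most direct way. First I would preprocess the input MEMDP~$M$ via Definition~\ref{def:tilde} to obtain $\hat{M}$, which has only trivial DECs, invoking Corollary~\ref{corollary:Mtilde} to know that optimal reachability values are preserved (and, more generally, that any strategy in~$\hat{M}$ can be realized in~$M$ up to an arbitrarily small loss by using the learning procedure of Lemma~\ref{lemma:strategy-for-mec-as} inside distinguishing DECs). This brings us into the setting where Assumption~\ref{assum:only-trivial-dec} holds, which is the hypothesis required by Theorem~\ref{lemma:reach-finite-suffices}.

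Next, on~$\hat{M}$ I would run the procedure of Theorem~\ref{lemma:fixed-memory-reach} with memory bound $K = N$ (the bound from Theorem~\ref{lemma:reach-finite-suffices}, computed for error~$\epsilon/2$) and thresholds $\alpha_i - \epsilon/2$. This uses $O(N\cdot |\hat{M}|) = O(N\cdot |M|)$ space by Theorem~\ref{lemma:fixed-memory-reach}, since the contraction only polynomially affects the size. If the procedure answers YES, return YES together with the witnessing $N$-memory strategy $\hat{\sigma}$, lifted to a strategy $\sigma$ on~$M$ which loses at most an additional $\epsilon/2$ in each environment by Lemma~\ref{lemma:strategy-for-mec-as} applied inside each distinguishing DEC; otherwise return NO.

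For correctness, I would argue the two directions separately. On a positive instance there exists $\sigma^*$ with $\pr_{M_i,s}^{\sigma^*}[\objreach(T)]\geq \alpha_i$ for both $i$; by Corollary~\ref{corollary:Mtilde} the same values are achieved by some strategy on $\hat{M}$, and then by Theorem~\ref{lemma:reach-finite-suffices} there is an $N$-memory strategy on $\hat{M}$ achieving $\alpha_i - \epsilon/2$, so the fixed-memory procedure returns YES and the lifted strategy reaches $\alpha_i - \epsilon$ as claimed. On a negative instance, no strategy on~$M$ (hence none on~$\hat{M}$, by Corollary~\ref{corollary:Mtilde}) achieves $\alpha_i - \epsilon$ in both environments, so in particular no $N$-memory strategy on~$\hat{M}$ achieves the stronger thresholds $\alpha_i - \epsilon/2$, and the procedure answers NO.

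The only delicate point, and the main obstacle, is the bookkeeping around the DEC contraction: one must verify that an $N$-memory strategy on $\hat{M}$ can be realized in~$M$ by an $N'$-memory strategy whose values degrade by at most~$\epsilon/2$ per environment, with $N'$ still bounded as in the theorem statement. This follows from Lemma~\ref{lemma:strategy-for-mec-as} applied inside each distinguishing MDEC with local error budget~$\epsilon/2$, combined with the trivial lifting on non-distinguishing DECs described in Definition~\ref{def:tilde} and the remark after it; the extra memory needed for the learning phase inside DECs is absorbed by the constant in the definition of~$N$ (and is dominated by the $O(N\cdot |M|)$ space bound), so the stated complexity and guarantee are preserved.
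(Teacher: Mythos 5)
Your proposal is correct and follows essentially the same route as the paper: compute the memory bound $N$ from Theorem~\ref{lemma:reach-finite-suffices} and feed it, together with shifted thresholds, into the fixed-memory PSPACE procedure of Theorem~\ref{lemma:fixed-memory-reach}, answering YES iff the resulting system is solvable. The only difference is that you make the DEC-contraction preprocessing and the $\epsilon/2$-plus-$\epsilon/2$ error split explicit (the paper absorbs the reduction to trivial DECs into the proof of Theorem~\ref{lemma:reach-finite-suffices} and uses the full $\epsilon$ directly), and in the NO direction the transfer from $M$ to $\hat{M}$ should be justified via the lifting of Lemma~\ref{lemma:strategy-for-mec-as} rather than Corollary~\ref{corollary:Mtilde} alone, which only speaks of per-environment optimal values; this is a presentational nuance, not a gap.
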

\begin{proof}[of Lemma~\ref{lemma:gapalgorithm}]
	We compute~$N$ given by Lemma~\ref{lemma:reach-finite-suffices}, which is doubly exponential in input,
	and apply Lemma~\ref{lemma:fixed-memory-reach} for $N$-memory
	strategies and target probabilities $\alpha_1-\epsilon$ and~$\alpha_2-\epsilon$.
	We solve the equation in polynomial space in the equation size,
	and answer yes if, and only if there is a solution.

	By Lemma~\ref{lemma:reach-finite-suffices}, if there exists a strategy
	achieving $(\alpha_1,\alpha_2)$, there is a $N$-memory strategy achieving
	$(\alpha_1-\epsilon,\alpha_2-\epsilon)$. So the procedure will answer yes.
	If no strategy achieves $(\alpha_1-\epsilon,\alpha_2-\epsilon)$, then, in
	particular, no finite-memory strategy achieves this vector, and the procedure
	will answer no.

	Last, observe that whenever the procedure answers yes, there exists a
	finite-memory strategy achieving $(\alpha_1-\epsilon,\alpha_2-\epsilon)$.
\end{proof}

In our case, the ``gap'' can be chosen arbitrarily small, and the procedure is
used to distinguish instances that are clearly feasible from those that are
clearly not feasible, while giving no guarantee in the borderline. Notice that
we do not have false positives; when the procedure answers
positively, the probabilities are achieved up to~$\epsilon$.

%

\section{Safety and Parity Objectives} \label{section:parity}
\subsection{The Almost-sure Case}
We consider safety and parity objectives, building on techniques
developed for reachability.
Recall that almost-sure and sure safety coincide in MEMDPs.
The equivalence of these with limit-sure safety is less
trivial, and follows from Lemma~\ref{lemma:only11-inf}:
\begin{lemma}
	Limit-sure safety is equivalent to almost-sure safety in MEMDPs.
\end{lemma}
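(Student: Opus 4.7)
The plan is to exploit Lemma~\ref{lemma:only11-inf} essentially verbatim, since it was proved for both reachability and safety objectives. One direction is immediate: any strategy witnessing almost-sure safety also witnesses limit-sure safety (set $\sigma_\epsilon = \sigma$ for every $\epsilon>0$). So the content is the converse: limit-sure safety implies almost-sure (equivalently, sure) safety.

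Fix a safety objective $\Phi = \Wsafety(T)$ and let $W = W(M,\Phi)$ be the set of limit-sure winning states. The first step is to observe that $W \subseteq T$. Indeed, by Assumption~\ref{assum:absorbing}, every state in $S\setminus T$ is absorbing in both environments, so from any such state $\pr_{M_i,s}^\sigma[\Wsafety(T)] = 0$ under every strategy $\sigma$; thus no state outside $T$ can be limit-sure winning. In particular, assuming $s_0$ is limit-sure winning for $\Phi$, we have $s_0 \in W \subseteq T$.

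The second step applies Lemma~\ref{lemma:only11-inf} directly to $\Phi = \Wsafety(T)$: there exists a memoryless strategy $\sigma_W$ under which, from any state of $W$, every run in $M_1$ and in $M_2$ stays \emph{surely} inside $W$. Combined with the inclusion $W \subseteq T$ from the previous step, this yields $\pr_{M_i,s_0}^{\sigma_W}[\Wsafety(W)] = 1$ and hence $\pr_{M_i,s_0}^{\sigma_W}[\Wsafety(T)] = 1$ for each $i=1,2$. Thus $\sigma_W$ achieves $\Phi$ surely (and in particular almost surely) from $s_0$ in both environments.

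There is no genuine obstacle here: the heavy lifting is already in Lemma~\ref{lemma:only11-inf}, which was proved uniformly for reachability and safety. The only step to verify carefully is the inclusion $W\subseteq T$, which relies on Assumption~\ref{assum:absorbing} ensuring that unsafe states are absorbing, so leaving $T$ even once rules out limit-sure safety. Once that observation is in place, sure safety of $\sigma_W$ is immediate, and the equivalence of sure and almost-sure safety in MEMDPs (noted earlier in the paper) closes the argument.
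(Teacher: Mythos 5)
Your proof is correct and follows exactly the route the paper intends: the paper itself states that the lemma ``follows from Lemma~\ref{lemma:only11-inf}'', and you supply the two missing observations (that $W(M,\Wsafety(T))\subseteq T$, and that $\sigma_W$ then keeps the play surely inside $T$) correctly. One tiny nitpick: the inclusion $W\subseteq T$ already follows from the fact that a run starting outside $T$ cannot satisfy $\Wsafety(T)$, without needing the absorbing-unsafe-states assumption, but invoking it does no harm.
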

Under Assumption~\ref{assum:absorbing}, safety is a special case of 
parity objectives; we rely on algorithms for parity to decide almost-sure safety
objectives. For quantitative safety, the results of the previous section can be adapted without
difficulty, but we omit the details.

Our first result is a polynomial-time algorithm for almost sure parity objectives.

By Lemma~\ref{lemma:mdp-mec-parity}, we know that for MDPs with parity objectives, inside end-components, the value
of a Parity objective is either~$0$ or~$1$, and in the latter case
a memoryless strategy which only depends on the support of the distributions
exists.
Thus, let us call an end-component~$D$ \emph{$\Phi$-winning} if there
exists a strategy inside~$D$ that satisfies~$\Phi$.

We denote by~$R^\Phi$ the set of revealed states from which~$\Phi$ surely holds.
\begin{algorithm}
	\KwData{MEMDP~$M$, $s \in S$, parity objective $\Phi$}
	 $U := \big(\almostsure(M_1,\Phi) \cap \almostsure(M_2,\Phi)\big) \cup R^{\Phi}$\;
 $M'$ := Sub-MEMDP of~$M$ induced by states~$s'$ s.t.  $\Val_{\Wsafety(U)}^*(\cup M,s')= 1$\;
 $T_i$:=  Set of states of $\Phi$-winning MECs of~$M_i'$\;
 \eIf{$\exists \sigma, \forall i=1,2, \pr_{M_i',s}^\sigma[\objreach(T_1 \cup T_2)]=1$}{
	 $\sigma'$ := Modify $\sigma$ as follows.
					At any state $s \in T_1$ (resp. $s \in T_2\setminus T_1$),
					if~$D$ denotes the MEC of~$M_1$ (resp. $M_2$) which
					contains~$s$, switch to a memoryless strategy winning for~$\Phi$
					compatible with~$D$\;
	 Return $\sigma'$\;
 }{
	Return NO\;
 }
 \caption{Almost-sure parity algorithm for MEMDPs}
	\label{alg:asparity}
\end{algorithm}

\begin{lemma}
	For any MEMDP~$M$, state~$s$, and parity objective~$\Phi$,
	Algorithm~\ref{alg:asparity} decides whether
	there exists a strategy achieving $\Phi$
	almost surely in~$M$, and computes a witnessing memoryless strategy.
\end{lemma}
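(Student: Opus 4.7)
The plan is to mirror the soundness/completeness structure of Theorem~\ref{lemma:asreach}, now exploiting the fact, from Lemma~\ref{lemma:mdp-mec-parity}, that within an end-component the almost-sure parity value is determined by the supports alone. Conceptually, $U$ collects the states from which almost-sure winning still has a chance in each environment, $M'$ collects those from which we can surely stay in~$U$, and the $\Phi$-winning MECs of each~$M_i'$ are the targets that a winning play must eventually hit in~$M_i'$.

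For \textbf{completeness}, assume $\sigma$ achieves $\Phi$ almost surely from~$s$. I would first argue that $\sigma$ keeps the play in~$U$ surely in both environments: otherwise it would reach either a revealed state outside~$R^{\Phi}$ (immediate loss in the revealed environment) or an unrevealed state outside $\almostsure(M_j,\Phi)$ for some~$j$, contradicting almost-sure winning. Hence $s \in M'$ and $\sigma$ is compatible with~$M'$. Using the classical fact that, in each~$M_i'$, $\Inf$ almost surely coincides with a MEC, and that $\sigma$ achieves~$\Phi$ almost surely, the reached MEC must be $\Phi$-winning in~$M_i'$, hence contained in~$T_i$. Consequently $\pr_{M_i',s}^\sigma[\objreach(T_1\cup T_2)]=1$ in both $M_i'$, and the existential test of the algorithm---which is an almost-sure reachability question on the MEMDP~$M'$ and is decidable by Algorithm~\ref{alg:asreach}---succeeds.

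For \textbf{soundness}, suppose the algorithm returns~$\sigma'$. Since every action used by~$\sigma'$ is an $M'$-action, the play stays in~$U$ surely in both environments, and by the choice of the reachability witness it reaches~$T_1\cup T_2$ almost surely. Upon reaching some~$t\in T_j$, we switch to a memoryless strategy~$\tau_j$ compatible with a $\Phi$-winning MEC~$D$ of~$M_j'$; by Lemma~\ref{lemma:mdp-mec-parity}, $\tau_j$ wins~$\Phi$ almost surely inside~$D$ in~$M_j$. To handle the other environment~$M_{3-j}$, I would show that (i)~any exit of~$D$ under~$\tau_j$ in~$M_{3-j}$ must go through a $(3-j)$-revealing edge, so that by Assumption~\ref{assum:revealed} and $M'\subseteq U$ it lands in~$R^{\Phi}_{3-j}$ and wins~$\Phi$; and (ii)~staying inside~$D$ forever restricts the play to a sub-MDP with matching supports in both environments, to which Lemma~\ref{lemma:mdp-mec-parity} applies, transferring the parity win from~$M_j$ to~$M_{3-j}$.

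The \textbf{main obstacle} is exactly step~(ii) of the soundness argument: $\tau_j$ is certified winning only in~$M_j$, and its behavior in~$M_{3-j}$ inside~$D$ is not obvious. Justifying it requires isolating the infinite-visit sub-MDP inside~$D$ under~$\tau_j$ in~$M_{3-j}$ as a (double) end-component sharing the same supports as the $M_j$-MEC on which~$\tau_j$ was certified, so that Lemma~\ref{lemma:mdp-mec-parity} can be invoked in a support-only way; alternatively, the whole analysis can be reduced to the case of trivial DECs via a contraction~$\hat{M}$ as in Section~\ref{section:dec} specialized to parity, which makes case~(ii) vacuous. Complexity is immediate: computing $\almostsure(M_i,\Phi)$, the $\Phi$-winning MECs of each~$M_i'$, the safety value in~$\cup M$, and the almost-sure reachability of $T_1\cup T_2$ via Algorithm~\ref{alg:asreach} are all polynomial.
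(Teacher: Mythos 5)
Your proposal is correct and follows essentially the same soundness/completeness structure as the paper's proof: completeness via staying in $U$ and reaching winning MECs, soundness via the case split between leaving the certifying MEC through a revealing edge (landing in $R^{\Phi}$) versus settling in a sub-end-component with identical supports in both environments. The ``main obstacle'' you flag is resolved in the paper exactly as you propose: the end-components realized inside~$D$ in the other environment are shown to be DECs (any support mismatch would force a revealing edge and hence an absorbing state inside them, or an almost-sure escape to $i$-revealed states in $R^{\Phi}$), so the parity win transfers by support-only reasoning as in Lemma~\ref{lemma:mdp-mec-parity}.
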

\begin{proof}
		Consider an instance~$M$, $s$ and~$\Phi$ for which the algorithm answers
		positively. 
%
		Under the returned strategy~$\sigma'$, some state~$s'$ from $T_1\cup T_2$ is visited almost
		surely for the first time in~$M_i$.
		If~$\sigma'$ switches to an optimal strategy for~$M_i'$ (that is, either $i=1$
		and~$s' \in T_1$, or $i=2$ and $s' \in T_2\setminus T_1$), then~$\Phi$ holds
		almost surely in~$M_i$ by definition.
		Otherwise, $s' \in D$ for a $\Phi$-winning MEC~$D$ of~$M_{3-i}'$ and~$\sigma'$ switches to an optimal
		strategy for~$M_{3-i}'$ that stays in~$D$.
		Let~$D_1',\ldots,D_m'$ be the set of all end-components included in~$D$ such that
		$\pr_{M_{3-i}',s'}^{\sigma'}[\Inf=D_j']>0$.
		First, observe that by Assumption~\ref{assum:revealed}, $R^\Phi \cup \bigcup_{j=1}^m D_j'$ is reached almost surely in~$M_i'$
		under~$\sigma'$ from~$s'$, since~$\sigma'$ is compatible with~$D$
		in~$M_{3-i}'$. Moreover, $\pr_{M_i',s'}^{\sigma'}[\Inf \in
		\{D_1',\ldots,D_m'\} \cup R^\Phi ]=1$.
		If some~$D_j'$ is a DEC, then $\sigma'$ also almost surely satisfies~$\Phi$
		in~$M_i'$ from~$D_j'$ by Lemma~\ref{lemma:mdp-mec-parity}. Otherwise, some action~$a$ from some state~$t$
		of~$D_j'$ has a different support in~$M_i'$ and~$M_{3-i}'$. Because~$D_j'$ is an
		end-component for~$M_{3-i}'$, we have $\supp(\delta_{3-i}(t,a)) \subsetneq
		\supp(\delta_i(t,a))$ since otherwise $D_j'$ would contain an absorbing state
		different than~$t$ (by Assumption~\ref{assum:revealed}), which is in contradiction with the fact that it is an
		end-component in~$M_{3-i}'$. Therefore, starting at~$D_j'$ in $M_i'$, under~$\sigma'$, the play
		almost surely leaves~$D_j'$ for a $i$-revealed state. By definition of~$M'$
		such a state is in~$R^\Phi$. Thus, $\pr_{M_i,s'}^{\sigma'}[\Phi]=1$.

		Conversely, assume that there exists~$\tau$ such that 
		$\pr_{M_i,s}^{\tau}[\Phi]=1$ for all~$i=1,2$. Observe that
		$\pr_{M_i,s}^{\tau}[\Wsafety(U)]=1$ since otherwise for some~$i=1,2$, we
		reach a state that is not almost surely winning. 
		Strategy~$\tau$ is therefore compatible with~$M_i'$ and~$s \in M_i'$.
		Recall that a strategy satisfies a parity condition almost surely in an MDP
		if, and only if the set of winning MECs is reached almost surely.
		Hence, we must have $\forall i=1,2, \pr_{M_i',s}^\tau[\objreach(T_i)]=1$, 
		so in particular $\forall i=1,2, \pr_{M_i',s}^\tau[\objreach(T_1\cup T_2)]=1$,
		and the algorithm answers positively.
\end{proof}

This yields the following theorem.
\begin{theorem}
\label{thm:asparity}
	The almost-sure parity problem is decidable in polynomial time.
\end{theorem}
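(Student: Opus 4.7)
The plan is to establish Theorem~\ref{thm:asparity} by showing that every step of Algorithm~\ref{alg:asparity} runs in polynomial time, and invoking the correctness lemma that immediately precedes the theorem. Correctness is already taken care of, so the task reduces to a careful complexity accounting of the five ingredients of the algorithm.

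First, I would handle the construction of the set $U$. The sets $\almostsure(M_i,\Phi)$ for a parity objective in a plain MDP are computable in polynomial time by classical results cited earlier (\textit{e.g.}~\cite{DeAlfaro-phd97,CY-acm95}). Under Assumption~\ref{assum:revealed}, the revealed states $R^{\Phi}$ are identified syntactically and their $\Phi$-almost-sure subset is again computed in polynomial time inside each $M_i$. Taking intersections and unions is linear in $|S|$, so $U$ is obtained in polynomial time.

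Next, I would build $M'$ as the sub-MEMDP induced by the states $s'$ satisfying $\Val_{\Wsafety(U)}^*(\cup M, s') = 1$. Since $\cup M$ is an ordinary MDP of size at most $|M_1|+|M_2|$ and the almost-sure safety set of an MDP is computable in polynomial time, this step is also polynomial. Then in each $M_i'$, computing the MECs and classifying each one as $\Phi$-winning uses Lemma~\ref{lemma:mdp-mec-parity}: the classification depends only on the supports, reduces to a MEC decomposition plus a parity computation on strongly connected sub-MDPs, both polynomial. This yields $T_1$ and $T_2$ in polynomial time.

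The remaining step is to decide whether there exists $\sigma$ with $\pr_{M_i',s}^\sigma[\objreach(T_1\cup T_2)] = 1$ for all $i=1,2$. This is precisely an instance of almost-sure reachability in a MEMDP with target $T_1\cup T_2$, which by Theorem~\ref{lemma:asreach} is decidable in polynomial time and returns a witnessing (memoryless) strategy. Finally, the post-processing that modifies $\sigma$ on states of $T_1\cup T_2$ by hard-coding a memoryless parity-winning strategy inside the relevant MEC adds only polynomially many memory bits and polynomially many computations (one parity-winning memoryless strategy per $\Phi$-winning MEC, obtained again via Lemma~\ref{lemma:mdp-mec-parity}).

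The main obstacle here is not algorithmic novelty but conceptual bookkeeping: one must verify that the reduction to almost-sure reachability inside $M'$ does not break any MEMDP invariant (in particular Assumptions~\ref{assum:revealed} and~\ref{assum:absorbing}, once $T_1\cup T_2$ is treated as absorbing targets), so that Theorem~\ref{lemma:asreach} is genuinely applicable. Once this is checked, chaining the polynomial-time bounds above yields the result.
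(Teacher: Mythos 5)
Your proposal is correct and follows essentially the same route as the paper: the theorem is obtained as an immediate consequence of the correctness lemma for Algorithm~\ref{alg:asparity}, combined with the observation that each step (computing $\almostsure(M_i,\Phi)$ and $R^{\Phi}$, the safety restriction in $\cup M$, the MEC decomposition and winning classification via Lemma~\ref{lemma:mdp-mec-parity}, and the almost-sure reachability check of Theorem~\ref{lemma:asreach}) runs in polynomial time. Your extra care about re-establishing Assumptions~\ref{assum:revealed} and~\ref{assum:absorbing} before invoking Theorem~\ref{lemma:asreach} is a reasonable bookkeeping point that the paper leaves implicit, and it is handled by the polynomial-time preprocessing described in Section~\ref{section:preprocessing}.
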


\subsection{The Quantitative Case: Reduction to Reachability}
Our second result is a polynomial-time reduction from the quantitative parity
problem
to the quantitative reachability problem which preserves value vectors.
It follows 1) a
polynomial-time algorithm for the limit-sure parity problem, 2) and that any algorithm for
solving the quantitative reachability problem can be used to solve
the quantitative parity problem. 
In particular, results of Section~\ref{section:quantitative} applies to parity
objectives.

The idea of the reduction is similar to previous constructions. We
modify~$\hat{M}$ by adding new transitions from each MEC~$D$ of each~$M_i$ to
fresh absorbing states with probability equal to the probability of winning
from~$D$ in~$M_i$.

\begin{definition}
	Given a MEMDP~$M$,
	we define $\bar{M} = (\bar{S}, \bar{A}, \bar{\delta}_1, \bar{\delta_2})$
	by modifying~$\hat{M}$ as follows.
	For any~$i=1,2$, non-trivial MEC~$D$ of~$\hat{M}_i$, and state~$s \in D$, we add an action
	$a_D$ from~$s$. In~$\bar{M}_i$, $a_D$ leads to a fresh absorbing state $t_D^0$
	with even parity 
	with probability $\Val^*_\calP(M_i,s)$, and to a fresh absorbing 
	state~$t_D^1$ with odd parity with remaining probability.
	In~$\bar{M}_{3-i}$, it leads to a losing absorbing state~$t_D^1$.
	Let $\bar{\Phi}$ be the reachability
	objective with targets the absorbing states
	of~$\cup \bar{M}$ with even parity.
\end{definition}
Observe that the set of absorbing states of~$\cup \bar{M}$ with even parity 
is exactly $\{W_D\mid D\text{ DEC of}~\hat{M}\} \cup \{t_D^0 \mid \exists j=1,2, D \in
\calG(\bar{M}_j)\}$. 
For any state~$s$ of~$M$, let ${\barabstr}(s)$ denote the state in~$\bar{M}$ to which
it is mapped by our construction: for any $s$ belonging to a MDEC~$D$,
$\barabstr(s)=s_D$, and ${\barabstr}(s) = s$ otherwise.
Note that~$\bar{M}$ can be constructed in polynomial time since MECs can be
computed in polynomial time.

We will prove that achieving a pair of satisfaction probabilities for a parity
objective~$\calP_p$ in~$M$ is equivalent to achieving the same probabilities for the
reachability objective~$\bar{\Phi}$ in~$\bar{M}$.

We start with two simple technical lemmas.
Let us denote by $\calG(\bar{M}_j)$ the set of MECs of~$\bar{M}_j$ that are not
DECs. The following lemma gives a classification of the MECs of~$M_i$ with
respect to~$\bar{M}_i$.
\begin{lemma}
	\label{lemma:wtmec}
	Let~$D$ be an end-component of~$M_i$ which is not a DEC.
	Then, either~$D$ contains a distinguishing DEC, or
	$D \subseteq \barabstr^{-1}(E)$ for some~$E \in \calG(\bar{M}_i)$.
\end{lemma}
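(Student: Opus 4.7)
The plan is to compare $\barabstr(D)$ to the MECs of $\hat{M}_i$, the point being that $\bar{M}$ is obtained from $\hat{M}$ by adding, at each non-trivial MEC of $\hat{M}_i$ (resp.\ $\hat{M}_{3-i}$), actions $a_D$ that lead to fresh absorbing states $t_D^0, t_D^1$. Since these added actions only leak to brand new absorbing singletons, they neither create nor destroy MECs of $\hat{M}_i$. Combined with Lemma~\ref{lemma:mec-transient}, this gives that $\calG(\bar{M}_i)$ is exactly the set of non-trivial MECs of $\hat{M}_i$: each such MEC is transient in $\hat{M}_{3-i}$, hence not a DEC of $\bar{M}$, while trivial MECs are absorbing, i.e.\ trivial DECs, and are excluded from $\calG$.

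Next I would do the case analysis. Suppose $D$ meets a distinguishing MDEC $D''$ of $M$. Since $D''$, being an MDEC, is in particular an end-component of $M_i$, and since the union of two intersecting end-components of $M_i$ is again an end-component of $M_i$, $D \cup D''$ is an end-component of $M_i$ containing $D$. Passing if necessary to the MEC $D^*$ of $M_i$ containing $D$, we obtain $D^* \supseteq D''$; this already places the distinguishing DEC $D''$ inside $D$ (in the intended MEC reading of the statement), establishing the first alternative.

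In the complementary case, $D \cap D'' = \emptyset$ for every distinguishing MDEC $D''$. The plan is to show $\barabstr(D)$ sits inside a non-trivial MEC of $\hat{M}_i$, which by the preliminary observation belongs to $\calG(\bar{M}_i)$. The same maximality argument as above shows that any non-distinguishing MDEC $D''_{nd}$ met by $D$ is a strict subset of $D$ (else $D = D''_{nd}$ would itself be a DEC, contradicting the hypothesis). I would then check that $\barabstr(D)$ is an end-component of $\hat{M}_i$. Strong connectivity in $\hat{M}_i$ is inherited from strong connectivity of $D$ in $M_i$ by projecting paths through $\barabstr$. For the closure property, at each state $\hat{s} \in \barabstr(D)$ one needs an action available at $\hat{s}$ in $\hat{M}_i$ whose $\hat{\delta}_i$-support lies in $\barabstr(D)$. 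At states $\hat{s} = s$ with $s \in D$ not in any MDEC, the action of $D$ used at $s$ survives in $\hat{M}_i$ with support in $\barabstr(D)$. At states $\hat{s} = s_{D''_{nd}}$ for a contracted non-distinguishing MDEC $D''_{nd} \subsetneq D$, strong connectivity of $D$ forces some action of $D$ at a state of $D''_{nd}$ to exit $D''_{nd}$; the frontier-action construction of Definition~\ref{def:tilde} then turns this exit action into a frontier action at $s_{D''_{nd}}$ whose $\hat{\delta}_i$-support is in $\barabstr(D)$. Consequently, $\barabstr(D)$ is contained in some MEC $E$ of $\hat{M}_i$, and $E$ is non-trivial because otherwise $\barabstr(D)$ would collapse to a single absorbing state, forcing $D$ to be inside a single MDEC and hence itself a DEC, a contradiction. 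So $E \in \calG(\bar{M}_i)$, as desired.

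The main obstacle is the verification at states $s_{D''_{nd}}$ arising from contracted non-distinguishing MDECs: one has to reconcile the frontier-action construction of Definition~\ref{def:tilde} with the actions of $D$ that exit $D''_{nd}$, checking that the internal-edge redirection leaves the image of these exit actions inside $\barabstr(D)$. The remaining parts, namely the preliminary identification of $\calG(\bar{M}_i)$ with the non-trivial MECs of $\hat{M}_i$ and the maximality arguments for $D''$ and $D''_{nd}$, are routine unfoldings of the relevant definitions and of Lemma~\ref{lemma:mec-transient}.
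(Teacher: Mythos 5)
Your proposal is correct and follows essentially the same route as the paper's (much terser) proof: when no distinguishing DEC is involved, $\barabstr(D)$ is $\hat{\delta}_i$-closed and strongly connected, hence an end-component of $\bar{M}_i$, so $D \subseteq \barabstr^{-1}(\barabstr(D))$ lies inside the containing MEC $E$, which Lemma~\ref{lemma:mec-transient} places in $\calG(\bar{M}_i)$. You additionally spell out the closure check at contracted non-distinguishing MDECs via frontier actions and the identification of $\calG(\bar{M}_i)$ with the non-trivial MECs of $\hat{M}_i$, details the paper leaves implicit.
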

\begin{proof}
	Assume that~$D$ is not a DEC and does not contain distinguishing DECs. Notice
	that~$D$ might contain non-distinguishing DECs.
	By construction $\barabstr(D)$ is
	an end-component in~$\bar{M}_i$: it is $\delta_i$-closed and strongly connected. 
	We have $D\subseteq\barabstr^{-1}(\barabstr(D))$.
\end{proof}

The following lemma is an adaptation of Lemma~\ref{lemma:mec-transient-K} to
paths in~$M_{3-j}$ that stay in the preimage of a MEC of~$\bar{M}_j$.
\begin{lemma}
	\label{lemma:wtmec-K}
	For any MEMDP $M$, and $\epsilon>0$, there exists~$K$ such that for any $j=1,2$, $D \in
	\calG(\hat{M}_j)$, and any history $h \in \calH(\bar{M})$ which contains a
	factor of length $K$ compatible with~$D$, $\pr_{M_{3-j},s}^\tau[\red^{-1}(h)]\leq \epsilon$ for any
	strategy~$\tau$.
\end{lemma}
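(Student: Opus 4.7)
The plan is to reduce this to Lemma~\ref{lemma:mec-transient-K} via the observation that non-trivial MECs in $\bar{M}_j$ are essentially the same as non-DEC MECs in $\hat{M}_j$, and then transport the probability bound back from $\hat{M}_{3-j}$ to $M_{3-j}$ using Lemma~\ref{lemma:wtsigma}.

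First, I would verify the structural fact that any $D \in \calG(\bar{M}_j)$ (i.e.\ a MEC of $\bar{M}_j$ that is not a DEC, reading the $\calG(\hat{M}_j)$ in the statement as a typo) is already a MEC of $\hat{M}_j$. This holds because $\bar{M}$ differs from $\hat{M}$ only by the extra action $a_D$ at each state of a MEC $D$, and $a_D$ deterministically leads in $\bar{M}_j$ to the fresh absorbing states $t_D^0$, $t_D^1$ outside $D$; consequently, if a MEC of $\bar{M}_j$ contained $a_D$, it would have to contain one of these absorbing states and thus fail to be strongly connected unless it is trivial. So $D$ is also a non-DEC MEC of $\hat{M}_j$, and by Lemma~\ref{lemma:mec-transient} it is transient in $\hat{M}_{3-j}$.

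Next, I would fix $K$ as given by Lemma~\ref{lemma:mec-transient-K} applied to $\hat{M}$: for any history $h' \in \calH(\hat{M})$ containing a factor of length $K$ compatible with $D$, and for any state $s'$ and strategy $\tau'$, we have $\pr_{\hat{M}_{3-j},s'}^{\tau'}[h'] \leq \epsilon$. Now consider a history $h \in \calH(\bar{M})$ with a length-$K$ factor compatible with $D$. Because any occurrence of $a_D$ exits $D$ (to the fresh absorbing states), the factor itself uses only actions available in $D$ within $\hat{M}$, and the prefix $h^\circ$ of $h$ ending at the last state of this factor is in fact a history of $\hat{M}$. Then $\pr_{M_{3-j},s}^\tau[\red^{-1}(h)] \leq \pr_{M_{3-j},s}^\tau[\red^{-1}(h^\circ)]$ (or the analogous embedding of $\bar{M}$-histories into $\hat{M}$-histories before the first use of $a_D$).

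Finally, I would apply Lemma~\ref{lemma:wtsigma}: given $\tau$ in $M$, there is a corresponding $\hat{\tau}$ in $\hat{M}$ with $\pr_{M_{3-j},s}^{\tau}[\red^{-1}(h^\circ)] = \pr_{\hat{M}_{3-j},\hat{s}}^{\hat{\tau}}[h^\circ]$, and this latter quantity is at most $\epsilon$ by Lemma~\ref{lemma:mec-transient-K} since $h^\circ$ contains the length-$K$ factor compatible with $D$. Chaining these inequalities yields the claim.

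The main obstacle is the bookkeeping between the three MEMDPs $M$, $\hat{M}$, and $\bar{M}$: ensuring that the factor of length $K$ in $h \in \calH(\bar{M})$ can be viewed as a factor in an $\hat{M}$-history (which requires the observation that $a_D$ is the only ``new'' action in $\bar{M}$, and that it exits $D$ into a fresh absorbing state), and justifying that $\red^{-1}$ behaves well under taking prefixes so that Lemma~\ref{lemma:wtsigma} may be applied. The probabilistic inequality itself is then immediate from the two earlier lemmas.
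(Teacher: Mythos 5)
Your proof is correct and follows essentially the same route as the paper's: fix $K$ from Lemma~\ref{lemma:mec-transient-K}, bound the probability of the length-$K$ factor in $\hat{M}_{3-j}$, and transfer the bound to $M_{3-j}$ via the history correspondence of Lemma~\ref{lemma:wtsigma}. The extra bookkeeping you do to reconcile $\bar{M}$ with $\hat{M}$ (noting that $a_D$ exits to fresh absorbing states, so the relevant factor lives in $\hat{M}$) is a point the paper's terse proof glosses over, but it does not change the argument.
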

\begin{proof}
	For any~$\epsilon>0$, fix~$K$ as in Lemma~\ref{lemma:mec-transient-K}. Fix any $D \in \calG(\hat{M}_j)$, and
	history~$h \in \calH(\hat{M})$ of length $K$ compatible
	with~$D$. 
	We know that $\pr_{\hat{M}_{3-j},h_1}^\tau[h]\leq \epsilon$ by
	Lemma~\ref{lemma:mec-transient-K}.
	History~$h$ does not contain states~$\calT$ since otherwise it enters an
	absorbing state.
	It follows, from Lemma~\ref{lemma:wtsigma} that
	$\pr_{M_{3-j},s}^\tau[\red^{-1}(h)]\leq \epsilon$.
\end{proof}

We can now prove the following direction.
\begin{lemma}
	\label{lemma:wtM-larger}
	Consider any MEMDP~$M$, and parity condition~$\calP$. 
	For any state~$s$, strategy~$\sigma$, and~$\epsilon>0$,
	there exists a strategy~$\bar{\sigma}$ such that
	$\pr_{\bar{M}_i,\bar{s}}^{\bar{\sigma}}[\bar{\Phi}]\geq
	\pr_{M_i,s}^\sigma[\calP] - \epsilon$.  \end{lemma}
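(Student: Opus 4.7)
The plan is to lift $\sigma$ to the simulating strategy $\hat{\sigma}$ in $\hat{M}$ provided by Lemma~\ref{lemma:wtsigma}, and then augment it in $\bar{M}$ with a single ``commit'' via one of the new actions $a_E$ once the play is plausibly trapped in a non-DEC MEC of some $\hat{M}_j$. Concretely, I would first fix $\hat{\sigma}$, then choose $K$ large enough that Lemma~\ref{lemma:mec-transient-K} yields, for every $E \in \calG(\bar{M}_j)$ and every state in $E$, a probability at most $\epsilon' := \epsilon / (2(|\calG(\bar{M}_1)|+|\calG(\bar{M}_2)|))$ of staying in $E$ for $K$ consecutive steps in $\hat{M}_{3-j}$, and finally define $\bar{\sigma}$ to play $\hat{\sigma}$ until the last $K$ visited states all lie in a single $E \in \calG(\bar{M}_1) \cup \calG(\bar{M}_2)$, at which moment it plays $a_E$; once an absorbing state is reached, $\bar{\sigma}$ is irrelevant.

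To lower-bound $\pr_{\bar{M}_i,\bar{s}}^{\bar{\sigma}}[\bar{\Phi}]$, I would decompose $\pr_{M_i,s}^\sigma[\calP] = \sum_{D} \pr_{M_i,s}^\sigma[\Inf = D]$ over the winning MECs $D$ of $M_i$ and match each summand with a contribution in $\bar{M}_i$ under $\bar{\sigma}$. For $D$ a trivial DEC the contribution is preserved verbatim through Lemma~\ref{lemma:wtsigma}, since $D$ survives unchanged in $\bar{M}$ as an absorbing even-parity target of $\bar{\Phi}$. For $D$ a non-trivial DEC, Lemma~\ref{lemma:wtsigma} gives that $\hat{\sigma}$ plays $a_D^\$$ with probability $\pr_{M_i,s}^\sigma[\Inf=D]$, and in $\bar{M}_i$ this reaches $W_D$ with probability $\Val^*_\calP(M_i,D)=1$ since $D$ is winning. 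For $D$ a non-DEC MEC, Lemma~\ref{lemma:wtmec} gives $\barabstr(D) \subseteq E$ for some $E \in \calG(\bar{M}_i)$, and any run with $\Inf = D$ eventually settles in $\barabstr(D)$, so $\bar{\sigma}$ almost surely commits via $a_E$ and reaches $t_E^0$ with probability $\Val^*_\calP(M_i,\cdot) = 1$.

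The hard part will be bounding the two losses relative to $\pr_{M_i,s}^\sigma[\calP]$. The main loss is the ``wrong-environment commit'' event in $\bar{M}_i$, where $\bar{\sigma}$ triggers $a_E$ for some $E \in \calG(\bar{M}_{3-i}) \setminus \calG(\bar{M}_i)$, sending the play deterministically to the losing state $t_E^1$. By Lemma~\ref{lemma:mec-transient} any such $E$ is transient in $\hat{M}_i$, so the expected number of visits to $E$ is uniformly bounded in a finite MDP; combining this with the per-attempt bound $\epsilon'$ from Lemma~\ref{lemma:mec-transient-K} and a union bound over the polynomially many MECs yields total wrong-environment loss $O(\epsilon)$. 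The second loss is the probability that the commit condition has not yet fired although the underlying $\sigma$-run has already settled in its winning MEC; since this condition fires almost surely on every run with $\Inf = D$ for a non-DEC winning $D$, this contribution vanishes as the horizon grows and can be absorbed into the $\epsilon$ budget. Putting the matched contributions and the bounded losses together yields $\pr_{\bar{M}_i,\bar{s}}^{\bar{\sigma}}[\bar{\Phi}] \geq \pr_{M_i,s}^\sigma[\calP] - \epsilon$, as required.
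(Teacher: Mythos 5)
Your proposal follows essentially the same route as the paper's proof: lift $\sigma$ via Lemma~\ref{lemma:wtsigma}, add a ``commit'' with $a_E$ once the play has spent $K$ consecutive steps in a non-DEC MEC $E$, use Lemma~\ref{lemma:wtmec-K} to bound the wrong-environment commits by $\epsilon$, and match the remaining probability mass according to the eventual behavior of the run (trivial DEC, contracted DEC, non-DEC MEC). The one point to tighten is the bookkeeping: $\pr_{M_i,s}^\sigma[\calP]$ is not $\sum_D \pr_{M_i,s}^\sigma[\Inf=D]$ over winning MECs $D$ (the realized limit set may be a losing sub-end-component of a winning MEC), so the matching should be carried out, as in the paper, conditionally on histories, using that the fresh absorbing states $W_D$ and $t_E^0$ are reached with the \emph{optimal} value from the current state, which dominates $\sigma$'s conditional success probability.
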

\begin{proof}
	Let~$\bar{\sigma}$ as defined in Lemma~\ref{lemma:wtsigma}.
	We define $\bar{\sigma}'$ from~$\bar{\sigma}$ as follows. 
	For any~$\epsilon$ let~$K$ be as defined in Lemma~\ref{lemma:wtmec-K}.
	For any~$j=1,2$, and $D \in \calG(\bar{M}_j)$, define $\calD_K^j(D)$
	as the set of histories in~$\calH_\calT(\bar{M})$ whose suffix of length $K$ is compatible with~$D$,
	and such that no proper suffix contains a factor of length~$K$ compatible with
	any $D' \in \calG(\bar{M}_1)\cup \calG(\bar{M}_2)$.
	Let~$\calD_K^j$ denote the union of all $\calD_K^j(D)$,
	and $\calD_K = \calD_K^1\cup \calD_K^2$.

	The following events are disjoint and occur almost surely in each~$\bar{M}_j$
	under~$\bar{\sigma}'$:
	\begin{enumerate}
		\item[$\bar{E}_1$]: $\calD_K{\bar{S}}^\omega$.
		\item[$\bar{E}_2$]: $\calH_\calT \calT {\bar{S}}^\omega \setminus \bar{E}_1$.
		\item[$\bar{E}_3$]: $\calH_\calT \cdot a_D^\$\cdot {\bar{S}}^\omega \setminus \bar{E}_1$ for some non-distinguishing DEC~$D$.
	\end{enumerate}
	This follows from the fact that states and actions seen infinitely often in
	$\bar{M}_j$ under $\bar{\sigma}'$ is almost surely an end-component. If such an
	end-component is in $\calG(\bar{M}_j)$ then~$\bar{E}_1$ holds. If $\bar{E}_1$
	does not hold, then any such end-component is a DEC, thus a trivial DEC.
	Because the DEC~$\{t_D\}$ is only reachable if~$\calD_K$ occurs by definition
	of~$\bar{\sigma}'$, any such end-component corresponds to a distinguishing or
	non-distinguishing DEC.

	Similarly, the following events are disjoint and occur almost surely in each $M_j$ under~$\sigma$:
	\begin{enumerate}
		\item[$E_1$]: $\red^{-1}(\calD_K)S^\omega$.
		\item[$E_2$]: $\red^{-1}(\calH_\calT)\barabstr^{-1}(\calT) \setminus E_1$.
		\item[$E_3$]: $\red^{-1}(\calH_\calT)D^\omega \setminus E_1$ for some non-distinguishing DEC~$D$.
	\end{enumerate}
	In fact, if the play stays in
	an end-component that belongs to $\barabstr^{-1}(E)$ for some $E \in \calG(\bar{M}_j)$ 
	then we are in~$E_1$. If~$E_1$ is false, then such an end-component either
	contains a distinguishing DEC, in which case $E_2$ holds almost surely, or it
	is a DEC, in which case either~$E_2$ or~$E_3$ holds almost surely.
	
	By Lemma~\ref{lemma:wtmec-K}, we have $\pr_{M_j,s}^\sigma[\red^{-1}(\calD_K(D))] \leq \epsilon$
	for $D \in \calG(\bar{M}_{3-j})$.
	It follows that, for any~$j=1,2$, 
	\[
	\begin{array}{ll}
	\pr_{M_j,s}^\sigma[\calP] &\geq 
		\sum_{H \in \red^{-1}(\calH_\calT\cdot \calT)\setminus E_1} \pr_{{M}_j,{s}}^{{\sigma}}
		[\calP \mid \red^{-1}(H))] \pr_{M_j,s}^\sigma[\red^{-1}(H)]

		\\&~+\sum_{H \in \red^{-1}(\calH_\calT) \setminus E_1, D~\text{non-dist.}}
		\pr_{M_j,s}^\sigma[\calP \mid H\cdot D^\omega] \pr_{M_j,s}^\sigma[H\cdot D^\omega]
		\\&~+\sum_{H \in \red^{-1}(\calD_K^j)S^\omega }
		\pr_{M_j,s}^\sigma[\calP \mid H]
				\pr_{M_j,s}^\sigma[H].
				\\&\geq \pr_{M_j,s}^\sigma[\calP]- \epsilon.
	\end{array}
	\]
	We similarly write
	\[
	\begin{array}{ll}
	\pr_{\bar{M}_j,\bar{s}}^{\bar{\sigma}'}[\bar{\phi}_j] &\geq 
		\sum_{H \in \calH_\calT\cdot \calT \setminus \bar{E}_1} \pr_{\bar{M}_j,\bar{s}}^{\bar{\sigma}'}
				[\bar{\phi}_j \mid H]
		\pr_{\bar{M}_j,\bar{s}}^{\bar{\sigma}'}[H]
		\\&~+\sum_{H \in \calH_\calT\setminus \bar{E}_1, D~\text{non-dist.}}
		\pr_{\bar{M}_j,\bar{s}}^{\bar{\sigma}'}[\bar{\phi}_j \mid H \cdot a_D^\$ (s_D^\$)^\omega]
		\pr_{\bar{M}_j,\bar{s}}^{\bar{\sigma}'}[H \cdot a_D^\$ (s_D^\$)^\omega]
		\\&~+\sum_{H \in \calD_K^j}
		\pr_{\bar{M}_j,\bar{s}}^{\bar{\sigma}'}[\bar{\phi}_j \mid H ]
				\pr_{M_j,s}^\sigma[H].
	\end{array}
	\]

	We will now compare the probability of winning in $M$ and in~$\bar{M}$
	conditioned on events $E_i$ and $\bar{E}_i$.
	First, note that by~\eqref{eqn:wtinduct}, and the definition of~$\bar{\sigma}$,
	we have that for any $H \in \calH_\calT \cdot \calT \cup \calH_\calT a_D^\$
	(s_D^\$)^\omega \cup \calD_K^j$,  $\pr_{M_j,s}^\sigma[\red^{-1}(H)] =
	\pr_{\bar{M}_j,\bar{s}}^{\bar{\sigma}'}[H]$.
	Let us show that conditioned on each of
	these events $\bar{M}_j$ achieves a higher or equal probability under~$\bar{\sigma}$.
	For histories $\calH_\calT \cdot \calT$ this is clear since~$\bar{M}_j$ then
	reaches~$W_D$ with the optimal probability of winnig in~$M_j$.
	For histories $\calH_\calT a_D^\$(s_D)^\omega$, the probability achieved
	in~$\bar{M}_j$ is exactly the probability of winning in~$M_j$ while staying
	inside~$D$, so at least $\pr_{M_j,s}^\sigma[\calP\mid H\cdot D^\omega]$.
  Last, from histories $\calD_K^j(D)$ with $D \in \calG(\bar{M}_j)$, we reach
	in~$\bar{M}_j$ with optimal probability of winning from a corresponding state
	in~$M_j$, so at least $\pr_{M_j,s}^\sigma[\calP\mid h ]$ for any $h \in
	\red^{-1}(H)$.
	It follows that $\pr_{\bar{M}_j,\bar{s}}^{\bar{\sigma}}[\bar{\phi}_j] \geq
	\pr_{M_j,s}^\sigma[\calP] - \epsilon$.
\end{proof}

To prove the converse, we need some additional lemmas.

\begin{lemma}
	\label{lemma:strat-nddec}
	Let~$D$ be a non-distinguishing DEC, and~$\bar{\sigma}$ any strategy
	in~$\bar{M}$. There exists a strategy~$\tau$ such that
	for any history $hs_D (f_1 a_1) s_D \ldots (f_m a_m) s'$, 
	where $(f_ia_i)$ is a pair of frontier state-action, and~$s'$ is a state
	outside of~$D$,
	\[
		\begin{array}{l}
		\pr_{M_j,s_0}^\tau[\red^{-1}(s_D(f_1 a_1) s_D\ldots (f_m a_m) s') \mid h]
		=\pr_{\bar{M}_j,\bar{s_0}}^{\bar{\sigma}}[s_D(f_1 a_1) s_D\ldots (f_m a_m) s'
		\mid h],
		\\
		\pr_{M_j,s_0}^\tau[\red^{-1}(s_D(f_1 a_1) s_D\ldots (f_m a_m) )D^\omega \mid h]
		=\pr_{\bar{M}_j,\bar{s_0}}^{\bar{\sigma}}[s_D(f_1 a_1) s_D\ldots (f_m a_m)
			a_D^\$
		\mid h].
	\end{array}
	\]
\end{lemma}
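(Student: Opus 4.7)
The plan is to construct $\tau$ by extending the construction of Lemma~\ref{lemma:wtsigma} with a careful treatment of non-distinguishing DECs. Outside such components, $\tau$ will mimic $\bar\sigma$ via the reduction~$\red$ exactly as in Lemma~\ref{lemma:wtsigma}. When the play in~$M$ first enters a non-distinguishing DEC~$D$ at some state~$s$, $\tau$ will consult the distribution prescribed by $\bar\sigma$ at the corresponding history ending in~$s_D$: call its weights $p^{\$}$ on $a_D^{\$}$ and $p_{(f_i,a_i)}$ on each frontier state-action. The strategy~$\tau$ will then sample this distribution once to pick a ``plan'' and execute it using an internal strategy of~$D$.

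The central observation I will exploit is that since $D$ is a non-distinguishing DEC, the transition distributions $\delta_1$ and $\delta_2$ agree on all state-actions of~$D$, and $D$ is strongly connected in both environments with identical supports. Hence for any frontier state-action $(f_i,a_i)$ there exists a strategy $\tau_{f_i,a_i}$ compatible with~$D$ that almost surely reaches~$f_i$ and then plays~$a_i$, and there exists an internal strategy $\tau_\$$ that stays in~$D$ forever almost surely (any memoryless uniform strategy over internal actions works). Crucially, these almost-sure events occur with probability exactly~$1$ in \emph{both} $M_1$ and~$M_2$. Given this, $\tau$ is defined to play $\tau_\$$ with probability $p^{\$}$, and to play $\tau_{f_i,a_i}$ followed by $a_i$ with probability $p_{(f_i,a_i)}$.

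I would then verify the two required equalities by unfolding the events on both sides. For the first equality, the probability under~$\tau$ of producing a run whose reduction is $s_D(f_1a_1)s_D\ldots(f_ma_m)s'$, starting from an extension of~$h$, factors as a product of successive plan probabilities (matching $\bar\sigma$'s conditional distributions) times the actual transition probabilities $\delta_j(f_i,a_i,\cdot)$ of the intermediate outcomes; the same factorization holds in $\bar{M}$ directly from the construction of $\hat\delta_j$ in Definition~\ref{def:tilde}, where at $s_D$ the frontier state-actions redirect internal outcomes back to~$s_D$. The second equality is obtained identically, with the event ``$\tau$ picks $\tau_\$$ and stays in~$D$ forever'' realised in $M$ with probability $p^{\$}$, mirroring the single step $a_D^\$$ in~$\bar{M}$.

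The main obstacle will be the bookkeeping across multiple passes through (possibly the same) non-distinguishing DEC along the target history, since after exiting via some $(f_i,a_i)$ the run may immediately re-enter a DEC. I plan to handle this by induction on~$m$, the number of frontier state-actions in the target history: the base case $m=0$ reduces to Lemma~\ref{lemma:wtsigma}, and the inductive step uses the plan-sampling construction above to match the distribution up to the first frontier transition, after which the inductive hypothesis applies to the remaining suffix. Combined with Lemma~\ref{lemma:wtsigma} for the portions of the history outside DECs, this yields the claimed equalities.
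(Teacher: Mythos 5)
Your construction is essentially the paper's own proof: at each (re-)entry into the non-distinguishing DEC, sample $\bar\sigma$'s conditional distribution over $\{a_D^\$\}\cup F$ as a ``plan,'' then execute it with an internal strategy that almost surely reaches the chosen frontier state (or stays in~$D$ forever), relying on the fact that a non-distinguishing DEC has identical transition distributions in both environments. The paper states this more tersely and omits the explicit induction on~$m$, but the argument is the same.
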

\begin{proof}
	Consider any history $\red^{-1}(s_D(f_1a_1)\ldots (f_{i-1}a_{i-1})s_D)$.
	let $p_{f,a}$ denote the probability that the pair of frontier state-action
	$(fa)$ is taken for the first time under~$\bar{\sigma}$ from history
	$s_D (f_1a_1) \ldots (f_{i-1}a_{i-1})s_D$.
	We define~$\tau$, by first choosing each pair $(fa)$ with probability
	$p_{f,a}$, and then running a memoryless strategy that reaches state $f$ almost surely, and
	once $f$ is reached chooses~$a$.
	With probability $1- \sum_{f,a} p_{f,a}$, we run any strategy compatible
	with~$D$. This is clearly the probability of~$\bar{\sigma}$ of taking action $a_D^\$$.
\end{proof}
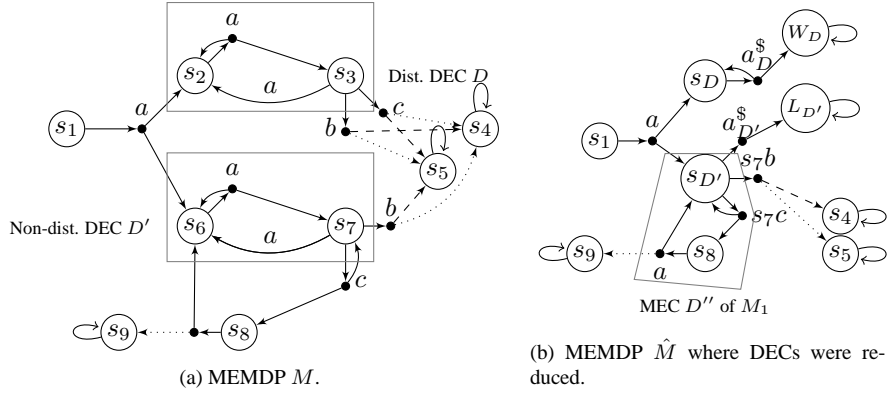
\begin{figure}[h]
	\centering
	\begin{subfigure}[b]{0.6\textwidth}
		\centering
		\begin{tikzpicture}
			\tikzstyle{every state}=[node distance=1cm,minimum size=10pt, inner sep=1pt];
			\node[state] at (0,0) (A){$s_1$};
			\node[state,right of=A,fill=black,minimum size=3pt] (Aa) {};
			\node[state,above right of=Aa] (B){$s_2$};
			\node[state,node distance=0.7cm, above right of=B, fill=black,minimum size=3pt] (Bb) {};
			\node[state,node distance=2cm, right of=B] (C) {$s_3$};
			\node[state,fill=black,minimum size=3pt] at ($(C.south)+(0,-0.5)$) (Cb) {};
			\node[state,fill=black,minimum size=3pt] at ($(C.south)+(+0.5,-0.25)$)(Cc) {};
			\node[state,node distance=4.5cm, right of=Aa] (Top) {$s_4$};
			\node[state,node distance=0.8cm, below left of=Top] (Bot) {$s_5$};

			\node[state,node distance=2cm, below of=B] (D){$s_6$};
			\node[state,node distance=0.7cm, above right of=D, fill=black,minimum size=3pt] (Dd) {};
			\node[state,node distance=2cm, right of=D] (E) {$s_7$};
			\node[state,node distance=0.6cm, right of=E, fill=black, minimum size=3pt] (Eb) {};
			\node[state,node distance=0.8cm, below of=E, fill=black, minimum size=3pt] (Ec) {};
			\node[state,node distance=2cm, below left of=E] (F) {$s_8$};
			\node[state,node distance=0.6cm, left of=F, fill=black, minimum size=3pt] (Ff) {};
			\node[state,node distance=1cm, left of=Ff] (G) {$s_9$};

			\node at ($(Aa.north)+(0,0.2)$) {$a$};
			\node at ($(Bb.north)+(0,0.2)$) {$a$};
			\node at ($(Dd.north)+(0,0.2)$) {$a$};
			\node at ($(Cb.north)+(-0.2,0)$) {$b$};
			\node at ($(Cc.north)+(0.2,0)$) {$c$};
			\node at ($(Eb.north)+(0,0.2)$) {$b$};
			\node at ($(Ec.north)+(0.2,0)$) {$c$};
			\path[-latex']
			(A) edge (Aa)
			(Aa) edge (B)
			(B) edge (Bb)
			(Bb) edge[bend right] (B)
			(Bb) edge (C)
			(C) edge[bend left] node[above]{$a$} (B)
			(C) edge (Cb)
			(C) edge (Cc)
			(Cc) edge[dotted] (Top)
			(Cc) edge[dashed] (Bot)
			(Cb) edge[dashed] (Top)
			(Cb) edge[dotted] (Bot)
			(Aa) edge (D)
			(D) edge (Dd)
			(Dd) edge[bend right] (D)
			(Dd) edge (E)
			(E) edge[bend left] node[above]{$a$} (D)
			(E) edge (Eb)
			(E) edge (Ec)
			(E) edge[bend left] (D)
			(Eb) edge[dashed] (Bot)
			(Eb) edge[dotted,bend right] (Top)
			(Ec) edge (F)
			(Ec) edge[bend right] (E)
			(F) edge (Ff)
			(Top) edge[loop above] (Top)
			(Bot) edge[loop above] (Bot)
			(Ff) edge (D) 
			(Ff) edge[dotted] (G)
			(G) edge[loop left] (G);
			\draw[-,gray,node distance=1cm] ($(B.north west)+(-0.2,0.8)$) rectangle
				($(C.south east)+(0.2,-0.3)$);
			\node[anchor=west] at ($(C.east)+(0.2,0)$) {\scriptsize Dist. DEC~$D$};
			\draw[-,gray,node distance=1cm] ($(D.north west)+(-0.2,0.8)$) rectangle
				($(E.south east)+(0.2,-0.3)$);
			\node[anchor=east] at ($(D.east)+(-0.7,0)$) {\scriptsize Non-dist. DEC~$D'$};
		\end{tikzpicture}
		\caption{MEMDP~$M$.}
		\label{fig:parity1}
	\end{subfigure}
	\begin{subfigure}[b]{0.39\textwidth}
		\centering
		\begin{tikzpicture}
			\tikzstyle{every state}=[node distance=1cm,minimum size=10pt, inner sep=1pt];
			\node[state] at (0,0) (A){$s_1$};
			\node[state,right of=A,node distance=0.7cm,fill=black,minimum size=3pt] (Aa) {};
			\node[state] at ($(Aa)+(0.7,0.8)$) (B){$s_D$};
			\node[state,node distance=0.7cm, right of=B, fill=black,minimum size=3pt] (Bb) {};
			\node[state,node distance=0.9cm, above right of=Bb] (C) {\scriptsize $W_D$};
			\node[state,node distance=1cm, below of=C] (LD) {\scriptsize $L_{D'}$};

			\node[state,node distance=1.3cm, below of=B] (D){$s_{D'}$};
			\node[state,node distance=0.7cm, right of=D, fill=black,minimum size=3pt] (Dd) {};
			\node[state,node distance=0.7cm, above right of=D, fill=black,minimum
			size=3pt] (Dld) {};
			\node[state,node distance=0.7cm, below right of=D, fill=black,minimum size=3pt] (De) {};
			\node[state,node distance=1cm, below of=D] (F) {$s_8$};
			\node[state,node distance=0.6cm, left of=F, fill=black, minimum size=3pt] (Ff) {};
			\node[state,node distance=1cm, left of=Ff] (G) {$s_9$};
			\node[state,node distance=1.8cm, right of=F] (Bot) {$s_5$};
			\node[state,node distance=0.5cm, above of=Bot] (Top) {$s_4$};

			\node at ($(Aa.north)+(0,0.2)$) {$a$};
			\node at ($(Bb.north)+(0,0.3)$) {$a_D^\$$};
			\node at ($(Dd.north)+(0,0.2)$) {$s_7b$};
			\node at ($(Dld.north)+(0,0.2)$) {$a_{D'}^\$$};
			\node at ($(De.north)+(0.35,-0.1)$) {$s_7c$};
			\node at ($(Ff.south)+(0,-0.2)$) {$a$};
			\path[-latex']
			(A) edge (Aa)
			(Aa) edge (B)
			(B) edge (Bb)
			(Bb) edge[bend right] (B)
			(Bb) edge  (C)
			(Aa) edge (D)
			(D) edge (Dld)
			(Dld) edge (LD)
			(D) edge (Dd)
			(D) edge (De)
			(Dd) edge[dashed] (Top)
			(Dd) edge[dotted] (Bot)
			(De) edge[bend left] (D)
			(De) edge (F)
			(C) edge[loop right]  (C)
			(LD) edge[loop right] (LD)
			(F) edge (Ff)
			(Ff) edge (D) 
			(Ff) edge[dotted] (G)
			(Top) edge[loop right] (Top)
			(Bot) edge[loop right] (Bot)
			(G) edge[loop left] (G);
			\draw[-,gray] ($(D.north west)+(-0.3,0.1)$) -- ($(D.north east)+(+0.2,0.1)$)
			-- ($(De.east)+(0.1,0)$)
			-- ($(F.south east)+(0.3,-0.3)$)
			-- ($(Ff.south west)+(-0.3,-0.3)$) -- cycle;
			\node[anchor=north] at ($(F.south)+(0,-0.2)$) {\scriptsize MEC~$D''$
				of~$M_1$};
		\end{tikzpicture}
		\caption{MEMDP~$\hat{M}$ where DECs were reduced.
		}
		\label{fig:parity2}
	\end{subfigure}
	\begin{subfigure}[b]{0.8\textwidth}
		\centering
		\begin{tikzpicture}
			\tikzstyle{every state}=[node distance=1cm,minimum size=10pt, inner sep=1pt];
			\node[state] at (0,0) (A){$s_1$};
			\node[state,right of=A,node distance=0.7cm,fill=black,minimum size=3pt] (Aa) {};
			\node[state,below of=A,node distance=0.7cm] (TD) {\scriptsize $t_{D''}^0$};
			\node[state] at ($(Aa)+(0.7,0.8)$) (B){$s_D$};
			\node[state,node distance=0.7cm, right of=B, fill=black,minimum size=3pt] (Bb) {};
			\node[state,node distance=0.9cm, above right of=Bb] (C) {\scriptsize $W_D$};
			\node[state,node distance=1cm, below of=C] (LD) {\scriptsize $L_{D'}$};

			\node[state,node distance=1.3cm, below of=B] (D){$s_{D'}$};
			\node[state,node distance=0.7cm, right of=D, fill=black,minimum size=3pt] (Dd) {};
			\node[state,node distance=0.7cm, above right of=D, fill=black,minimum
			size=3pt] (Dld) {};
			\node[state,node distance=0.7cm, below right of=D, fill=black,minimum size=3pt] (De) {};
			\node[state,node distance=1cm, below of=D] (F) {$s_8$};
			\node[state,node distance=0.6cm, left of=F, fill=black, minimum size=3pt] (Ff) {};
			\node[state,node distance=1cm, left of=Ff] (G) {$s_9$};
			\node[state,node distance=1.8cm, right of=F] (Bot) {$s_5$};
			\node[state,node distance=0.5cm, above of=Bot] (Top) {$s_4$};

			\node at ($(Aa.north)+(0,0.2)$) {$a$};
			\node at ($(Bb.north)+(0,0.3)$) {$a_D^\$$};
			\node at ($(Dd.north)+(0,0.2)$) {$s_7b$};
			\node at ($(Dld.north)+(0,0.2)$) {$a_{D'}^\$$};
			\node at ($(De.north)+(0.35,-0.1)$) {$s_7c$};
			\node at ($(Ff.south)+(0,-0.2)$) {$a$};

			\path[-latex']
			(A) edge (Aa)
			(Aa) edge (B)
			(B) edge (Bb)
			(Bb) edge[bend right] (B)
			(Bb) edge  (C)
			(Aa) edge (D)
			(D) edge (Dld)
			(Dld) edge (LD)
			(D) edge (Dd)
			(D) edge (De)
			(Dd) edge[dashed] (Top)
			(Dd) edge[dotted] (Bot)
			(De) edge[bend left] (D)
			(De) edge (F)
			(C) edge[loop right]  (C)
			(LD) edge[loop right] (LD)
			(F) edge (Ff)
			(Ff) edge (D) 
			(Ff) edge[dotted] (G)
			(Top) edge[loop right] (Top)
			(Bot) edge[loop right] (Bot)
			(G) edge[loop left] (G)
			(D) edge[dashed] node[below,pos=0.5]{\scriptsize $a_{D''}$}(TD)
			(F) edge[dashed] (TD)
			(TD) edge[loop left] (TD);
		\end{tikzpicture}
		\caption{The MEMDP~$\bar{M}$ obtained by adding the state~$t_{D''}^0$
			corresponding to the MEC~$D''$ of~$M_1$. The reachability
				objective is~$\bar{\Phi}=\objreach(\{s_4,s_9,t_{D''}^0,W_D\})$.}
		\label{fig:parity3}
	\end{subfigure}
	\caption{We are given an MEMDP in Fig.~\ref{fig:parity1} where we assume
		that~$D$ is a distinguishing DEC~$D$, and $D'$ is a non-distinguishing DEC
		(precise values of the probabilities do not matter). Distributions whose
		support differ in~$M_1$ and~$M_2$ are again shown in dashed or dotted lines.
		The parity function assigns~$0$ to~$s_4,s_9$ and~$w_D$, and~$1$ everywhere else.}
\end{figure}

\begin{lemma}
	\label{lemma:M-larger}
	Consider any MEMDP~$M$, and parity condition~$\calP$.
	For any state~$s$, strategy~$\bar{\sigma}$ for~$\bar{M}$,
	and~$\epsilon>0$, one can compute ${\sigma}$ such that
	$\pr_{{M}_i,s}^{\sigma}[\calP]\geq
	\pr_{\bar{M}_i,\bar{s}}^{\bar{\sigma}}[\bar{\Phi}]-\epsilon$
	for all~$i=1,2$.
\end{lemma}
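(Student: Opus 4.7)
We build $\sigma$ for $M$ from $\bar\sigma$ piece by piece, following the layered construction $M \to \hat M \to \bar M$. First, fix $K$ large enough so that Lemma~\ref{lemma:wtmec-K} yields a tolerance of $\epsilon/4$ on any $K$-step sojourn inside a member of $\calG(\bar M_1) \cup \calG(\bar M_2)$, and pick the sampling sizes in Lemmas~\ref{lemma:strategy-for-mec-as} and~\ref{lemma:strat-nddec} so that each introduces error at most $\epsilon/4$. On every history $h$ of $M$ whose projection $\red(h) \in \calH_\calT(\bar M)$ does not yet involve the new actions $a_D^\$$ or $a_D$ created by the constructions of $\hat M$ and $\bar M$, $\sigma$ mimics $\bar\sigma$ via the correspondence of Lemma~\ref{lemma:wtsigma}; in particular the probability of $\red^{-1}(H)$ in $M_j$ equals the probability of $H$ in $\bar M_j$ for both $j=1,2$ and any such prefix $H$.

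The DEC cases are handled by the already-established tools. When $\bar\sigma$ enters a contracted non-distinguishing DEC $s_D$ and then either picks $a_D^\$$ or a frontier action, Lemma~\ref{lemma:strat-nddec} furnishes a strategy in $M$ reproducing, in both environments simultaneously, the marginal probabilities of the frontier actions and of the ``stay forever in $D$'' event. When $\bar\sigma$ enters a contracted distinguishing DEC, we invoke Lemma~\ref{lemma:strategy-for-mec-as}, which after sampling provides a strategy achieving in each $M_i$ a value at least $(1-\epsilon/4)\,\Val_\calP^*(M_i,D)$, matching up to $\epsilon/4$ the ``win with probability $v_i$'' that $\bar M_i$ offers through $a_D^\$$, $W_D$ and $L_D$.

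The genuinely new case is the action $a_D$ added in $\bar M$ for a non-trivial MEC $D \in \calG(\hat M_i)$. When $\bar\sigma$ plays $a_D$ at a history $H$ ending at state $s \in D$, $\sigma$ behaves as follows in $M$: it plays the uniform strategy over the $D$-actions inside $\barabstr^{-1}(D)$ for $K$ additional steps, and only then switches to a pure memoryless optimal strategy $\pi_i^*$ of $M_i$. In $M_i$, the MEC $D$ of $\hat M_i$ is $\delta_i$-closed, so $\barabstr^{-1}(D)$ is $\delta_i$-closed under the chosen actions; thus the ending state lies in $\barabstr^{-1}(D)$, and by Lemma~\ref{lemma:mdp-mec-parity} combined with the constancy of optimal values inside each MDEC contained in $D$, $\pi_i^*$ attains from that state probability at least $v = \Val_\calP^*(M_i,s)$ of satisfying $\calP$, matching the $v$-contribution of $a_D$ in $\bar M_i$. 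In $M_{3-i}$, Lemma~\ref{lemma:wtmec-K} bounds the probability of the $K$-step $D$-sojourn that precedes the switch by $\epsilon/4$, so this branch contributes at most $\epsilon/4$ to $\pr_{M_{3-i}}^\sigma[\calP]$, which is no worse than the zero contribution of $a_D$ in $\bar M_{3-i}$.

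The proof is concluded by decomposing runs of $\bar M$ into the disjoint events $\bar E_1, \bar E_2, \bar E_3$ of Lemma~\ref{lemma:wtM-larger}, augmented by a fourth event ``$\bar\sigma$ eventually plays some $a_D$'', and comparing the conditional contributions to $\pr_{\bar M_i}^{\bar\sigma}[\bar\Phi]$ with those of $\sigma$ in $M_i$; the additive errors of at most $\epsilon/4$ from each of the four classes sum to the claimed bound $\epsilon$. The principal obstacle is the $a_D$ case: specifically, showing that after the $K$-step uniform sojourn, switching to $\pi_i^*$ still secures $\Val_\calP^*(M_i,s)$ in $M_i$. This reduces to the fact that the optimal $M_i$-value is invariant along trajectories that remain in $\barabstr^{-1}(D)$ under the uniform strategy, which holds because $D$ is a MEC of $\hat M_i$ and, by Lemma~\ref{lemma:mdp-mec-parity}, optimal values are constant across states of any MEC.
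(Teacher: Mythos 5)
Your proposal is correct and follows essentially the same route as the paper: mimic $\bar{\sigma}$ through the $\red$-correspondence of Lemma~\ref{lemma:wtsigma}, dispatch distinguishing and non-distinguishing DECs to Lemmas~\ref{lemma:strategy-for-mec-as} and~\ref{lemma:strat-nddec} respectively, realize $a_D$ by switching to an optimal memoryless strategy for $M_i$, and conclude via the event decomposition inherited from Lemma~\ref{lemma:wtM-larger}. The only divergence is your extra $K$-step uniform sojourn before the switch in the $a_D$ case; the paper instead switches immediately with probability $\bar{\sigma}(a_D \mid \cdot)$, which is simpler, but your variant is harmless since the optimal parity value is constant on the ($\delta_i$-closed) preimage of the MEC in $M_i$ and the $a_D$ branch contributes nothing in $\bar{M}_{3-i}$ anyway.
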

\begin{proof}
	We define $\sigma$ as follows. For any history $h_1\ldots h_i \in
	\calH_\calT(\bar{M})$, such that
	$h_i\neq s_D$, and $a \in A(h_i)$, we let
	$\sigma(a \mid g) = \bar{\sigma}(a \mid h_1\ldots h_i)$,
	for any $g \in \red^{-1}(h_1\ldots h_i)$.
	For any $g \in \red^{-1}(h_1\ldots h_i s_D)$ with $s_D \in \mathcal{T}$,
	we run the strategy given by Lemma~\ref{lemma:strategy-for-mec-as} which
	achieves the objective with probability $1-\epsilon$.
	For any $g \in \red^{-1}(h_1\ldots h_i s_D)$ where $D$ is a non-distinguishing
	we switch to the strategy~$\tau$ of Lemma~\ref{lemma:strat-nddec} until~$D$ is
	left.
	Furthermore, at any history $g \in \red^{-1}(h_1\ldots h_i)$ with $h_i$
	belonging to some $D \in \calG(\bar{M}_j)$, we switch to the optimal
	strategy for~$M_j$ from~$g$ with probability $\bar{\sigma}(a_D \mid h_1\ldots h_i)$.

	We can then easily prove the following correspondance between histories
	of~$\bar{M}$ and that of~$M$. For any history $h_1\ldots h_i \in
	\calH_{\calT}(\hat{M})$, and action $a \in \hat{A}$,
	\begin{equation}
		\label{eqn:sigma-fromwt}
		\begin{array}{l}
				\pr_{\bar{M}_j,\bar{s}}^{\bar{\sigma}}(h_1\ldots h_i)=
				\pr_{M_j,s}^\sigma[\red^{-1}(h_1\ldots h_i) ],\\
				\pr_{\bar{M}_j,\bar{s}}^{\bar{\sigma}}(h_1\ldots h_ia)=
				\pr_{M_j,s}^\sigma[\red^{-1}(h_1\ldots h_ia) ],\\
				\pr_{\bar{M}_j,\bar{s}}^{\bar{\sigma}}[h_1\ldots h_i a_D^\$]=
				\pr_{M_j,s}^\sigma[\red^{-1}(h_1\ldots h_iD^\omega)].
		\end{array}
	\end{equation}
	Note that restricting histories to~$\hat{M}$ and~$\hat{A}$ only means that we exclude
	action~$a_D$ for MECs $D \in \calG(\bar{M}_j)$.

	The rest of the proof is done as in Lemma~\ref{lemma:wtM-larger}:
	We rewrite the probability of ensuring $\bar{\Phi}$ in~$\bar{M}$.
	\[
	\begin{array}{ll}
	\pr_{\bar{M}_j,\bar{s}}^{\bar{\sigma}'}[\bar{\phi}_j] &\geq 
		\sum_{H \in \calH_\calT\cdot \calT \setminus \bar{E}_1} \pr_{\bar{M}_j,\bar{s}}^{\bar{\sigma}'}
				[\bar{\phi}_j \mid H]
		\pr_{\bar{M}_j,\bar{s}}^{\bar{\sigma}'}[H]
		\\&~+\sum_{H \in \calH_\calT\setminus \bar{E}_1, D~\text{non-dist.}}
		\pr_{\bar{M}_j,\bar{s}}^{\bar{\sigma}'}[\bar{\phi}_j \mid H \cdot a_D^\$ (s_D^\$)^\omega]
		\pr_{\bar{M}_j,\bar{s}}^{\bar{\sigma}'}[H \cdot a_D^\$ (s_D^\$)^\omega]
		\\&~+\sum_{H \in \calD_K^j}
		\pr_{\bar{M}_j,\bar{s}}^{\bar{\sigma}'}[\bar{\phi}_j \mid H ]
				\pr_{M_j,s}^\sigma[H].
	\end{array}
	\]
	and show that conditioned on each above event, the probability of winning
	in~$M$ is at least as high as the expectation in~$\bar{M}$, up to~$\epsilon$.
	On histories $H \in \calH_\calT\cdot \calT$ this follows by
	Lemma~\ref{lemma:strategy-for-mec-as}.
	On histories that end with $D^\omega$ for
	non-distinguishing components~$D$, the probability of winning in~$\bar{M}_j$
	is the optimal probability of winning in~$M_j$ from~$D$, which is achieved
	by~$\sigma$.
	Last, the probability of winning conditioned on 
	$\calD_K^j$ is equal to the optimal probability of winning
	in~$M_j$ from the current state by construction, and this is the probability
	achieved from such histories in~$M$ by definition of~$\sigma$.
\end{proof}

We summarize the result we proved in the following theorem.

\begin{theorem}
	\label{thm:quantparity}
	The quantitative parity problem is polynomial-time reducible to the
	quantitative reachability problem. The limit-sure parity problem is 
	in polynomial time.
\end{theorem}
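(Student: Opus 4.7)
The plan is to assemble the theorem directly from the transformation $\bar{M}$ together with Lemmas~\ref{lemma:wtM-larger} and~\ref{lemma:M-larger}, which are precisely the two simulation directions needed for a reduction.

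First, I would verify that the construction of $\bar{M}$ is polynomial-time. Note that $\hat{M}$ is polynomial-time computable (MDECs can be computed in polynomial time, and distinguishing MDECs are identified by comparing supports of actions). Going from $\hat{M}$ to $\bar{M}$ requires, for each $i=1,2$ and each non-trivial MEC $D$ of $\hat{M}_i$, the optimal winning probability $\Val^*_{\calP}(M_i,s)$ for $s \in D$; by Lemma~\ref{lemma:mdp-mec-parity} inside an end-component this value is $0$ or $1$ and depends only on the supports, and classifying winning MECs for parity MDPs is polynomial-time. Adding the actions $a_D$ and the two fresh absorbing states $t_D^0, t_D^1$ per MEC creates at most polynomially many new states and transitions. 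The new reachability objective $\bar{\Phi}$ is obtained by marking the absorbing even-parity states, also in polynomial time.

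Second, for the quantitative reduction, given an instance $(M,s_0,\alpha_1,\alpha_2,\calP)$ of quantitative parity, I would output the instance $(\bar{M},\bar{s_0},\alpha_1,\alpha_2,\bar{\Phi})$ of quantitative reachability. Lemma~\ref{lemma:wtM-larger} gives that from any $\sigma$ in $M$ one obtains $\bar{\sigma}$ in $\bar{M}$ with $\pr_{\bar{M}_i,\bar{s}}^{\bar{\sigma}}[\bar{\Phi}] \geq \pr_{M_i,s}^\sigma[\calP]-\epsilon$ simultaneously for $i=1,2$, and Lemma~\ref{lemma:M-larger} gives the converse simulation. Taking $\epsilon \to 0$ shows that the closures of the sets of achievable probability vectors in the two models coincide, so the quantitative decision problem in $M$ for $\calP$ and the quantitative decision problem in $\bar{M}$ for $\bar{\Phi}$ agree up to arbitrary slack; in particular, any $\epsilon$-gap procedure for reachability (e.g.\ Theorem~\ref{lemma:gapalgorithm}) immediately yields an $\epsilon$-gap procedure for parity, establishing the claimed reducibility in the sense of Section~\ref{section:quantitative}.

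Third, for the limit-sure claim, by definition limit-sure achievability of $\calP$ in $M$ from $s$ means $\forall \epsilon>0$, $\exists \sigma$ with $\pr^\sigma_{M_i,s}[\calP]\geq 1-\epsilon$ for $i=1,2$. Using Lemma~\ref{lemma:wtM-larger} on each such $\sigma$ (with error $\epsilon/2$) gives limit-sure achievability of $\bar{\Phi}$ in $\bar{M}$ from $\bar{s}$, and Lemma~\ref{lemma:M-larger} gives the converse. Thus the limit-sure parity problem on $M$ reduces to the limit-sure reachability problem on $\bar{M}$, which is in polynomial time by Theorem~\ref{thm:lsreach}; combined with the polynomial-time construction of $\bar{M}$, this yields a polynomial-time algorithm.

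The main obstacle is not conceptual but bookkeeping: ensuring that the two simulation lemmas are applied so that \emph{the same} strategy witnesses both coordinates simultaneously (which is precisely how they are stated), and checking that the $\epsilon$ slack is inherited consistently by the $\epsilon$-gap reachability procedure when used to solve the parity instance. Everything else follows from the polynomial-time computability of MECs and of parity values in MDPs.
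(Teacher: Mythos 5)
Your proposal is correct and follows essentially the same route as the paper: the theorem is obtained by combining the polynomial-time construction of $\bar{M}$ with the two simulation directions of Lemmas~\ref{lemma:wtM-larger} and~\ref{lemma:M-larger}, and the limit-sure case follows by composing with Theorem~\ref{thm:lsreach}. Your explicit remark that the lemmas only preserve value vectors up to an arbitrarily small $\epsilon$ (so the reduction is best phrased via the $\epsilon$-gap formulation) is a point the paper glosses over, but it does not change the argument.
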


\bibliographystyle{myabbrv}
\bibliography{biblio}
\end{document}